\newtheorem{lemma}{Lemma}
\newcommand{\R}{\mathbb{R}}
\newcommand{\Z}{\mathbb{Z}}
\DeclareMathOperator{\Arg}{Arg}
\DeclareMathOperator{\Mem}{Mem}
\DeclareMathOperator{\Act}{A}
\newcommand{\Abs}[1]{\left|#1\right|}
\newcommand{\taufwd}{\tau_{\textnormal{fwd}}}
\newcommand{\taubkwd}{\tau_{\textnormal{bkwd}}}
\newcommand{\taurecomp}{\tau_{\textnormal{recomp}}}
\newif\ifshowall
\newif\ifarxiv
\title{PipeMare: Asynchronous Pipeline Parallel DNN Training}
\author[1]{Bowen Yang}
\author[1,2]{Jian Zhang}
\author[1]{Jonathon Li}
\author[1,2]{\\Christopher R\'e}
\author[1,2]{Christopher R. Aberger}
\author[1,3]{Christopher De Sa}
\affil[1]{SambaNova Systems}
\affil[2]{Department of Computer Science, Stanford University}
\affil[3]{Department of Computer Science, Cornell University}
\affil[ ]{\texttt{bowen.yang@sambanovasystems.com}, \texttt{\{zjian, jlli\}@stanford.edu},}
\affil[ ]{\texttt{christopher.aberger@sambanovasystems.com}, \texttt{\{chrismre\}@cs.stanford.edu},}
\affil[ ]{\texttt{cdesa@cs.cornell.edu}}
\date{}
\icmltitlerunning{PipeMare: Asynchronous Pipeline Parallel DNN Training}
\begin{document}

\ifarxiv

\maketitle

\begin{abstract}
Pipeline parallelism (PP) when training neural networks enables larger models to be partitioned spatially, leading to both lower network communication and overall higher hardware utilization.
Unfortunately, to preserve the statistical efficiency of sequential training, existing PP techniques sacrifice hardware efficiency by decreasing pipeline utilization or incurring extra memory costs.
In this paper, we investigate to what extent these sacrifices are necessary.
We devise PipeMare, a simple yet robust training method that tolerates asynchronous updates during PP execution without sacrificing utilization or memory, which allows efficient use of fine-grained pipeline parallelism.
Concretely, when tested on ResNet and Transformer networks, asynchrony enables PipeMare to use up to $2.7\times$ less memory or get $4.3\times$ higher pipeline utilization, with similar model quality, when compared to state-of-the-art synchronous PP training techniques.
 \end{abstract}

\else

\twocolumn[
\icmltitle{PipeMare: Asynchronous Pipeline Parallel DNN Training}
\begin{icmlauthorlist}
\icmlauthor{Bowen Yang}{to}
\icmlauthor{Jian Zhang}{to}
\icmlauthor{Jonathan Li}{stan}
\icmlauthor{Christopher R\'e}{to, stan}
\icmlauthor{Christopher R. Aberger}{to}
\icmlauthor{Christopher De Sa}{to, corn}
\end{icmlauthorlist} 

\icmlaffiliation{to}{SambaNova Systems}
\icmlaffiliation{stan}{Stanford University}
\icmlaffiliation{corn}{Cornell University}

\icmlcorrespondingauthor{Bowen Yang}{bowen.yang@sambanovasystems.com}

\icmlkeywords{Machine Learning, icML}

\vskip 0.3in

\begin{abstract}
Pipeline parallelism (PP) when training neural networks enables larger models to be partitioned spatially, leading to both lower network communication and overall higher hardware utilization.
Unfortunately, to preserve the statistical efficiency of sequential training, existing PP techniques sacrifice hardware efficiency by decreasing pipeline utilization or incurring extra memory costs.
In this paper, we investigate to what extent these sacrifices are necessary.
We devise PipeMare, a simple yet robust training method that tolerates asynchronous updates during PP execution without sacrificing utilization or memory, which allows efficient use of fine-grained pipeline parallelism.
Concretely, when tested on ResNet and Transformer networks, asynchrony enables PipeMare to use up to $2.7\times$ less memory or get $4.3\times$ higher pipeline utilization, with similar model quality, when compared to state-of-the-art synchronous PP training techniques.
 \end{abstract}
]

\printAffiliationsAndNotice{}  %

\fi

\newcommand{\todo}[1]{{\color{red} {#1}}}
\newcommand{\update}[1]{{{#1}}}

\section{Introduction}
Recently there has been a explosion of interest in hardware chips designed 
for training deep neural networks 
\cite{cerebras,graphcore,jouppi2017datacenter,habana}.
These works rethink how computations are mapped to hardware, which can result in huge speedups.
One of the central ideas that has emerged out of this effort
is that model parallelism can be leveraged in place of, or in combination with, data parallelism.
Model parallelism entails partitioning neural network layers spatially across hardware 
resources while pipelining the computation between them. 
Training a neural 
network in this model-parallel fashion is called \emph{pipeline parallelism} (PP).

There are several benefits of PP over traditional data
parallel execution.
First, it \emph{eliminates context switching}.
Without pipeline parallelism, GPUs run neural network training on a kernel-by-kernel basis. Each new low-level operator or kernel results in a context switch: it must be dynamically dispatched from the CPU to the GPU, which can incur time delays. Instead, with pipeline parallelism, context switching is no longer necessary. Operators are spatially fixed across compute resources, and the entire computation graph is able to run in a single context without dynamic dispatching. 
Second, PP \emph{alleviates the accelerator memory bottleneck}.
When training deep neural networks (DNNs) with a kernel-by-kernel accelerator, weights must continually be marshalled back and forth to main memory.
This can be a major bottleneck, especially in distributed data-parallel settings where weights must be replicated across all devices in the system.
This problem is only getting worse as state-of-the-art models are continually growing across domains, requiring them to be distributed \cite{roberta,inception,xlnet}.
PP alleviates this memory bottleneck in a distributed setting
by splitting up, rather than replicating, the weights across accelerators.
Third, prior work has shown that PP can \emph{reduce network bandwidth} by up to 95\%.
Reducing the pressure of bandwidth demands is particularly important for data-parallel distributed systems, since in such systems the communication that
occurs between the devices is proportional to the number of parameters \cite{harlap2018pipedream}. 

Despite these many hardware efficiency benefits of pipeline parallelism, 
existing PP techniques sacrifice hardware efficiency to preserve a property called ``synchronous execution,'' which is believed to be necessary to maintain statistical efficiency (e.g. classification accuracy).
\emph{Synchronous execution} in this context means that the 
weights used for computation during forward propagation are the same as those
used to compute the gradients during backward propagation (as if the gradient were computed in one step).
Existing approaches preserve synchronous execution by trading off pipeline utilization (by adding bubbles into the pipeline, which underutilizes the hardware) and/or memory (by storing additional weight copies for microbatching) \cite{huang2018gpipe,harlap2018pipedream}.
Importantly, these costs increase with the pipeline depth (as illustrated in \Cref{fig:pipeline_extremes}) even though the intention of increasing the pipeline depth is to improve throughput.
This poses a massive challenge for the type of high-depth fine-grained PP that could run very efficiently on new hardware accelerators: as a result, previous PP techniques have focused purely on the lower-depth distributed setting.
Motivated both by enabling PP on new hardware accelerators and improving efficiency in the distributed setting, in this paper we study how to remove hardware overheads during pipeline parallel training by revisiting the fundamental question: \emph{is preserving synchronous execution necessary during neural network training?} 
Our contributions and outline are as follows.

\begin{itemize}[nosep, itemindent=24pt,leftmargin=0pt]
\item In \Cref{sec:prelim}, we introduce a model for \emph{asynchronous} 
pipeline-parallel training, that, by eschewing synchronous execution, maximizes hardware efficiency by avoiding both pipeline bubbles and substantial memory increases.
\item  Using this model, in \Cref{sec:theory} we propose \textbf{PipeMare}, a system of two techniques to improve the statistical efficiency of asynchronous pipeline training. 
\item In \Cref{sec:experiments}, we evaluate PipeMare on ResNet50 and Transformer models. PipeMare can achieve competitive model accuracy with better hardware utilization than previous approaches (GPipe and PipeDream). 
\end{itemize}
We hope our work contributes to the design of future hardware accelerators which are
capable of realizing the hardware benefits of high-depth fine-grained PP.

\begin{figure}
\centering 
\includegraphics[width=0.32\linewidth]{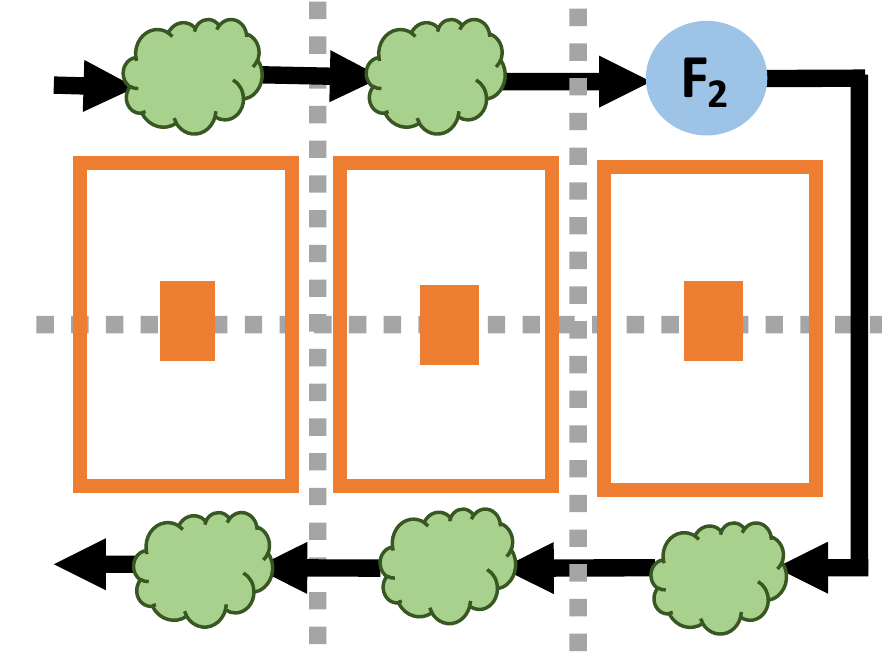}\hfill
\includegraphics[width=0.32\linewidth]{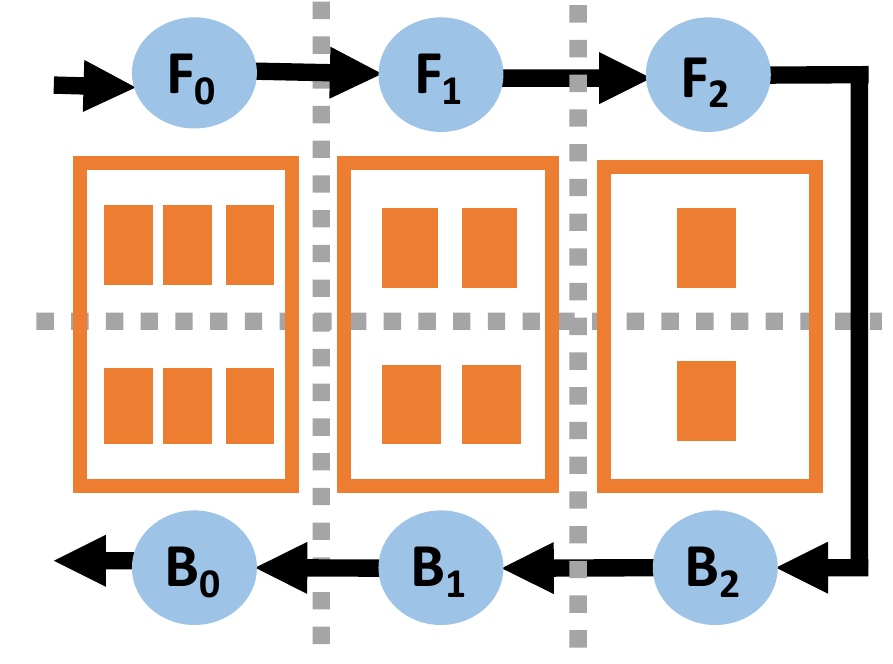}\hfill
\includegraphics[width=0.32\linewidth]{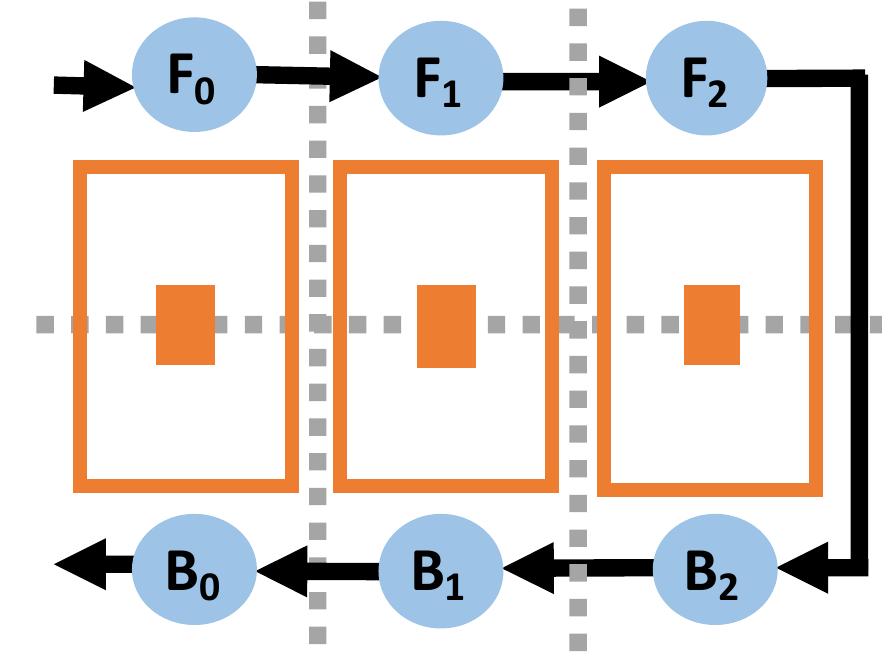} \\
\ifarxiv%
{\makebox[0.32\linewidth][c]{(a) Throughput-Poor}\hfill%
\makebox[0.32\linewidth][c]{(b) Memory-Hungry}\hfill%
\makebox[0.32\linewidth][c]{(c) PipeMare}}
\caption{Different extremes of pipelining modes. Orange squares represent model weight memory, blue circles represent active pipeline compute, green clouds represent pipeline bubbles, and dashed gray lines separate pipeline stages. PipeMare fully utilizes compute while minimizing weight memory footprint.}
\else%
{\small{}(a) Throughput-Poor\hspace{1em}(b) Memory-Hungry\hspace{2em}(c) PipeMare\hfill}
\vspace{-3mm}
\caption{Different extremes of pipelining modes. Orange squares represent model weight memory, blue circles represent active pipeline compute, green clouds represent pipeline bubbles, and dashed gray lines separate pipeline stages. PipeMare fully utilizes compute while minimizing weight memory footprint.}
\vspace{-3em}
\fi%
\label{fig:pipeline_extremes}\
\end{figure}

\subsection{Related Work}

\textbf{PipeDream.}\hspace{1em}PipeDream \cite{harlap2018pipedream} is a PP distributed training
technique used to reduce high computation-to-communication ratios.
PipeDream showed up to 5x speedups in time-to-given-accuracy metrics when 
compared to existing data parallel training techniques.
PipeDream is one type of memory hungry pipelining approach; their core technique
is called weight stashing which maintains an additional copy of the weights for each minibatch 
flowing through the pipeline. This ensures synchronous computation with 
a fixed pipeline delay update. 

\textbf{GPipe.}\hspace{1em}GPipe \cite{huang2018gpipe} is a PP distributed training technique
originally deployed on TPUs \cite{jouppi2017datacenter}. 
GPipe is one type of throughput poor pipelining approach; the core technique used in GPipe is microbatching
to hide the latency from introducing bubbles into the pipeline.
This preserves synchronous execution. 
This approach requires extra activation memory
to preserve synchronous execution across batch boundaries; the authors use gradient 
checkpointing \cite{recompute} to alleviate this memory cost. 
Using these techniques, they show that PP
can enable training larger models than ever on TPUs. In this paper, we focus on leveraging
microbatching to reduce asynchrony, but we also validate that, like GPipe, PipeMare can leverage
gradient checkpointing to reduce activation memory (see \Cref{sec:activation_mem,app:recompute}).

\textbf{Hogwild!}\hspace{1em}Asynchronous training has been studied in several other contexts,
the most well-known of which is Hogwild! \cite{hogwild}. In Hogwild! settings,
as in pipeline-parallel settings, gradients are computed based on delayed versions
of weights.
However, these delays are random and can vary from step to step and weight to weight,
unlike the fixed pipeline delay of the pipeline-parallel setting.
In \Cref{app:hogwild_extension} we extend and apply PipeMare to this setting \cite{recht2011hogwild} showing 
that it can also improve final model accuracies here.

\section{Preliminaries}
\label{sec:prelim}

We formally define a model of pipeline parallelism 
and asynchronous learning that forms the basis for the remainder of this paper.
In \Cref{sec:model} we define the model, and in \Cref{sec:methods} we analyze
the delays, pipeline utilization, and weight memories of GPipe, PipeDream, and PipeMare.

\subsection{Model of Pipeline Parallelism}
\label{sec:model} 

Pipeline-parallel training of a DNN works by decomposing the $L$ layers (or operators) of the neural network
into $P$ pipeline stages, each of which is assigned to a parallel worker (this worker can range from a full distributed machine to a section of silicon on an accelerator).
While processing a minibatch of size $B$, each pipeline stage processes $M$ samples at a time, where $M$ is called the \emph{microbatch size} and $M \le B$.
We use $N$ to represent the number of microbatches in a minibatch (or $N = \lceil\frac{B}{M}\rceil$) and let $i$ represent a pipeline stage.
Layers can be associated with weights: we let $W$ represent the total size of all these weights.
The resulting microbatch gradients are accumulated into gradient buffers, and
weights are updated only at minibatch boundaries.  Previous work 
studied the distributed case where $P \ll L$: we call this \emph{coarse-grained pipeline
parallelism}. Here, we are interested in the case of \emph{fine-grained pipeline parallelism}, where $P \approx L$. 

\textbf{Pipeline Utilization/Throughput.}\hspace{1em} Pipeline utilization ($\operatorname{Util}$) is the 
percentage of pipeline stages that are not idle (stalled) at any given time. In the best case ($P$ active stages), we get 100\% pipeline utilization or $\operatorname{Util}=\lceil\frac{P}{P}=1.0\rceil$. Note that throughput is linearly proportional to $\operatorname{Util}$.

\textbf{Delay.}\hspace{1em}The statistical effect of using pipeline-parallel training is characterized by
the \emph{pipeline delay}: the number of optimizer steps that pass between when the weights are
read to compute a gradient and when that gradient is used to update the weights.
In a standard backpropagation algorithm, each weight is read twice---once in the forward pass,
and again in the backward pass---so there are \emph{two} delay values, $\taufwd$ and $\taubkwd$, which can vary by stage.
Intuitively, $\taufwd$ corresponds to the 
delay between a weight's forward pass and its update. The earlier a pipeline stage, the larger $\taufwd$ value, i.e., $\tau_{\text{fwd},i} \propto (P-i)$ for the $i$th
stage.
Similarly, $\taubkwd$ is the delay between 
a weight's backward pass and its update.
We can write this out formally as
\[
	w_{t+1} = w_t - \alpha \nabla f_t(u_{\text{fwd},t}, u_{\text{bkwd},t})
\]
where $w_t$ are the weight values after $t$ gradient steps, $\nabla f_t$ is the gradient function for the $t$-th minibatch, and $u_{\text{fwd},t}$ and $u_{\text{bkwd},t}$ are the (delayed) values of the weights that are used in the forward and backward passes, respectively, for computing $\nabla f_t$.
The weights $w_t$ can be broken up into weight vectors for each stage: $(w_t)_1$ for stage $1$, $(w_t)_2$ for stage $2$, et cetera, such that $w_t = [(w_t)_1, (w_t)_2, \ldots, (w_t)_P]$.
Concretely, the weight value $(w_t)_i$ for stage $i$ denotes the value of the weights for that stage after $t$ gradient updates have been written to it (this means $w_t$ as a whole is not necessarily the value of the weights in memory at any time $t$),
and the delayed weight values are defined for each pipeline stage $i \in \{1, \ldots, P\}$ as
\[
	\left( u_{\text{fwd},t} \right)_i = \left( w_{t - \tau_{\text{fwd},i}} \right)_i
	\hspace{1em}\text{and}\hspace{1em}
	\left( u_{\text{bkwd},t} \right)_i = \left( w_{t - \tau_{\text{bkwd},i}} \right)_i
\]
where $(\cdot)_i$ denotes selecting the weights for the $i$th stage.
This is a bit of an abuse of notation---here, we are letting $\nabla f_t(u_{\text{fwd},t}, u_{\text{bkwd},t})$ denote \emph{the value of the gradient that would be computed} by the backpropagation algorithm using the weights $u_{\text{fwd},t}$ in the forward pass and weights $u_{\text{bkwd},t}$ in the backward pass.
That is, $\nabla f_t$ is a function of \emph{two} weight vectors, rather than one (as is usual for SGD), because the pipeline-parallel model may use different values for the weights in the forward and backward pass.
Synchronous execution corresponds to the case of $u_{\text{fwd},t} = u_{\text{bkwd},t}$, which requires setting $\taufwd = \taubkwd$.
For the rest of this paper, we will use $\nabla f_t$ with two arguments to denote this backpropagation-with-different-weights gradient, and use $\nabla f_t$ with one argument to denote the ordinary mathematical gradient (under this notation, $\nabla f_t(w,w) = \nabla f_t(w)$ by definition).
Techniques to date have not shown how to train well when $\taufwd - \taubkwd \neq 0$.

\subsection{Pipeline Parallel Training Methods}
\label{sec:methods} 

Using this setup we now analyze the delays, pipeline utilization, and memory usage of 
the two synchronous baseline PP training methods (PipeDream and GPipe), 
and we introduce the setup for our asynchronous method (PipeMare). These results are summarized in
\Cref{tab:taus}.

\textbf{PipeDream.}\hspace{1em}PipeDream has forward delay $\tau_{\text{fwd},i} = \lceil\frac{2(P-i)+1}{N}\rceil$ and uses weight stashing to cache the weights used in the forward pass until they are needed in the backward pass, which allows for
full pipeline efficiency while maintaining synchronous execution $\taufwd = \taubkwd$.
Note that because $\taufwd = \taubkwd$ PipeDream uses same weights for both forward and backward passes, despite having a delayed update.
Unfortunately, this comes at the cost of storing copies of the weights, which uses extra memory of size {\setlength{\abovedisplayskip}{2pt}\setlength{\belowdisplayskip}{2pt}\[ \textstyle \operatorname{Mem} = \sum_{i=0}^{P}|(w)_{i}|{\times}\tau_{\text{fwd},i} = \sum_{i=0}^{P}|(w)_{i}|{\times}\lceil\frac{2(P-i)+1}{N}\rceil. \]}With fine-grained PP $P$ can become large, making the overhead $\operatorname{Mem}$ large, which presents a problem for large models. Because PipeDream's pipeline is fully utilized during training they have a pipeline utilization of $\operatorname{Util} = 1.0$.

\textbf{GPipe.}\hspace{1em}
For GPipe, $\taufwd = \taubkwd = 0$, at the cost of lower pipeline utilization and additional activation memory.
Each pipeline has to be filled and drained at a minibatch boundary to ensure weight synchronization between forward and backward pass, so the average bubble time is $O(\frac{P - 1}{N + P - 1})$ \cite{huang2018gpipe}. Consequently, the pipeline utilization of GPipe is $\frac{N}{N + P - 1}$.
GPipe leverages microbatching (increasing $N > 1$) to reduce the number of bubbles in its pipeline.
GPipe 
does not store any additional weight memory but does store additional memory
for activations. Using the standard technique of gradient checkpointing \cite{recompute}, both PipeMare
and GPipe can reduce their activation memory footprint (see \Cref{app:comp_act_tradeoff,app:recompute}).

\begin{table}[t]
	\ifarxiv%
	\else
	\renewcommand{\arraystretch}{1.25}
	\setlength{\tabcolsep}{2pt}
	\fi
	\centering
	\ifarxiv%
		\begin{tabular}{l c c | c c}
	\else
		\resizebox{\linewidth}{!}{\begin{tabular}{l c c | c c}
	\fi
		\hline
		 & \multicolumn{2}{c|}{Per Stage ($i$)} & \multicolumn{2}{c}{Overall} \\
		 & $\tau_{\text{fwd},i}$ & $\tau_{\text{bkwd},i}$ & $\operatorname{Util}$ & $\Mem$\\ 
		\hline
		PipeDream 		& $\left\lceil \frac{2(P-i)+1}{N} \right\rceil$ & $\tau_{\text{fwd},i}$   &1.0 				   & $\sum_{i=0}^{P}\left|(w)_{i}\right|{\times}\tau_{\text{fwd},i}$\rule{0pt}{2.9ex} \\
		GPipe 			& 0 				   & 0 						&$\frac{N}{N + P - 1}$ &$W=\sum_{i=0}^{P} \left|(w)_{i} \right|$ \\
		PipeMare 		& $\left\lceil \frac{2(P-i)+1}{N} \right\rceil$ & $0$ 	 				&1.0 				   &$W=\sum_{i=0}^{P} \left|(w)_{i} \right|$\rule[-1.8ex]{0pt}{0pt}\\
		\hline
	\ifarxiv
		\end{tabular}
			\caption{Characterization of pipeline parallel training methods. $\taufwd$ and $\taubkwd$ are the pipeline delays for
		model weights in the forwards and backwards pass. $W$ is one copy of the
		weights. $P$ is the number of pipeline stages. $N$ is the number of microbatches in a minibatch. $i$ indexes the pipeline stage. $|(w)_{i}|$ denotes the number of weights in the $i$th layer.}
	\else
		\end{tabular}}
		\caption{Characterization of pipeline parallel training methods. Here, $|(w)_{i}|$ denotes the number of weights in the $i$th layer.}
		\vspace{-5mm}
	\fi
	\label{tab:taus}
\end{table}

\textbf{PipeMare.}\hspace{1em}In PipeMare we let the computation proceed asynchronously: we just compute gradients with whatever weights are in memory at the time we need to use them.
This avoids any need to store extra copies of our model weights ($\operatorname{Mem} = W$) or introduce bubbles into our pipeline ($\operatorname{Util}=1.0$),
because as soon as a pipeline stage has its gradients (accumulated within a full minibatch) the weights are updated.
This means that the forward propagation is done on different weights than those that are used for backpropagation, i.e., $\taufwd \ne \taubkwd$.
Concretely, each layer in our neural network has a fixed forward delay of $\taufwd = \lceil\frac{2(P-i)+1}{N}\rceil$ which is the same $\taufwd$ as PipeDream. On the other hand, since there is no delay between backward pass and weight updates, $\taubkwd = 0$. Similar to GPipe, minimizing the microbatch size $M$ reduces the activation memory usage while also keeping each pipeline stage fully utilized.
Unlike in GPipe, minimizing the microbatch size in PipeMare has the additional benefit of helping to reduce the discrepancy between the forward and backward delays.

\section{PipeMare}
\label{sec:theory}

We design a strategy called \textbf{PipeMare} for asynchronous pipeline-parallel training of deep neural networks.
PipeMare combines two techniques, which we introduce in this section, motivated by theory.
For each technique, we start by modeling a problem we want to address by studying fixed-delay asynchronous gradient descent on a one-dimensional convex quadratic objective.
Even this very simple ``toy'' model has non-trivial behavior, and (as we will see) it exhibits many phenomena of interest seen in more complicated settings, and it motivates techniques to address them that work even for much more complicated objectives (such as for DNNs).
Consider a one-dimensional quadratic objective
$f(w) = \lambda w^2 / 2$
for some fixed curvature $\lambda > 0$.
Suppose that we run fixed-delay asynchronous SGD on this model, using gradient samples of the form
\[
	\nabla f_t(u_{\text{fwd},t}, u_{\text{bkwd},t}) = \lambda u_{\text{fwd},t} - \eta_t = \lambda w_{t-\tau} - \eta_t
\]
where $\eta_t$ is some gradient estimation noise, which we assume is bounded and depends on $t$.
This implicitly assumes that the delays for all the weights are the same and equal to some fixed parameter $\tau = \taufwd$, with no delay discrepancy (we will consider delay discrepancy later in Section~\ref{sec:delay_discrepancy}).
Running SGD in this setting has the update step
\begin{equation}
	w_{t+1} 
	= 
	w_{t} - \alpha \nabla f_t(\cdots)
	=
	w_{t} - \alpha \lambda w_{t-\tau} + \alpha \eta_t. \label{eqnLTI}
\end{equation}

\subsection{Learning rate rescheduling (T1)}
\label{sec:lr_rescheduling}

We theoretically derive our first technique---rescheduling the step size to be inversely proportional to the delay---and evaluate its tradeoffs on some DNN tasks.

\begin{figure}[t]
\centering
	\begin{tabular}{c c}
		\ifarxiv
		\includegraphics[width=0.4\columnwidth, trim=0 0.4cm 0 0cm]{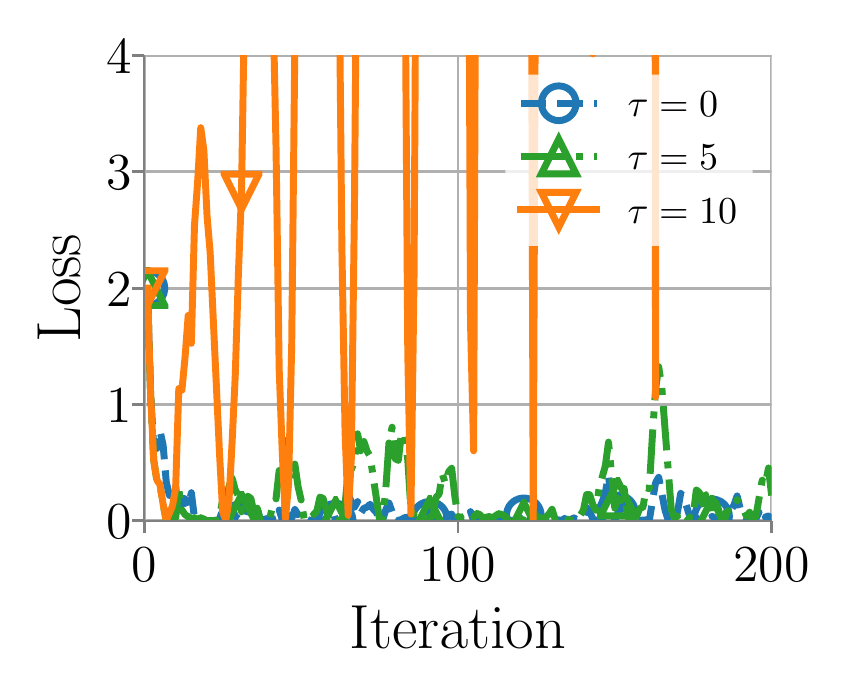} &
		\includegraphics[width=0.4\columnwidth]{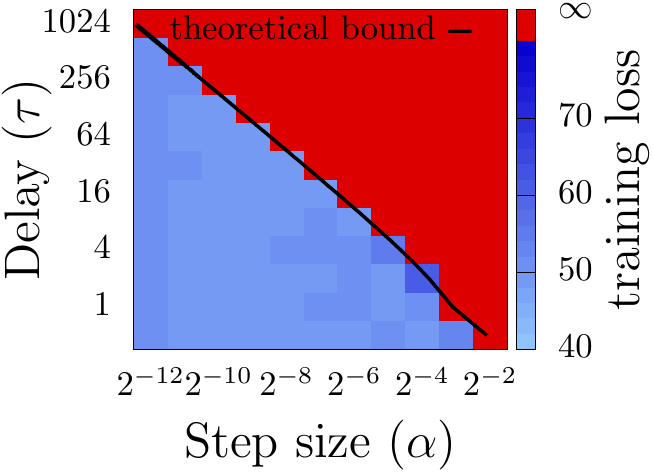} \\
		\else
		\includegraphics[width=0.47\columnwidth, trim=0 0.4cm 0 0cm]{figures/quadraticdivergence.pdf} &
		\hspace{-0.75em}\includegraphics[width=0.47\columnwidth]{figures/cpusmall.pdf} \\
		\fi
		(a) & (b)
	\end{tabular}
	\ifarxiv%
	\else%
	\vspace{-5mm}
	\fi
	\caption{(a) Increasing $\tau$ can cause the quadratic model to diverge even when $\alpha$ remains fixed. (b) Evaluation of pipeline-parallel SGD for linear regression on the \texttt{cpusmall} dataset
	running for $T = 10^6$ iterations.
    The heatmap reports losses as a function of the step size $\alpha$ and the delay $\tau$; red denotes divergence to $\infty$. The black curve shows the upper bound from Lemma~\ref{lemmaRootsGD} using the largest curvature of the objective in place of $\lambda$.
	}
	\ifarxiv%
	\else%
	\vspace{-3mm}
	\fi
	\label{fig:figCpuSmall}
\end{figure}
\textbf{The problem.}
We might hope that existing hyperparameters used for sequential SGD would ``just work'' for training in the asynchronous PP setting.
Unfortunately, when we try running naively with a standard step size scheme, asynchronous PP SGD can significantly underperform the synchronous baseline.
This happens because a large value of $\tau$ can cause SGD to diverge even when using a step size $\alpha$ for which the baseline synchronous algorithm converges.
This is shown in Figure~\ref{fig:figCpuSmall}(a), which simulates the quadratic model (5) with $\lambda = 1$, $\alpha = 0.2$, and $\eta_t \sim \mathcal{N}(0,1)$, for various values of $\tau$.
Notice that for $\tau = 10$, the trajectory diverges quickly.
In \Cref{app:dl_motivate}, we show that the same phenomenon can be observed for a Resnet50 network.

\textbf{The theory.}
The first question we want to ask is: \emph{when will asynchronous pipeline-parallel SGD be stable on the quadratic model?}
That is, for what values of the step size $\alpha$ will it be guaranteed that $w_t$ remains bounded, no matter what (bounded) noise signal $\eta_t$ we get from the gradient estimator?
To answer this question, notice that (\ref{eqnLTI}) is a linear system, which can be written in terms of a companion matrix that stores all the state of the system as
\[{\small
	\begin{bmatrix} w_{t+1} \\ w_t \\ \vdots \\ w_{t-\tau+1} \end{bmatrix}
	=
	\begin{bmatrix} 1 & 0 & \cdots & 0 & -\alpha \lambda \\ 1 & 0 & \cdots & 0 & 0 \\ \vdots & \vdots & \ddots & \vdots & \vdots \\ 0 & 0 & \cdots & 1 & 0 \end{bmatrix}
	\begin{bmatrix} w_t \\ w_{t-1} \\ \vdots \\ w_{t-\tau} \end{bmatrix}
	+
	\begin{bmatrix} \alpha \eta_t \\ 0 \\ \vdots \\ 0 \end{bmatrix}.
}\]
If we call this $(\tau + 1) \times (\tau + 1)$ companion matrix $C$, and call the vectorized version of $w$ with its history $W$, 
\begin{equation}
	\label{eqnCompanionMatrix}
	W_{t+1} = C W_t + \alpha \eta_t e_1,
\end{equation}
where $e_1$ is the vector $\begin{bmatrix} 1 & 0 & \cdots & 0 \end{bmatrix}^T$.
Linear systems of this type have solutions of the form
\[
	\textstyle
	w_t
	=
	\sum_{k=0}^{t-1} \alpha \eta_{t - k - 1} \sum_\omega \rho_{\omega}(k) \cdot \omega^k,
\]
where the sum here ranges over the eigenvalues $\omega$ of the companion matrix, and each $\rho_{\omega}$ is a polynomial of degree less than the algebraic multiplicity of the eigenvalue $\omega$.\footnote{To see why, consider the Jordan normal form of $C$, which will for each eigenvalue have a corresponding Jordan block of dimension equal to its algebraic multiplicity.}
Thus, the convergence of (\ref{eqnCompanionMatrix}) is determined entirely by $C$'s eigenvalues, 
and it will be stable when all these eigenvalues lie inside the unit disk in the complex plane.
$C$'s eigenvalues are the zeros of its characteristic polynomial
\begin{equation}
	p(\omega) = \omega^{\tau+1} - \omega^{\tau} + \alpha \lambda. \label{eqnCharPolyGD}
\end{equation}
So we want to find out for which values of $\alpha$ the roots of $p$ must all lie inside the unit disk.
\begin{lemma}
\label{lemmaRootsGD}The roots of the polynomial $p$ of (\ref{eqnCharPolyGD}) all lie inside the unit disk
if and only if the step size $\alpha$ is set such that
\ifarxiv%
\else%
\vspace{-2mm}%
\fi%
\[
	0 \le \alpha \le \frac{2}{\lambda} \cdot \sin\left( \frac{\pi}{4\tau + 2} \right) = O\left( \frac{1}{\lambda \tau} \right).
\]
\end{lemma}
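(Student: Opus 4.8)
The plan is to treat $c = \alpha\lambda \ge 0$ as the single relevant parameter, write $p(\omega) = \omega^{\tau+1} - \omega^{\tau} + c$, and track how its $\tau+1$ roots move as $c$ increases from $0$. Since the roots depend continuously on $c$ and can enter or leave the open unit disk only by crossing the unit circle, the number of roots with $|\omega| < 1$ is constant on every $c$-interval free of a ``crossing'' value (a $c$ for which $p$ has a root of modulus exactly $1$). I would therefore (i) establish a base case at small $c$, (ii) compute every crossing value exactly, and (iii) show each crossing is outward, so the first crossing sets the threshold. For the base case, at $c=0$ we have $p(\omega)=\omega^{\tau}(\omega-1)$, a $\tau$-fold root at $0$ and a simple root at $1$; a first-order perturbation gives the root near $1$ as $\omega = 1-c+O(c^2)$ and the roots near $0$ as the $\tau$-th roots of $c$, so for all sufficiently small $c>0$ every root lies strictly inside.

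\textbf{Locating the crossings.} I set $\omega = e^{i\theta}$ and use $e^{i\theta}-1 = 2\sin(\theta/2)\,e^{i(\theta+\pi)/2}$ to rewrite $p(e^{i\theta})=0$ as $e^{i\tau\theta}(e^{i\theta}-1) = -c$, i.e. $2\sin(\theta/2)\exp\!\big(i[(\tau+\tfrac12)\theta + \tfrac\pi2]\big) = c\,e^{i\pi}$. Matching moduli and phases (using $c\ge 0$ and $\sin(\theta/2)>0$ on $(0,2\pi)$) forces $c = 2\sin(\theta/2)$ together with $\theta = \frac{(4k+1)\pi}{2\tau+1}$ for integer $k$. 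The conjugation $\theta\mapsto 2\pi-\theta$ leaves $c$ unchanged, and $2\sin(\theta/2)$ is minimized at the smallest admissible angle $\theta^\ast = \frac{\pi}{2\tau+1}$, yielding the minimal crossing value $c^\ast = 2\sin\!\big(\frac{\pi}{4\tau+2}\big)$, i.e. $\alpha^\ast = \frac{2}{\lambda}\sin\!\big(\frac{\pi}{4\tau+2}\big)$.

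\textbf{Transversality and conclusion.} Combining the base case with the absence of any crossing for $c\in(0,c^\ast)$ shows all roots stay strictly inside on $[0,\alpha^\ast)$, with the endpoints $\alpha=0,\alpha^\ast$ placing a (conjugate pair of) root on the circle; this gives sufficiency. For necessity I differentiate $p(\omega)=0$ implicitly to get $\frac{d\omega}{dc} = -[\omega^{\tau-1}((\tau+1)\omega-\tau)]^{-1}$ and eliminate the high power using $\omega^{\tau} = c/(1-\omega)$ (from $p=0$). A short computation then reduces the sign of $\frac{d}{dc}|\omega|^2 = 2\operatorname{Re}(\bar\omega\,\frac{d\omega}{dc})$ at any crossing to the sign of $\operatorname{Re}((1-\omega)\overline{((\tau+1)\omega-\tau)}) = (2\tau+1)(\cos\theta-1) < 0$; tracing the intermediate $-1/c$ factor then gives $\frac{d}{dc}|\omega|^2 > 0$ at every crossing. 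Hence each crossing moves roots strictly outward, the interior count is non-increasing and drops below $\tau+1$ at $c^\ast$ without recovering, so for every $\alpha>\alpha^\ast$ some root lies strictly outside.

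\textbf{Main obstacle.} The delicate part is the ``only if'' direction, where I must rule out roots re-entering the disk at larger $c$ (which would break the claimed equivalence). The crucial simplification is the angle-uniform transversality sign: substituting $\omega^{\tau} = c/(1-\omega)$ collapses the $\tau$-dependent factor to the clean expression $(2\tau+1)(\cos\theta-1)$, negative for all $\theta\in(0,2\pi)$, so no crossing is ever inward. The secondary bookkeeping is confirming that $c^\ast$ is indeed the minimal member of the discrete family $\{2\sin(\theta_k/2)\}$, which follows from monotonicity of $\sin$ on $(0,\pi)$ and the conjugate pairing of crossing angles.
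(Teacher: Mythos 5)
Your proof is correct, but it takes a genuinely different route from the paper on the crucial ``only if'' step. Both arguments locate the same unit-circle crossings---you via the direct substitution $\omega = e^{i\theta}$, the paper via the Cayley map $\omega = (1-iy)/(1+iy)$, yielding the identical family $\alpha = \frac{2}{\lambda}\sin\left(\frac{\pi + 4\pi k}{4\tau+2}\right)$---but they diverge on why no root ever re-enters the disk. The paper argues globally by counting: as $\alpha \to \infty$ all root magnitudes diverge, so all $\tau+1$ roots must eventually cross the circle; there are exactly $\tau+1$ admissible crossing pairs $(\alpha,\omega)$, none of which is a multiple root (the zeros of $p'$ lie strictly inside the disk), so by pigeonhole every crossing is an exit and re-entry is impossible. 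You argue locally by transversality: differentiating $p(\omega(c)) = 0$ in $c = \alpha\lambda$ and eliminating the high power via $\omega^{\tau} = c/(1-\omega)$ gives, on the circle, $\frac{d}{dc}|\omega|^2 = \frac{-2}{c}\cdot\frac{(2\tau+1)(\cos\theta - 1)}{|(\tau+1)\omega - \tau|^2} > 0$, so every crossing is outward regardless of how many there are; I verified this computation and it is right (it implicitly requires $p'(\omega)\neq 0$ at the crossing, which holds since $|(\tau+1)\omega| = \tau+1 > \tau$ on the circle). The trade-off: the paper's census avoids calculus but leans on the exact enumeration of crossing opportunities (including the even/odd-$\tau$ bookkeeping) and on the $\alpha\to\infty$ limit, whereas your transversality sign is self-contained, immune to miscounting, and would transfer to messier characteristic polynomials (such as the momentum polynomial in the appendix) where an exact census of crossings is impractical. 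Two small points of wording: $\sin$ is increasing on $(0,\pi/2]$, not on all of $(0,\pi)$---it is precisely your conjugate-pairing observation that licenses restricting the crossing angles to that range---and, exactly as in the paper, the endpoints $\alpha \in \{0, \alpha^{\ast}\}$ place roots on the circle itself, so ``inside the unit disk'' in the lemma must be read as the closed disk for the stated equivalence to hold at the endpoints.
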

\ifarxiv%
\else%
\vspace{-2mm}%
\fi%
This lemma gives us theoretical insight that backs up our empirical observations: when the delay is larger, the step size must be set smaller to prevent instability and divergence.
It also quantifies how much smaller, predicting that $\alpha$ should be set as $O(\tau^{-1})$.
In \Cref{fig:figCpuSmall}(b)
we validate that our theory not only applies to 1D optimization problems, but also can accurately describe what happens when we run pipeline-parallel SGD on a simple 12-dimensional linear regression problem using the cpusmall dataset~\cite{CC01a}; the algorithm diverges at precisely an $\alpha \propto \tau^{-1}$ slope, exactly what Lemma~\ref{lemmaRootsGD} predicts. In \Cref{app:ext_mom} we extend this to momentum SGD showing that the $O(\tau^{-1})$ threshold is general, which motivates our use of Technique 1 with learning algorithms other than SGD such as Adam.

\textbf{The technique.} To avoid the divergence we just characterized, the natural choice here is to divide the step size at each layer $i$ by its delay $\tau_i$.
However, this is (1) problematic because it leads to very small step sizes which slow convergence, and (2) unnecessary because it divides the step size by $\tau$ even for later epochs where the base step size has already become small, as is usually done in deep learning \cite{he2016deep,vaswani2017attention}.
This motivates us to develop a step size scheme that (1) behaves like the $O(\tau^{-1})$ scheme for early epochs, and (2) degrades back to the baseline learning rate scheme for later epochs.

\textbf{T1:}
\emph{Suppose that we are training a DNN.
In SGD step $k$, assign the following step size to layer $i$.
\begin{equation}
	\alpha_{k,i} = \frac{ \alpha_{k, \text{base}} }{ \tau_i^{p_k} }
	\hspace{1em}\text{where}\hspace{1em}
	p_k = 1 - \min\left(\frac{k}{K}, 1 \right).
\label{fig:heurStepSize}
\end{equation}
where $K$ is a hyperparameter representing a number of steps during which to adjust the learning rate, and $\alpha_{k, \text{base}}$ denotes the normal synchronous learning rate.
{ 
We suggest K to be one-quarter the length of the first phase of a fixed-step LR schedule (we use this for the ResNet model) or five times the linear warmup steps of a schedule with a linear warmup phase (we use this for the Transformer model).}
}

\subsection{Discrepancy correction (T2)}
\label{sec:delay_discrepancy}
In Section~\ref{sec:lr_rescheduling}, we analyzed a setting in which there was no delay discrepancy ($\taufwd = \taubkwd$).
In this subsection, we try to understand the effect of delay discrepancy, again using our quadratic model.
We then develop and evaluate a technique for ``correcting'' this discrepancy.

\begin{figure}[t]
	\ifarxiv
	\centering
	\vspace{-1.5em}
	\begin{tabular}{c c}
		\includegraphics[width=0.4\columnwidth]{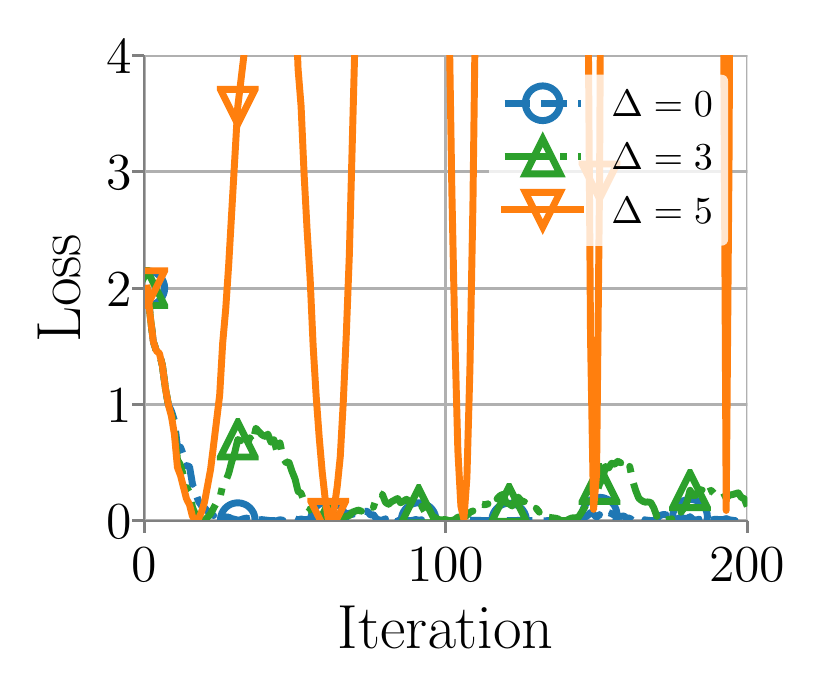} &
		\includegraphics[width=0.4\columnwidth]{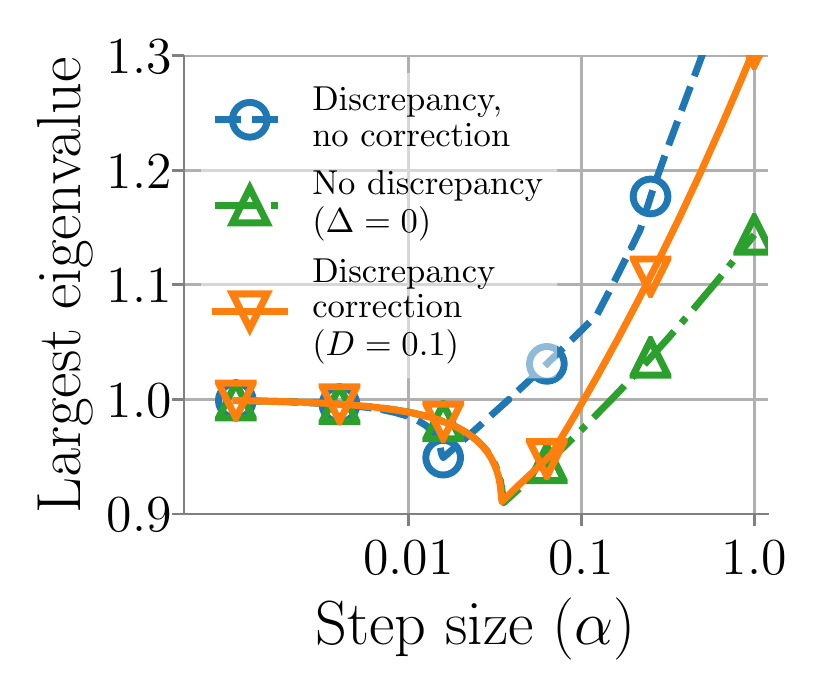} \\
		(a) & (b)
	\end{tabular}
	\else
	\vspace{-1mm}

	\hspace{-4pt}\begin{tabular}{c c}
		\includegraphics[width=0.475\columnwidth]{figures/quadraticdiscrepancy.pdf} &
		\includegraphics[width=0.475\columnwidth]{figures/momentumcorr.pdf} \\
	\vspace{-1mm}
		(a) & (b)
	\end{tabular}
	\vspace{-3mm}
	\fi
	\caption{(a) Increasing $\Delta$, the gradient sensitivity to delay discrepancy, can cause the quadratic model to diverge even when $\alpha$ and $\tau$ remain fixed, using $\taufwd = 10$, $\taubkwd = 6$, and $\lambda = 1$. (b) Effect of discrepancy correction on the quadratic model. 
	Forward-backward delay discrepancy (blue) increases the largest magnitude eigenvalue of the companion matrix with $\Delta = 5$, and $\tau$, $\lambda$ same as in (a). 
	Discrepancy correction with $\operatorname{D} = 0.1$ (orange) reduces the largest magnitude eigenvalue; this eigenvalue is closer to that attained without delay discrepancy (green).
	}
		\vspace{-4mm}
	\label{fig:figDelayDiscrepancy}
\end{figure}

\textbf{The problem.}
To model delay discrepancy, we now assume gradient samples of the form
\[
	\nabla f_t(u_{\text{fwd},t}, u_{\text{bkwd},t}) = (\lambda + \Delta) \cdot w_{t-\tau_{\text{fwd}}} - \Delta \cdot w_{t-\tau_{\text{bkwd}}} - \eta_t
\]
where $\tau_{\text{fwd}} > \tau_{\text{bkwd}}$ are two different delays, and $\Delta$ is a constant that measures the sensitivity of the gradients to discrepancy.
We can think of this as the natural first-order (linear) approximation of $\nabla f_t$ in the neighborhood of a stationary point---it models any affine function of $u_{\text{fwd},t}$ and $u_{\text{bkwd},t}$ that is consistent with the curvature $\lambda$ when $u_{\text{fwd},t} = u_{\text{bkwd},t}$. 
If $\Delta = 0$, we recover a model of our original zero-discrepancy setting, whereas for large-magnitude values of $\Delta$, even a small delay discrepancy could be amplified to have a large effect on the gradient samples.

Delay discrepancy is problematic because it can amplify the divergence effect observed in Section~\ref{sec:lr_rescheduling}.
To illustrate, Figure~\ref{fig:figDelayDiscrepancy}(a) shows on the quadratic model (with $\taufwd = 10$, $\taubkwd = 6$, $\lambda = 1$, and $\eta_t \sim \mathcal{N}(0,1)$) that a nonzero value of $\Delta$ can cause divergence even for a value of $\alpha$ and $\tau$ where with $\Delta = 0$ (i.e.\ running PipeDream-style with no discrepancy) the trajectory would converge.
In \Cref{app:dl_motivate}, we illustrate that, just as was the case for the divergence phenomenon of Section~\ref{sec:lr_rescheduling}, on ResNet50 asynchronous SGD with a large enough $\Delta$ will diverge even for values of $\alpha$ and $\tau$ for which PipeDream-style SGD converged.
We seek to understand this phenomenon theoretically and to develop a technique to limit its effect.

\textbf{The theory.}
With our new discrepancy-dependent samples, pipeline-parallel SGD on our model has the update step
\begin{equation}
	w_{t+1} 
	=
	w_{t} - \alpha (\lambda + \Delta) w_{t-\tau_{\text{fwd}}} + \alpha \Delta w_{t-\tau_{\text{bkwd}}} + \alpha \eta_t.
	\label{eqnUpdateDD}
\end{equation}
As before, we can analyze this for stability by finding the value of $\alpha$ for which the roots of its characteristic polynomial lie inside the unit disk.
\begin{lemma}
\label{lemmaRootsDD}For any $\Delta > 0$, there exists an $\alpha > 0$ with
\[
    \alpha
    \le
    \min\left(
        \frac{2}{\Delta \cdot \left(\taufwd - \taubkwd \right)},
        \frac{2}{\lambda} \cdot \sin\left( \frac{\pi}{4\taufwd + 2} \right)
    \right)
\]
such that at least one of the roots of the characteristic polynomial of (\ref{eqnUpdateDD}) is outside the interior of the unit disk (that is, the disrepancy-dependent model updates will be unstable).
\end{lemma}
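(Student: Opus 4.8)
The plan is to show that for a suitable step size a root of the characteristic polynomial of (\ref{eqnUpdateDD}) sits exactly on the unit circle, which already places it outside the open unit disk and hence proves instability. Writing $d = \taufwd - \taubkwd > 0$ and seeking solutions $w_t = \omega^t$ of the homogeneous part of (\ref{eqnUpdateDD}), the characteristic polynomial is
\[
q(\omega) = \omega^{\taufwd+1} - \omega^{\taufwd} - \alpha\Delta\,\omega^{d} + \alpha(\lambda+\Delta) = \omega^{\taufwd}(\omega-1) + \alpha\Delta(1 - \omega^{d}) + \alpha\lambda .
\]
As in Lemma~\ref{lemmaRootsGD}, stability is equivalent to all roots lying in the open unit disk, so it suffices to exhibit an $\alpha$ below the stated minimum together with a $\theta$ for which $q(e^{i\theta}) = 0$.

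First I would substitute $\omega = e^{i\theta}$ and separate real and imaginary parts. Using $e^{i\theta}-1 = 2i\sin(\theta/2)e^{i\theta/2}$ and writing $\psi = (\taufwd + \tfrac12)\theta$, the equation $q(e^{i\theta})=0$ collapses to the two real conditions
\[
2\sin(\tfrac\theta2)\cos\psi = \alpha\Delta\sin(d\theta), \qquad 2\sin(\tfrac\theta2)\sin\psi = \alpha\bigl(\Delta(1-\cos(d\theta)) + \lambda\bigr).
\]
Eliminating $\alpha$ gives the single scalar condition $\tan\psi = \frac{\Delta(1-\cos(d\theta)) + \lambda}{\Delta\sin(d\theta)} =: G(\theta)$. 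I would then run an intermediate-value argument on $\theta \in (0, \tfrac{\pi}{2\taufwd+1})$: as $\theta \to 0^+$ the left side tends to $0$ while $G(\theta)\to +\infty$ (the $\lambda/(\Delta\sin(d\theta))$ piece blows up), and as $\theta \to (\tfrac{\pi}{2\taufwd+1})^-$ we have $\psi \to \tfrac\pi2$, so $\tan\psi \to +\infty$ while $G$ stays finite. Hence a crossing $\theta^\ast \in (0, \tfrac{\pi}{2\taufwd+1})$ exists, and the imaginary equation recovers a genuinely positive $\alpha^\ast$ (on this range $\cos\psi$, $\sin(d\theta^\ast)$, and $\sin(\theta^\ast/2)$ are all positive).

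It then remains to bound $\alpha^\ast$ by each term of the minimum. For the $\tfrac{2}{\lambda}\sin(\tfrac{\pi}{4\taufwd+2})$ bound I would use the real equation: since $\Delta(1-\cos(d\theta^\ast)) \ge 0$ and $\sin\psi \le 1$, we get $\alpha^\ast\lambda \le 2\sin(\theta^\ast/2)$, and $\theta^\ast < \tfrac{\pi}{2\taufwd+1}$ forces $\sin(\theta^\ast/2) < \sin(\tfrac{\pi}{4\taufwd+2})$. For the $\tfrac{2}{\Delta d}$ bound I would use the imaginary equation, $\alpha^\ast\Delta = \tfrac{2\sin(\theta^\ast/2)\cos\psi}{\sin(d\theta^\ast)}$, together with $\cos\psi \le 1$ and the elementary inequality $d\sin(\theta/2) \le \sin(d\theta)$, valid for $\theta \in (0, \tfrac{\pi}{2d})$, a range that contains $(0,\tfrac{\pi}{2\taufwd+1})$ because $\taufwd \ge d$; this yields $\alpha^\ast\Delta \le \tfrac2d$.

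I expect the \emph{main obstacle} to be that last elementary inequality, together with making the endpoint limits in the intermediate-value step fully rigorous. The inequality $d\sin(\theta/2) \le \sin(d\theta)$ can be handled by setting $h(\theta) = \sin(d\theta) - d\sin(\theta/2)$, noting $h(0)=0$ and $h'(0) = d/2 > 0$, and showing $h$ is unimodal on $(0,\tfrac{\pi}{2d})$ (its derivative changes sign once, since $\cos(d\theta) - \tfrac12\cos(\theta/2)$ is strictly decreasing there), with $h(\tfrac{\pi}{2d}) = 1 - d\sin(\tfrac{\pi}{4d}) > 0$; a unimodal function that starts at $0$, rises, and ends positive stays nonnegative. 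Care is also needed to note that $d$ is a positive integer, so that $q$ has the stated polynomial form, and that $\sin(d\theta^\ast)$ stays positive along the crossing.
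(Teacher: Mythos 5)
Your proof is correct, and its first half follows the same skeleton as the paper's: both arguments parametrize a putative root as $\omega = e^{i\theta}$ with $\theta \in \left(0, \tfrac{\pi}{2\taufwd+1}\right)$, split the root condition into two real equations, and use an intermediate-value argument in $\theta$ to show that every $\Delta > 0$ is realized by some crossing (the paper solves the real part for $\alpha$, substitutes into the imaginary part, and shows the resulting expression for $2\Delta/\lambda$ runs from $+\infty$ to $0$ over the interval; your elimination $\tan\psi = G(\theta)$ is the same relation in a different algebraic form). Where you genuinely diverge is in bounding the resulting $\alpha^\ast$. The paper derives $1 + \lambda/\Delta = \cos\left(\tfrac{2\taubkwd+1}{2}\theta\right)/\cos\left(\tfrac{2\taufwd+1}{2}\theta\right)$, invokes the inequality $\cos(y)/\cos(x) \ge 1 + (x^2-y^2)/2$ (justified by a third-derivative claim), and then runs a fairly long chain of estimates to reach $2/\left(\Delta\left(\taufwd-\taubkwd\right)\right)$, finally obtaining the $\tfrac{2}{\lambda}\sin\left(\tfrac{\pi}{4\taufwd+2}\right)$ part of the minimum by appeal to Lemma~\ref{lemmaRootsGD} together with monotonicity of $\alpha$ and $\Delta$ in $\theta$. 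Your bounds are more elementary and self-contained: $\sin\psi \le 1$ applied to the real-part equation gives $\alpha^\ast\lambda \le 2\sin(\theta^\ast/2) \le 2\sin\left(\tfrac{\pi}{4\taufwd+2}\right)$ immediately, and $\cos\psi \le 1$ plus the inequality $\sin(d\theta) \ge d\sin(\theta/2)$ on $\left(0,\tfrac{\pi}{2d}\right)$ gives $\alpha^\ast\Delta \le 2/d$. Your sketch of that inequality is sound: $h'(\theta) = d\cos(d\theta) - \tfrac{d}{2}\cos(\theta/2)$ is strictly decreasing there because $\sin(\theta/2) \le \sin(d\theta)$, so $h'$ changes sign at most once, $h'(0) = d/2 > 0$, and $h\left(\tfrac{\pi}{2d}\right) = 1 - d\sin\left(\tfrac{\pi}{4d}\right) \ge 1 - \tfrac{\pi}{4} > 0$, so the unimodal $h$ is nonnegative; the containment $\tfrac{\pi}{2\taufwd+1} \le \tfrac{\pi}{2d}$ holds since $d \le \taufwd$. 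The net effect is a shorter bounding step that avoids the paper's $\cos$-ratio lemma and its small-$\theta$ Taylor heuristics, while also getting both halves of the minimum for the same constructed $\alpha^\ast$ directly from the two circle equations; the paper's route, in exchange, exposes the quadratic-in-$\theta$ scaling of $\lambda/\Delta$ that it uses in its asymptotic discussion. Both arguments produce a root exactly on the unit circle, which is all that the lemma's conclusion requires.
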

This lemma shows two important things: first, that the maximal stable step size is still inversely proportional to the delay, even with delay discrepancy; second, that for large values of $\Delta$, in which the delay discrepancy has substasntial effect on the gradient, the largest stable $\alpha$ becomes smaller (although still inversely proportional to $\tau$).
This models the behavior illustrated in Figure~\ref{fig:figDelayDiscrepancy}(a) where adding delay discrepancy exacerbates the divergence phenomenon.

\textbf{The technique.}
As shown, delay discrepancy between the forward and backward passes can exacerbate the problem of divergence.
If we could just compute $\nabla f_t(u_{\text{fwd},t}, u_{\text{fwd},t})$ directly, then this mismatch would not be a problem.
Unfortunately, in our asynchronous PP setting we cannot compute this, as we no longer have $u_{\text{fwd},t}$ in memory by the time the backward pass comes around .
To keep $u_{\text{fwd},t}$ stored in memory is possible, but undesirable as it would greatly increase memory usage (as in PipeDream).
Instead, we decrease the gap between $u_{\text{fwd},t}$ and $u_{\text{bkwd},t}$ by \emph{approximating} $u_{\text{fwd},t}$ 
without storing the full history of model weight values after $u_{\text{fwd},t}$,
using a bit of extra memory to hold an approximation of the velocity of the weights.

\textbf{T2:}
\emph{Instead of the assignment of $u_{\text{bkwd}}$ from Section~\ref{sec:lr_rescheduling}, set
\vspace{-3mm}
\[
	\left( u_{\text{bkwd},t} \right)_i
	=
	\left( w_{t - \tau_{\text{bkwd},i}} \right)_i
	-
	\left( \tau_{\text{fwd},i} - \tau_{\text{bkwd},i} \right) \delta_{t, i},
\]
where $\delta_{t, i}$ is a newly added accumulator that estimates the amount that $w_{i}$ is changing over time.
It is kept up to date by the update step
$\delta_{t + 1, i} = \gamma_i \cdot \delta_{t, i} + (1 - \gamma_i) \cdot \left( w_{t+1,i} - w_{t,i} \right)$,
where $\gamma_i$ is a decay rate parameter, assigned per-stage to
$\gamma_i = \operatorname{D}^{1/(\tau_{\text{fwd},i} - \tau_{\text{bkwd},i})}$,
where $\operatorname{D}$ is a tunable global hyperparameter.}

Essentially, this technique adjusts the value of the weights used in the backward pass by extrapolating what the weights were during the forward pass based on the recent average trajectory of the weights.
Applying T2 on the quadratic model also results in an update step that can be modeled with a companion matrix; we analyzed this system---just as before---by considering that companion matrix's eigenvalues.
Doing this, we observed that 
T2 seems to increase the allowable range of $\alpha$ for which the quadratic model is stable.
This is illustrated in Figure~\ref{fig:figDelayDiscrepancy}(b).

\vspace{-2mm}
\section{Experiments}
\label{sec:experiments}

We evaluate PipeMare
on two standard image recognition tasks and neural machine translation
tasks.  Our evaluation supports the following two main claims:
\begin{itemize}[nosep, itemindent=24pt,leftmargin=0pt]
\item \emph{PipeMare enables more efficient end-to-end training.}  We show that across two image recognition and two neural machine translation tasks, PipeMare can attain up to $4.6\times$ higher pipeline utilization over the synchronous GPipe; we also show that PipeMare can attain a final model quality that PipeDream cannot reach, while using up to $2\times$ less weight and optimizer memory.
\item \emph{PipeMare achieves final model qualities similar to those attained by synchronous training.} \update{We show that PipeMare can achieve
a final model accuracy within $0.1\%$ of synchronous baselines on image recognition tasks and match  the BLEU score of synchronous baselines on neural machine translation tasks.} 
\end{itemize}

\textbf{Warmup Epochs (W).}\hspace{1em}
In some cases, the statistical-efficiency hardware-efficiency tradeoff PipeMare presents is too coarse-grained.
Here, we use the standard technique of running a number of \emph{warmup epochs} of the baseline method before switching to using PipeMare.
This is another way to trade off hardware efficiency (since the warmup epochs are less efficient) for statistical efficiency.
Concretely, we initialize with $E_w$ epochs of synchronous (GPipe-style) pipeline-parallel SGD using the standard learning rate.
We call this modified method \textbf{PipeMareW}.

\newcolumntype{H}{>{\setbox0=\hbox\bgroup}c<{\egroup}@{}}

\begin{table}[t]
	\centering
		\setlength{\tabcolsep}{3.5pt}
	\ifarxiv
	\else
	\small
	\fi
	\caption{Comparison of statistical efficiency (metric), pipeline utilization, and weight+optimizer memory of PipeMare and baselines. Here we use top-1 accuracy or BLEU score as the metrics for CIFAR10/ImageNet and IWSLT/WMT respectively.
	}
	\ifarxiv
	\begin{tabular}{l  l  c  H  H  H  c  c  c  c}
	\else
	\begin{tabular}{l  l  r  H  H  H  r  r  r  r}
	\fi
		\toprule
		\ifarxiv
		Dataset & Method & Best metric & Target metric & \makecell{Speedup\\to Target} & \makecell{Epochs\\to Target} & \makecell{Pipe. Util.} & \makecell{Weight+optimizer\\Memory} \\
		\else
		Dataset & Method & Metric & Target metric & Speedup to Target & Epochs to Target & \makecell{Pipe. Util.} & Memory \\
		\fi
		\midrule
		\multirow{3}{*}{CIFAR10} 
                              & PipeDream 	& 94.8 			& \multirow{3}{*}{94.0}	& \textbf{3.3X} & \textbf{82} 	& \textbf{100\%} & 2.70X \\
                               & GPipe     	& \textbf{95.0} &  						& 1.0X  		& 83 	  		& 23\%  		& \textbf{1X} (270MB) \\
	                            & PipeMare  & \textbf{95.0} &                       & \textbf{3.3X} & \textbf{82} 	& \textbf{100\%} & 1.33X \\
		\midrule
    \multirow{4}{*}{ImageNet} 
    	                        & PipeDream & 74.7 		& \multirow{3}{*}{75.4}	& --- 	& --- 	& \textbf{100\%}	& 1.61X \\
    	                        & GPipe     & \textbf{76.0} &  						& 1.0X  		& \textbf{70} 	   	& 13\%  		& \textbf{1X} (293MB) \\
	                            & PipeMare  & 75.5 &                       			&\textbf{2.5X} 	& 94 		& \textbf{100\%}   & 1.33X \\
	                            & PipeMareW  & 75.9 &                       			& 1.9X 	& 91 		& 33\%   & 1.33X \\
	  \midrule
    \multirow{4}{*}{IWSLT14} 
                              & PipeDream 	& 0.0  &  \multirow{3}{*}{34.1}	& ---  			& --- 		    & \textbf{100\%}           & 2.06X \\
                              & GPipe     	& \textbf{34.5} & 				& 1.0X  	    & \textbf{30} 	& 17\%  		          & \textbf{1X} (0.65GB) \\
  	                          & PipeMare  	& 34.1 &              	& \textbf{1.6X} & 60 			& \textbf{100\%}  		 		  & 1.25X\\
  	                          & PipeMareW  	& \textbf{34.5} &              	& \textbf{1.7X} & 35 			& 55\%  		 		  & 1.25X\\
	  \midrule
    \multirow{4}{*}{WMT17}    
        	                    & PipeDream  	& 0.0 & \multirow{3}{*}{27.4}	& ---  	& --- 		& \textbf{100\%}		& 2.39X \\
        	                    & GPipe     	& 27.5 & 					& 1.0X  		& \textbf{50} 	& 56\%  		    & \textbf{1X} (1.01GB) \\
  	                            & PipeMare  	& 27.0 &                       	    & NA & 68 			& \textbf{100\%} 	& 1.25X \\  	              
  	                            & PipeMareW  	& \textbf{27.8} &                       	    & \textbf{2.6X} & 54 			& 96\% 	& 1.25X \\
		\bottomrule
	\end{tabular}
	\label{tab:tradeoffs_macro}
\vspace{-5mm}
\end{table}

\subsection{Experimental Setting}

We overview the details of our experimental setup and refer the reader to \Cref{app:exp_app}
for the exact details.

\begin{figure*}[t]
\includegraphics[width=\linewidth]{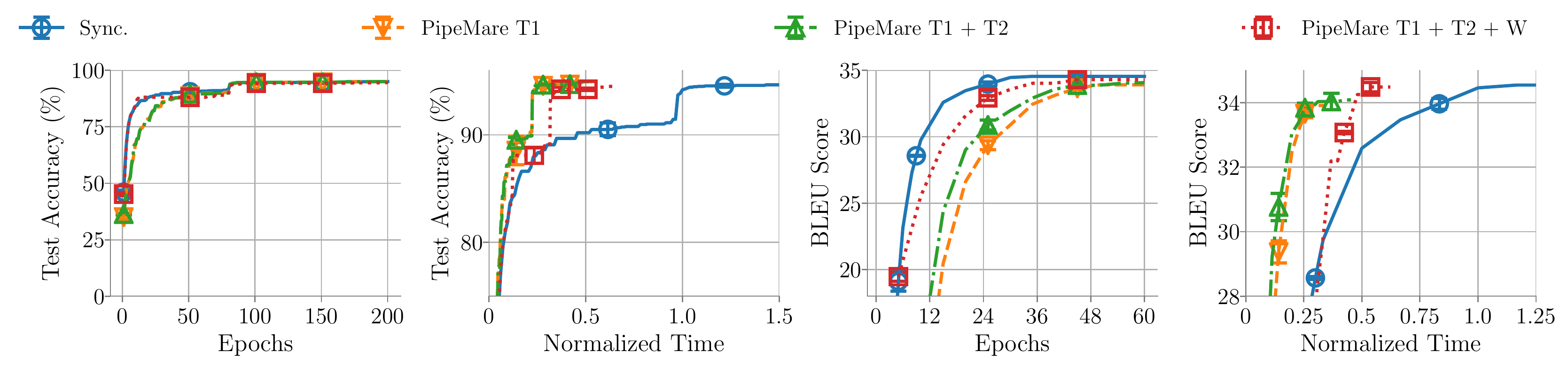}
\vspace{-8mm}
\caption{Tradeoffs when incrementally combining PipeMare techniques (T1, T2, and W) with ResNet50 on Cifar10 (leftmost two figures) and Transformer on IWSLT (rightmost two figures). We set the number of pipeline stages in the ResNet50 and the 12-layer Transformer models to $107$ stages and $93$ which is the number of pipeline stages when each model weight is treated at its own stage (or the finest granularity). Normalized time is computed using pipeline utilization and number of epochs, providing a proxy for the idealized time on an accelerator. Runs are stable across seeds indicated via the (negligible) error bars in each plot.}
\vspace{-4mm}
\label{fig:trade-off-1x}
\end{figure*}

\textbf{Setup.}\hspace{1em} 
Since the purpose of this paper is to determine if asynchronous PP
is feasible statistically (and, if successful, influence future hardware accelerator designs to reap the hardware benefits), we built a custom optimizer in PyTorch that simulates 
the exact asynchronous updates (via a queue of stale weights) that would occur
during fine-grained PP training. Using this, we report the pipeline utilization, weight and optimizer memory, and best accuracy (or 
BLEU score) on each benchmark. The pipeline utilization and memory we report are calculated
using the formulas we present in \Cref{tab:taus}. We report the averaged model accuracy from runs with three different random seeds. 

\textbf{Benchmarks.}\hspace{1em}We benchmark a ResNet50 model~\cite{he2016deep} for image classification and the 12-layer Transformer model~\cite{vaswani2017attention} 
to benchmark neural machine translation: each represents standard benchmarks in 
their respective domains \cite{mlperf}. We use the standard CIFAR10 and ImageNet datasets for image classification, and popular IWSLT14 German-to-English and WMT17 English-to-German dataset for neural machine translation. 
For image classification, we use test set accuracy as the model accuracy metric while in translation tasks we use test BLEU score. We compare PipeMare to two synchronous (baseline) PP
training methods: GPipe and PipeDream. We report in detail on the two non-standard hyperparameters we had to select next (microbatch size and number of pipeline stages). For all other hyperparameters we use standard, publicly available hyperparameters (see \Cref{app:model_dataset}) for each of these two popular models.

\textbf{Microbatch Size.}\hspace{1em}For microbatch size ($M$) we always select a value that is as small as possible. This has two main benefits: (1) it saves activation memory 
 and (2) it results in less gradient delay $\tau_{fwd}$ given a fixed number of pipeline stages (more microbatches per minibatch). In more detail, we choose $M=8$ for ResNet50 on CIFAR10 and $M=16$ for ResNet50 on ImageNet as smaller $M$, in both cases, cause problems in batch normalization~\cite{yuxin2018groupnorm} layers. For Transformer on IWSLT14 we choose the maximum tokens per microbatch to be 245 because this is the number of tokens in the longest sentence (and therefore is the smallest size that does not lose information within a sentence). On WMT17, we used a maximum tokens per microbatch of 1792 for both PipeDream and PipeMare. We choose this as it was the smallest size that provided reasonable hardware utilization on Nvidia V100 GPUs---enabling results to be produced within a reasonable timeframe and budget. To be fair, for GPipe on WMT17 we calculated their pipeline utilization using a maximum tokens per microbatch of 251 (longest sentence in WMT17)---maximizing their utilization and without impacting statistical efficiency.

\textbf{Pipeline Stages.}\hspace{1em}
 \ifarxiv
 \update{To partition the model into pipeline stages during training, we traverse model weights according to their topological order in the computation graph, always treating the weight and bias in the same layer as a single model weight (i.e. always in the same pipeline stage). Next, we divide these model weights evenly into P stages to split model weights across pipeline stages. This represents a very fine granularity of pipeline
 parallelism which is a difficult one to train. Specifically, with ResNet50 we use 107 stages and with 12-layer Transformer we use 91 or 93 stages\footnote{Transformer model for WMT17 employs shared embedding between encoder, decoder and projection while IWSLT14 has independent embeddings.}; these are the largest number of stages with at least one model weight assigned to each pipeline stage.}
 \else
 To select the number of pipeline stages during training, we traverse model weights according to their topological order in the graph, always treating weight and bias as a single weight (i.e. always in the same pipeline stage). Next, we divide these weights evenly into P stages to split model weights across pipeline stages. This represents a very fine granularity of pipeline
 parallelism to extract on hardware (and a difficult one to train). Specifically, with ResNet50 we use 107 stages and with 12-layer Transformer we use 93 stages.
 \fi
\vspace{-4mm}
\paragraph{Implementation Details} All experiments are run using a simulator we built in PyTorch and on AWS p3.2xlarge instances (Nvidia V100 GPUs). Our simulator maintains a queue of weight values over time to simulate the exact delay one would see when running fine-grained PP.

\subsection{End-to-End Comparison}
\label{sec:empirical_eval}
We compare the asynchronous PipeMare training method to the synchronous GPipe and PipeDream methods on both image classification and machine translation tasks. In Table~\ref{tab:tradeoffs_macro} we show that on both of these tasks PipeMare achieves higher pipeline utilization while achieving comparable final model qualities---\update{the greatest difference being a 0.1\% top-1 accuracy difference on ImageNet.}

\textbf{Image classification tasks}\hspace{1em}As shown in Table~\ref{tab:tradeoffs_macro}, on both the CIFAR10 and ImageNet dataset, PipeMare can respectively achieve \update{$4.3\times$} higher pipeline utilization than GPipe. Note that on CIFAR10, PipeMare attains a perfect pipeline utilization of 100\% because we do not need any warmup epochs here.  Though PipeDream attains the same pipeline utilization as PipeMare here, PipeDream requires $2.7\times$ more weight and optimizer memory (see Table~\ref{tab:tradeoffs_macro}). PipeMareW has 0.1\% accuracy gap with GPipe on ImageNet while achieving 2.5x higher pipeline utilization even though it uses 30 synchronous warmup epochs. 

\textbf{Neural machine translation tasks}\hspace{1em}As demonstrated in Table~\ref{tab:tradeoffs_macro}, \update{PipeMareW can achieve 3.2$\times$ and 1.7$\times$ higher pipeline utilization than GPipe on the IWSLT14 and WMT17 datasets.  When comparing PipeMare(W) to the PipeDream approach, we observe that PipeDream fails to train Transformer even though it uses $>$2$\times$ more weight and optimizer memory than PipeMare. On the other hand GPipe trains the model fine but sacrifices either pipeline utilization or activation memory to maintain its statistical efficiency. 
Because we use PipeMareW (warmup epochs) on both the IWSLT14 (10 warmup epochs) and WMT17 (4 warmup epoch) experiments respectively the amortized pipeline utilization of PipeMare is smaller than $1$}. Still, by combining our techniques we improve the pipeline utilization and memory usage, when compared to previous PP techniques, with no loss in statistical performance. 

\begin{table}[t]
	\centering
	\ifarxiv
	\else
	\small
	\fi
	\setlength{\tabcolsep}{2.7pt}
	\caption{Ablation study of PipeMare. We show the impact of the learning rate rescheduling (T1), discrepancy correction (T2), and warmup epochs (W) on metrics (test accuracy or BLEU score) of interest. Note that warmup epochs were not necessary on the CIFAR10 dataset to recover the performance attained by GPipe.
}
	\ifarxiv
	\begin{tabular}{l l  c H H H c c c}
	\else
	\begin{tabular}{l l  r  H  H  H  r  r  r}
	\fi
		\toprule
		\ifarxiv
		Dataset & Method & Best metric & Target metric & \makecell{Speedup\\to Target} & \makecell{Epochs\\to Target} & \makecell{Pipe. Util.} & \makecell{Weight+optimizer\\Memory} \\
		\else
		Dataset & Method & Metric & Target metric & Speedup to Target & Epochs to Target & \makecell{Pipe. Util.} & Memory \\
		\fi
		\midrule
    \multirow{3}{*}{CIFAR10}
                             & T1 Only            & \textbf{95.0}  & \multirow{3}{*}{94.0}   & \textbf{3.3X}   		& 83 			& \textbf{100\%}   		 & \textbf{1X} (270MB) \\
                             & T2 Only            & 94.5  &                        			 & 3.2X  		& 86 		 	& \textbf{100\%}   		 & 1.33X \\
                             & T1+T2              & \textbf{95.0}  &                		 & \textbf{3.3X} & \textbf{82} 	& \textbf{100\%}  & 1.33X \\
	 \midrule
   \multirow{4}{*}{IWSLT14} 
                             & T1 Only          & 34.1  & \multirow{4}{*}{34.1}   	& 1.6X  		& 60 			& \textbf{100\%}   		& \textbf{1X} (0.65GB) \\
                             & T2 Only          & 0.0   &                    		& -  			& -  			& \textbf{100\%}   		& 1.25X \\
  	 						 & T1 + T2 Only 	& 34.1  &                        	& 1.6X 			& 60 			& \textbf{100\%} & 1.25X \\	
  	 						 & T1 + T2 + W     & \textbf{34.5}  &      		  	& \textbf{1.7X} & \textbf{35}   & 55\%   		& 1.25X \\
	 \bottomrule
	\end{tabular}
	\label{tab:tradeoffs_micro_best_metric}
	\vspace{-3mm}
\end{table}

\subsection{Ablation study}
\label{sec:ablation}
To understand the contribution of each technique to the performance of PipeMare, we perform ablation studies on PipeMare with respect to memory, pipeline utilization, and model quality. We show that each technique is necessary for PipeMare to outperform synchronous techniques from both a hardware and statistical efficiency perspective. This study is summarized in \Cref{fig:trade-off-1x,tab:tradeoffs_micro_best_metric}
\vspace{-2.5mm}
\paragraph{Learning rate rescheduling (T1)}
The asynchronous PP training method with only learning rate rescheduling fully utilizes the compute power by avoiding both bubbles in the execution pipeline and additional weight memory. Therefore it achieves optimal hardware efficiency when compared to any other approach. 
In \Cref{tab:tradeoffs_micro_best_metric} we show that this alone can achieve a test accuracy of $95.0\%$ and a test BLEU score of $34.1$, both of which are competitive to the baseline $95.0\%$ accuracy and $34.5$ BLEU score of synchronous methods. In terms of pipeline utilization, learning rate rescheduling alone achieves $7.6\times$ improvement over GPipe---indicating its important role in improving statistical efficiency as well as hardware efficiency.  
 For ResNet50 on CIFAR10, the test accuracy of asynchronous training with learning rate rescheduling matches that of synchronous training while asynchronous training without it diverges---emphasizing the importance of T1 during synchronous training. For the Transformer model, T1 takes about twice as many epochs of synchronous training to reach BLEU score $34.1$ while asynchronous training without T1 achieves a test BLEU score $\leq1$. 
\vspace{-1.5mm}
\paragraph{Discrepancy Correction (T2)}
As shown in \cref{tab:tradeoffs_micro_best_metric}, discrepancy correction in isolation achieves a test accuracy of $94.5\%$ for ResNet 50 and a jarring $0.0$ test BLEU score on the Transformer model. The poor Transformer model training is fixed by combining discrepancy correction with learning rate rescheduling,
though the final BLEU score achieved is the same as in the learning rate scheduling only setting ($34.1$). Discrepancy correction with learning rate
rescheduling shines on the convergence speed of both models, especially Transformer model on IWSLT14, as is seen in \Cref{fig:trade-off-1x} and ~\Cref{fig:trade-off-2x}. This of course comes at the cost of using more weight memory. We find this cost to be minimal ($33\%$ more for SGD with momentum and $25\%$ more for ADAM) when compared
to the final model quality improvements from using this technique in conjunction with learning rate rescheduling. To further validate the efficacy of discrepancy correction, in~\Cref{app:ablation}, we show that on a ResNet 152 model with 150 stages discrepancy correction is necessary to prevent divergence and match the model accuracy attained by synchronous training. 
\vspace{-1.5mm}
\paragraph{Warmup Epochs (PipeMareW)}
As shown in Table~\ref{tab:tradeoffs_micro_best_metric} learning rate rescheduling and discrepancy correction leave a noticeable BLUE score gap ($0.4$) for the Transformer model running on the IWSLT14 dataset. To close this gap PipeMareW adds 10 synchronous warmup epochs. As shown in \cref{tab:tradeoffs_micro_best_metric}, the best BLEU score attained by asynchronous training is boosted from $34.1$ to $34.5$. This of course comes at the cost of decreasing the overall pipeline utilization from 100\% to 55\%, which still enables PipeMare to outperform its baselines (higher pipeline utilization than GPipe and PipeDream does not converge). 

\section{Conclusion}
\label{sec:conclusion}

In this paper, we presented PipeMare, a system for asynchronous pipeline-parallel training of DNN models.
PipeMare uses a bubble-free PP hardware model along with two theoretically motivated techniques (learning rate rescheduling and discrepancy correction) which help improve statistical efficiency.
Experimentally, we showed PipeMare has better hardware efficiency (pipeline utilization and memory) than competing algorithms. We hope that this will make PipeMare a promising candidate algorithm for the new generation of hardware chips designed for training DNNs.
 
\bibliographystyle{icml2020}
\bibliography{references}

\begin{thebibliography}{23}
\providecommand{\natexlab}[1]{#1}
\providecommand{\url}[1]{\texttt{#1}}
\expandafter\ifx\csname urlstyle\endcsname\relax
  \providecommand{\doi}[1]{doi: #1}\else
  \providecommand{\doi}{doi: \begingroup \urlstyle{rm}\Url}\fi

\bibitem[mlp(2019)]{mlperf}
Fair and useful benchmarks for measuring training and inference performance of
  ml hardware, software, and services.
\newblock 2019.
\newblock URL \url{https://mlperf.org/}.

\bibitem[Chang \& Lin(2011)Chang and Lin]{CC01a}
Chang, C.-C. and Lin, C.-J.
\newblock {LIBSVM}: A library for support vector machines.
\newblock \emph{ACM Transactions on Intelligent Systems and Technology},
  2:\penalty0 27:1--27:27, 2011.
\newblock Software available at \url{http://www.csie.ntu.edu.tw/~cjlin/libsvm}.

\bibitem[Chen et~al.(2016{\natexlab{a}})Chen, Xu, Zhang, and
  Guestrin]{chen2016training}
Chen, T., Xu, B., Zhang, C., and Guestrin, C.
\newblock Training deep nets with sublinear memory cost.
\newblock \emph{arXiv preprint arXiv:1604.06174}, 2016{\natexlab{a}}.

\bibitem[Chen et~al.(2016{\natexlab{b}})Chen, Xu, Zhang, and
  Guestrin]{recompute}
Chen, T., Xu, B., Zhang, C., and Guestrin, C.
\newblock Training deep nets with sublinear memory cost.
\newblock \emph{CoRR}, abs/1604.06174, 2016{\natexlab{b}}.
\newblock URL \url{http://arxiv.org/abs/1604.06174}.

\bibitem[De~Sa et~al.(2015)De~Sa, Zhang, Olukotun, and R{\'e}]{de2015taming}
De~Sa, C.~M., Zhang, C., Olukotun, K., and R{\'e}, C.
\newblock Taming the wild: A unified analysis of hogwild-style algorithms.
\newblock In \emph{Advances in neural information processing systems}, pp.\
  2674--2682, 2015.

\bibitem[Feldman(2019)]{cerebras}
Feldman, A.
\newblock Cerebras wafer scale engine: An introduction.
\newblock 2019.
\newblock URL
  \url{https://www.cerebras.net/wp-content/uploads/2019/08/Cerebras-Wafer-Scale-Engine-Whitepaper.pdf}.

\bibitem[Harlap et~al.(2018)Harlap, Narayanan, Phanishayee, Seshadri, Devanur,
  Ganger, and Gibbons]{harlap2018pipedream}
Harlap, A., Narayanan, D., Phanishayee, A., Seshadri, V., Devanur, N., Ganger,
  G., and Gibbons, P.
\newblock Pipe{D}ream: Fast and efficient pipeline parallel {DNN} training.
\newblock \emph{arXiv preprint arXiv:1806.03377}, 2018.

\bibitem[He et~al.(2016)He, Zhang, Ren, and Sun]{he2016deep}
He, K., Zhang, X., Ren, S., and Sun, J.
\newblock Deep residual learning for image recognition.
\newblock In \emph{Proceedings of the IEEE conference on computer vision and
  pattern recognition}, pp.\  770--778, 2016.

\bibitem[Huang et~al.(2018)Huang, Cheng, Chen, Lee, Ngiam, Le, and
  Chen]{huang2018gpipe}
Huang, Y., Cheng, Y., Chen, D., Lee, H., Ngiam, J., Le, Q.~V., and Chen, Z.
\newblock Gpipe: Efficient training of giant neural networks using pipeline
  parallelism.
\newblock \emph{arXiv preprint arXiv:1811.06965}, 2018.

\bibitem[Jouppi et~al.(2017)Jouppi, Young, Patil, Patterson, Agrawal, Bajwa,
  Bates, Bhatia, Boden, Borchers, et~al.]{jouppi2017datacenter}
Jouppi, N.~P., Young, C., Patil, N., Patterson, D., Agrawal, G., Bajwa, R.,
  Bates, S., Bhatia, S., Boden, N., Borchers, A., et~al.
\newblock In-datacenter performance analysis of a tensor processing unit.
\newblock In \emph{2017 ACM/IEEE 44th Annual International Symposium on
  Computer Architecture (ISCA)}, pp.\  1--12. IEEE, 2017.

\bibitem[Kurth et~al.(2017)Kurth, Zhang, Satish, Racah, Mitliagkas, Patwary,
  Malas, Sundaram, Bhimji, Smorkalov, et~al.]{kurth2017deep}
Kurth, T., Zhang, J., Satish, N., Racah, E., Mitliagkas, I., Patwary, M. M.~A.,
  Malas, T., Sundaram, N., Bhimji, W., Smorkalov, M., et~al.
\newblock Deep learning at 15pf: supervised and semi-supervised classification
  for scientific data.
\newblock In \emph{Proceedings of the International Conference for High
  Performance Computing, Networking, Storage and Analysis}, pp.\ ~7. ACM, 2017.

\bibitem[Liu et~al.(2019)Liu, Ott, Goyal, Du, Joshi, Chen, Levy, Lewis,
  Zettlemoyer, and Stoyanov]{roberta}
Liu, Y., Ott, M., Goyal, N., Du, J., Joshi, M., Chen, D., Levy, O., Lewis, M.,
  Zettlemoyer, L., and Stoyanov, V.
\newblock Roberta: {A} robustly optimized {BERT} pretraining approach.
\newblock \emph{CoRR}, abs/1907.11692, 2019.
\newblock URL \url{http://arxiv.org/abs/1907.11692}.

\bibitem[Mitliagkas et~al.(2016)Mitliagkas, Zhang, Hadjis, and
  R{\'e}]{mitliagkas2016asynchrony}
Mitliagkas, I., Zhang, C., Hadjis, S., and R{\'e}, C.
\newblock Asynchrony begets momentum, with an application to deep learning.
\newblock In \emph{2016 54th Annual Allerton Conference on Communication,
  Control, and Computing (Allerton)}, pp.\  997--1004. IEEE, 2016.

\bibitem[Recht et~al.(2011{\natexlab{a}})Recht, Re, Wright, and
  Niu]{recht2011hogwild}
Recht, B., Re, C., Wright, S., and Niu, F.
\newblock Hogwild: A lock-free approach to parallelizing stochastic gradient
  descent.
\newblock In \emph{Advances in neural information processing systems}, pp.\
  693--701, 2011{\natexlab{a}}.

\bibitem[Recht et~al.(2011{\natexlab{b}})Recht, R{\'{e}}, Wright, and
  Niu]{hogwild}
Recht, B., R{\'{e}}, C., Wright, S.~J., and Niu, F.
\newblock Hogwild: {A} lock-free approach to parallelizing stochastic gradient
  descent.
\newblock In \emph{Advances in Neural Information Processing Systems 24: 25th
  Annual Conference on Neural Information Processing Systems 2011. Proceedings
  of a meeting held 12-14 December 2011, Granada, Spain.}, pp.\  693--701,
  2011{\natexlab{b}}.
\newblock URL
  \url{http://papers.nips.cc/paper/4390-hogwild-a-lock-free-approach-to-parallelizing-stochastic-gradient-descent}.

\bibitem[Sutskever et~al.(2013)Sutskever, Martens, Dahl, and
  Hinton]{sutskever2013importance}
Sutskever, I., Martens, J., Dahl, G., and Hinton, G.
\newblock On the importance of initialization and momentum in deep learning.
\newblock In \emph{International conference on machine learning}, pp.\
  1139--1147, 2013.

\bibitem[Szegedy et~al.(2016)Szegedy, Ioffe, and Vanhoucke]{inception}
Szegedy, C., Ioffe, S., and Vanhoucke, V.
\newblock Inception-v4, inception-resnet and the impact of residual connections
  on learning.
\newblock \emph{CoRR}, abs/1602.07261, 2016.
\newblock URL \url{http://arxiv.org/abs/1602.07261}.

\bibitem[Vaswani et~al.(2017)Vaswani, Shazeer, Parmar, Uszkoreit, Jones, Gomez,
  Kaiser, and Polosukhin]{vaswani2017attention}
Vaswani, A., Shazeer, N., Parmar, N., Uszkoreit, J., Jones, L., Gomez, A.~N.,
  Kaiser, {\L}., and Polosukhin, I.
\newblock Attention is all you need.
\newblock In \emph{Advances in neural information processing systems}, pp.\
  5998--6008, 2017.

\bibitem[Ward-Foxton(2019{\natexlab{a}})]{graphcore}
Ward-Foxton, S.
\newblock Graphcore ceo touts 'most complex processor' ever.
\newblock \emph{EE Times}, 2019{\natexlab{a}}.
\newblock URL \url{{https://www.eetimes.com}}.

\bibitem[Ward-Foxton(2019{\natexlab{b}})]{habana}
Ward-Foxton, S.
\newblock Habana debuts record-breaking ai training chip.
\newblock \emph{EE Times}, 2019{\natexlab{b}}.
\newblock URL \url{{https://www.eetimes.com}}.

\bibitem[Wilson et~al.(2017)Wilson, Roelofs, Stern, Srebro, and
  Recht]{wilson2017marginal}
Wilson, A.~C., Roelofs, R., Stern, M., Srebro, N., and Recht, B.
\newblock The marginal value of adaptive gradient methods in machine learning.
\newblock In \emph{Advances in Neural Information Processing Systems}, pp.\
  4148--4158, 2017.

\bibitem[Yang et~al.(2019)Yang, Dai, Yang, Carbonell, Salakhutdinov, and
  Le]{xlnet}
Yang, Z., Dai, Z., Yang, Y., Carbonell, J.~G., Salakhutdinov, R., and Le, Q.~V.
\newblock Xlnet: Generalized autoregressive pretraining for language
  understanding.
\newblock \emph{CoRR}, abs/1906.08237, 2019.
\newblock URL \url{http://arxiv.org/abs/1906.08237}.

\bibitem[Yuxin~Wu(2018)]{yuxin2018groupnorm}
Yuxin~Wu, K.~H.
\newblock Group normalization.
\newblock \emph{arXiv preprint arXiv:1803.08494}, 2018.

\end{thebibliography}

\clearpage
\newpage
\appendix
\section{Supplementary material for~\Cref{sec:prelim}}
\label{app:prelim}
To better explain the hardware efficiency of the pipeline parallel training methods introduced in~\Cref{sec:methods}, we discuss the memory footprint and the throughput of the introduced methods in more details. Throughout the remainder of the appendix we use normalized throughput instead of pipeline utilization as the two are linearly proportional to each other. In~\Cref{sec:activation_mem}, we first discuss the activation memory which is the major component of memory consumption in pipeline-parallel training. We then propose a new gradient checkpointing method to trade moderate compute for significantly lower activation memory footprint in~\Cref{app:comp_act_tradeoff}, which is applicable to both the synchronous and asynchronous methods introduced in~\Cref{sec:methods}. Finally, we discuss the throughput of the synchronous (GPipe) and asynchronous (PipeDream and PipeMare) methods under the same budget for activation memory and compute (measured in FLOPs), which is used to estimate the time-to-accuracy across the paper. 

To discuss with consistent notations across methods, we define $M$ and $N$ respectively as the activation size per microbatch per neural net layer and the number of microbatches in each minibatch. We assume that we use models with $L$ layers, which are trained using a pipeline with $P$ stages. For clarity and simplicity in exposing the memory footprint and throughput, we assume that the model layers are partitioned equally across stages and the activation memory usage of each layer is the same.

\subsection{Activation Memory}
\label{sec:activation_mem}
\paragraph*{PipeMare and PipeDream}
PipeMare and PipeDream has the same amount of activation memory requirement. This is because in both scenarios, pipeline does not have bubbles or stalls; the activations are cached and utilized with the same pipeline behavior pattern. In particular, the activation memory cached by stage $i$ is proportional to the number of stages between forward and backward, i.e., $O(2(P-i)+1)$. Therefore, the total activation memory is

\begin{equation}
 \Act_{PM} = O(MPL).
 \label{eq:PipeMareMem}
\end{equation}

\paragraph*{GPipe} Here we discuss on the activation memory consumption of GPipe~\cite{huang2018gpipe}. When the activations of every layer in neural nets are cached for backpropagation, by multiplying the activation memory per minibatch per layer $B = MN$ with the number of layers L, we have the activation memory for GPipe as

 \begin{equation}
 	\Act_{GP} = O(MNL).
 	\label{eq: GpipeMem}
 \end{equation}
 
When re-materialization proposed by \cite{huang2018gpipe} is considered, we only need to store the activations of a minibatch at every stage boundary, and recompute the activations for all the layers inside the stage. Therefore the activation memory per stage is $O(MN + M\frac{L}{P})$, with the total activation memory reduced to:

\[
 \tilde{\Act}_{GP} = O(MNP + ML) = O(M(NP + L)).
\]

When $P \ll L$, the saving on activation memory is significant. However, in the fine-grain pipeline-parallel setting when $P \approx L$, the above equation goes back to \cref{eq: GpipeMem} and demonstrates negligible memory savings. This observation motivates us to propose the PipeMare recompute technique in~\Cref{app:comp_act_tradeoff}, which can apply to both synchronous (GPipe) and asynchronous (PipeMare) and effectively reduce the activation memory in fine-grained pipeline training.

\subsection{Trade compute for memory via PipeMare Recompute}
\label{app:comp_act_tradeoff}

In the fine-grain pipeline training setting, we have $P \approx L$. For simplicity in discussion, we assume $P=L$. In this setting, \cref{eq:PipeMareMem} becomes,

\begin{equation}
 \Act_{PM} = O(MP^2).
 \label{eq: PipeMareMemSqrt}
\end{equation}

\begin{figure}
	\centering
	\ifarxiv
	\includegraphics[width=0.4\linewidth]{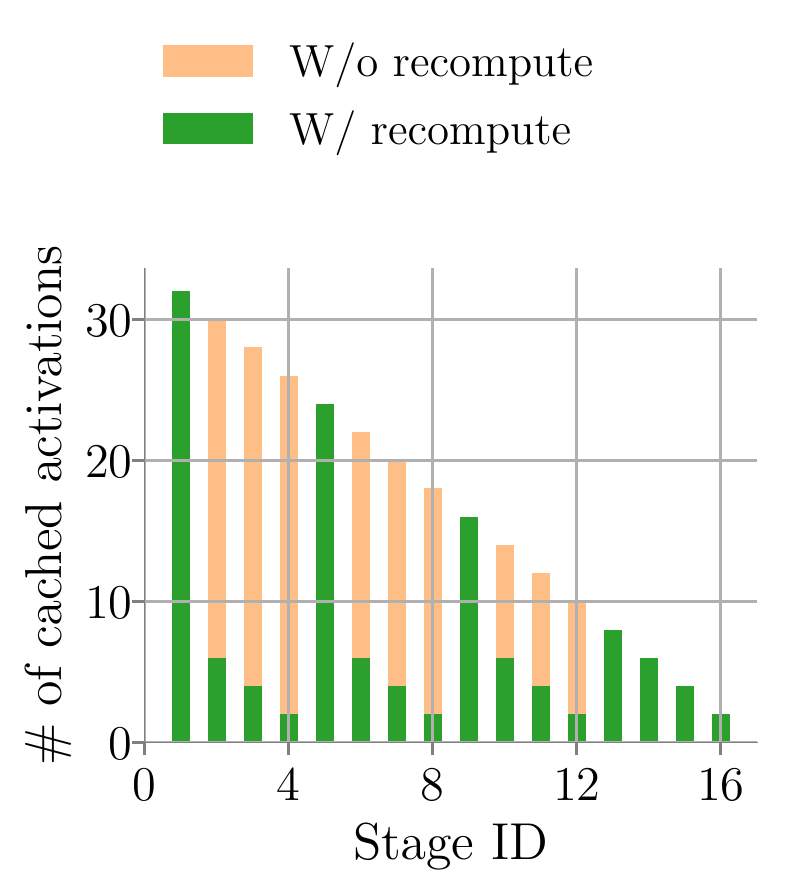}
	\else
	\includegraphics[width=0.9\linewidth]{figures/pipemare_recompute_memory.pdf}
	\fi
	\caption{Activation memory footprint of PipeMare recompute in each pipeline stage. In this plot, we demonstrate the \# of activations at each stage using an example with $16$ stages equally split into $4$ segments. The green bars in the plot stands for the memory consumption of each stage in terms of the number of microbatch activations copies in PipeMare with PipeMare recompute. The orange bars stands for the additional memory required when recompute is not used.}
	\label{fig:Recomputevisual}
\end{figure}

In other words, while throughput increases linearly with number of stages $P$, activation memory can scale quadratically. In order to reduce the memory pressure, here we propose a new way of utilizing recompute, to trade a small amount of compute resources for huge activation memory savings. Instead of recomputing the activations inside each stage \cite{huang2018gpipe}, we propose to recompute the activations across a segment of multiple stages, which we call PipeMare Recompute, to allow effective activation memory reduction in the fine-grain pipeline setting.

PipeMare Recompute utilizes a simple strategy. It recomputes the activation in advance so that the recomputed activation of the last stage in a segment arrives right at the time when the corresponding backpropagation needs to process this activation. Unlike the single-stage recompute proposed in GPipe~\cite{huang2018gpipe}, PipeMare Recompute does not stall the backpropagation operations as it can be overlapped with the forward and backward operations in the same pipeline stage.
To enable this overlap, we need to consume approximately $25\%$ of the total compute resources. Specifically, the pipeline needs to simultaneously compute for the forward, backward and recompute operations, with the backward operations consuming $2\times$ more compute than forward and recompute operations respectively. 

For the simplicity of demonstrating the activation memory saving attained by PipeMare Recompute, we assume $P=L$ in the fine-grain pipeline setting and group the stages into segments each with $S$ stages. Let us assume the $i-th$ stage is the beginning stage of a specific segment, then the memory consumption for this segment is $O(2(P-i) + S^2)$. As visualized in~\Cref{fig:Recomputevisual} for an example with $16$ stages and $4$ segments, the first term $2(P-i)$ in the segment-wise activation memory is for caching activations at the first stage in the segment for recompute. The second term $S^2$ then describes the memory buffers needed for recomputed activations that are used by backward pass (e.g., recompute of $j-th$ stage in a segment needs to start $2(S-j)$ steps earlier before the corresponding gradient arrives at this stage).
Consequently, given the memory consumption in each segment is $O(2(P-i) + S^2)$, the total memory with $P/S$ segments is determined by

 \[
  \Act_{PM}^r(S) = O\left(M(P + S^2) \cdot \frac{P}{S}\right) = O\left(MP(\frac{P}{S} + S)\right).
 \]

 When $S = \sqrt{P}$, we can get the minimum memory consumption,

 \begin{equation}
  \Act_{PM}^r = O\left(MP^{\frac{3}{2}}\right).
  \label{eq: PipeMareRecMem}
 \end{equation}

Note the quadratic dependency on $P$ in~\Cref{eq: PipeMareMemSqrt} is reduced to a power of $\frac{3}{2}$, indicating a significantly lowered activation memory in the fine-grain pipeline-parallelism with large $P$ values. 
 
We can similarly apply the PipeMare recompute technique to GPipe as well. In order to overlap recompute with forward and backward pass, each stage (except the first stage) in a segment needs to cache the same amount of activations as those of PipeMare. Whereas for the first stage in each segment, it needs to cache $N$ instead of $2(P-i)$ activations. This is because GPipe stalls at the boundary of minibatch, and there are $N$ microbatches to be processed in the minibatch. That being said, the activation memory of GPipe is $O(M(N + S^2)\cdot \frac{P}{S})$. Thus when $S = \sqrt{N}$, the minimum activation memory footprint of GPipe is

 \begin{equation}
  \Act_{GP}^r = O\left(MPN^{\frac{1}{2}}\right).
  \label{eq: GPipeRecMem}
 \end{equation}

We summarize the activation memory consumption with and without recompute for GPipe, PipeDream and PipeMare in~\Cref{tab: recompute_mem}. We can observe that for both synchronous and asynchronous pipeline-parallel training, the PipeMare Recompute can significantly reduce the activation memory in the fine-grain pipeline parallelism with large number of stages. The concrete activation memory saving of PipeMare on various tasks discussed in main text is shown in ~\cref{tab: activation memory saving}.

\begin{table*}[t]
	\centering
	\small
	\begin{tabular}{c c c}
	\toprule
	Mode & w/o PipeMare Recompute & w/ PipeMare Recompute \\
	\midrule
	GPipe & $MPN$ & $MPN^{\frac{1}{2}}$ \\
 	PipeMare/PipeDream & $MP^2$ & $MP^{\frac{3}{2}}$ \\
	\bottomrule	
	\end{tabular}
	\caption{Activation memory requirement by GPipe, PipeDream and PipeMare. Here we assume the total number of pipeline stages is the same as total neural network layers/operators, i.e., $P = L$. Note the activation memories for PipeMare and PipeDream are the same.}
	\label{tab: recompute_mem}	
\end{table*}

\begin{table*}[t]
  \centering
  \small
	\begin{tabular}{c c c c}
		\toprule
		Dataset & number of stages & Activation memory without recompute & Activation memory with recompute \\
		\midrule
		CIFAR10  & 107 & \multirow{4}{*}{1X} & \textbf{0.097X} \\
    ImageNet & 107 &                     &  \textbf{0.097X} \\
    IWSLT14  & 93  &                     &  \textbf{0.104X} \\
    WMT17    & 91  &                     &  \textbf{0.105X} \\
		\bottomrule
	\end{tabular}
  \caption{Activation memory of PipeMare for various tasks. Activation memory can be significantly reduced by using PipeMare Recompute.}
	\label{tab: activation memory saving}
\end{table*}

\section{Supplementary material for~\Cref{sec:theory}}
\label{app:theory}

\subsection{Motivating examples in deep learning}
\label{app:dl_motivate}

\begin{figure}[t]
\centering
\ifarxiv
\includegraphics[width=0.7\linewidth]{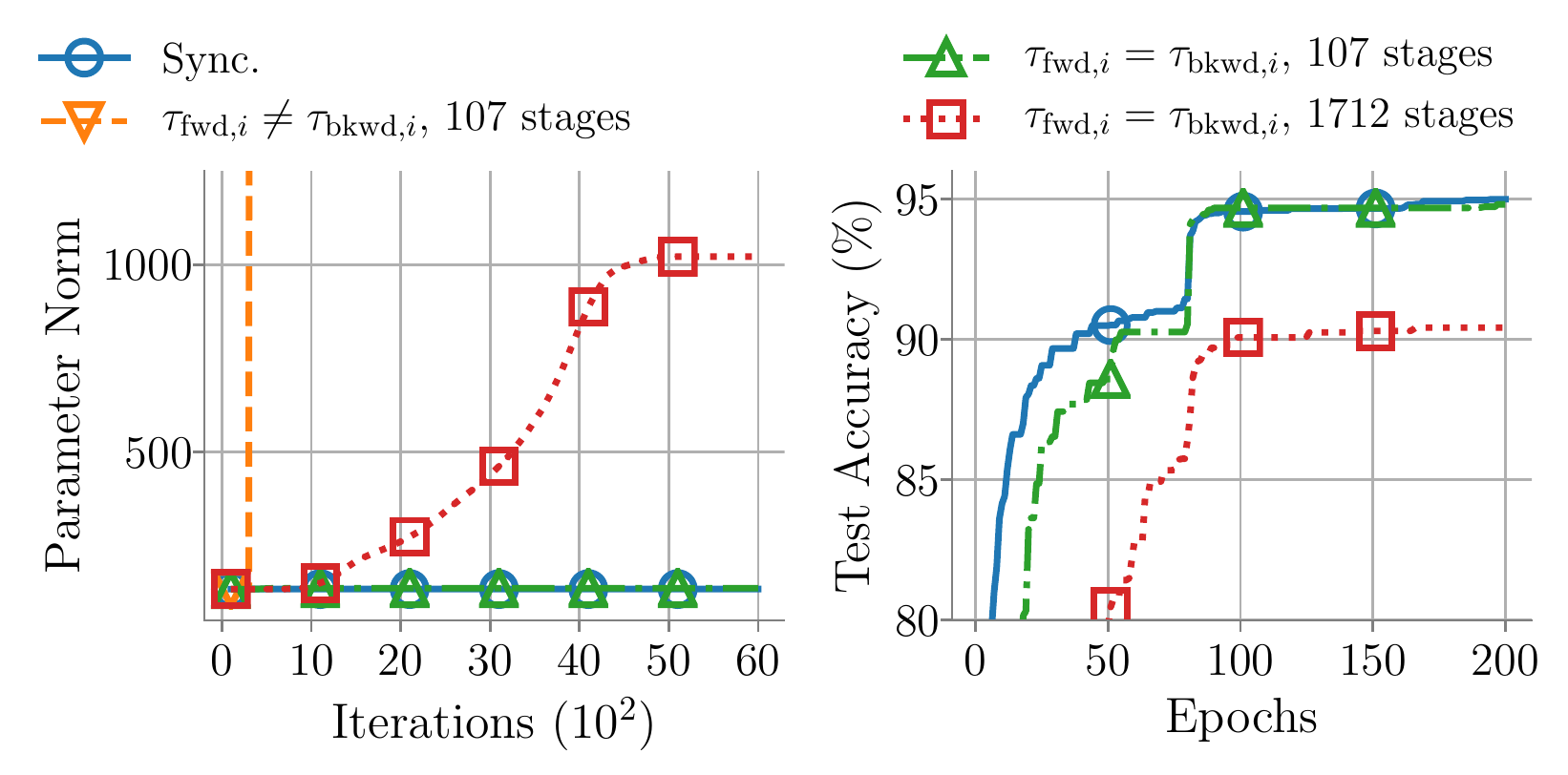}
\else
\includegraphics[width=0.99\linewidth]{figures/pipedream-test-accuracy-param-norm-cifar10.pdf}
\fi

\caption{Analysis on the divergence for asynchronous pipeline-parallel training: the divergence is caused by the forward delay $\tau_{\text{fwd},i}$; it is further exacerbated by forward-backward delay discrepancy when $\tau_{\text{fwd},i} \neq \tau_{\text{bkwd},i}$. Specifically in the left plot, we observe that using 1712 stages without forward-backward delay discrepancy, asynchronous training diverges at the beginning. 
We also observe that with 107 stages, asynchronous training diverges with forward-backward delay discrepancy, while it does not diverge without forward-backward delay discrepancy; this indicates that delay discrepancy can exacerbate the divergence behavior. These observations motivates us to explore the technique to stabilize asynchronous pipeline-parallel training.} 
\label{fig:theoryprelim}
\end{figure}

Figure~\ref{fig:theoryprelim}(a) illustrates that, just as we saw for the quadratic model, pipeline-parallel SGD can not be run naively with the same hyperparameters as would be used in the baseline model, since this would significantly negatively impact loss. 
Figure~\ref{fig:theoryprelim}(b) shows why: pipeline-parallel SGD is \emph{diverging} to infinity, completely failing to learn, even for a step size scheme for which the sequential model achieves state-of-the-art results.
This matches our results on the quadratic model.
For Resnet50 with standard hyperparameters, Figure~\ref{fig:theoryprelim} shows that this phenomenon is caused by the delay: the red series shows that, even when $\tau_{\text{fwd},i} = \tau_{\text{bkwd},i}$ in simulation, substantially large fixed delay can cause the system to diverge.
Figure~\ref{fig:theoryprelim} also illustrates that this divergence is exacerbated by forward-backward delay discrepancy: the orange series shows that even when the learning rate and delay $\tau_{\text{fwd},i}$ are kept the same, adding delay discrepancy can cause the algorithm to diverge.

\subsection{Proof of Lemma~\ref{lemmaRootsGD}}
We start by trying to find the $\alpha$ for which $p$ has a complex root on the unit circle.
Note that since $(1 - iy)/(1 + iy)$ always lies on the unit circle for any $y \in \R$, it suffices to find $\alpha$ for which
\[
	0
	=
	p\left( \frac{1 - iy}{1 + iy} \right)
	=
	\left( \left( \frac{1 - iy}{1 + iy} \right) - 1 \right) \left( \frac{1 - iy}{1 + iy} \right)^\tau + \alpha \lambda.
\]
for some $y > 0$.
After a little simplification, this becomes
\begin{equation}
	2iy \cdot \left( 1 - iy \right)^\tau
	=
	\alpha \lambda \cdot \left( 1 + iy \right)^{\tau + 1}.
	\label{eqnPolyGD1}
\end{equation}
Next, we take the argument. Since $y$, $\alpha$, and $\lambda$ are real and positive, for some $n \in \Z$,
\[
	\frac{\pi}{2} + 2 \pi n + \tau \Arg\left( 1 - iy \right) = (\tau + 1) \Arg\left( 1 + iy \right),
\]
which implies that, since $\Arg\left( 1 - iy \right) = -\Arg\left( 1 + iy \right)$,
\[
	\Arg\left( 1 + iy \right) = \frac{\pi + 4 \pi n}{4 \tau + 2}.
\]
This uniquely determines the value of $y$, because $y = \tan \Arg(1 + iy)$.
To get the corresponding value of $\alpha$, notice that if we take the magnitude of (\ref{eqnPolyGD1}), it simplifies to
\[
	\alpha \lambda = \frac{|2iy|}{|1 + i y|} = \frac{2y}{\sqrt{1 + y^2}} = 2 \sin \Arg(1 + i y),
\]
so there can be a point on the unit circle when
\[
	\alpha = \frac{2}{\lambda} \cdot \sin\left( \frac{\pi + 4 \pi n}{4 \tau + 2} \right)
\]
for any $n \in Z$.
The lemma statement now follows directly from a root-counting argument.

The main components of the root-counting argument are as follows.
First, notice that for small $\alpha$, all the roots of $p$ will be within the interior of the unit disk, since as $\alpha$ approaches $0$ from above, all but one of the roots will approach $0$ and the remaining root will approach $1$ from the left.
To see this, notice that when $\alpha = 0$,
\[
	p(\omega) = (\omega - 1) \cdot \omega^{\tau}.
\]
On the other hand, as $\alpha \rightarrow \infty$, all the roots will diverge in magnitude to infinity, which means they must eventually leave the unit circle.
To see why, notice that any root of $p$ must satisfy
\[
	0 = p(\omega) = \omega^{\tau + 1} - \omega^{\tau} + \alpha \lambda,
\]
which implies from taking the magnitude that
\[
	| \omega |^{\tau + 1} + | \omega |^{\tau} \ge \alpha \lambda.
\]
Thus, we can conclude that all $\tau + 1$ roots of the polynomial $p$ must pass through the unit circle as $\alpha$ moves from $0_+$ to $\infty$.

Now, from the proof of Lemma~\ref{lemmaRootsGD}, we know exactly where these crossings of the unit circle can occur.
They happen for
\[
	\alpha = \frac{2}{\lambda} \cdot \sin\left( \frac{\pi + 4 \pi n}{4 \tau + 2} \right),
\]
and at a point $\omega$ on the unit circle with
\[
	\Arg(\omega) = \pm \frac{\pi + 4 \pi n}{4 \tau + 2}.
\]
Not all values of $n$ correspond to a positive value of $\alpha$, and many values of $n$ will result in the same value of $\alpha$.
Clearly we can restrict our attention to $0 \le n < 2 \tau + 1$, since adding $2 \tau + 1$ to $n$ results in the same values for $\alpha$ and $\omega$.
The step size $\alpha$ will only be positive when, for some $m \in \Z$,
\[
	\frac{\pi + 4 \pi n}{4 \tau + 2} + 2 \pi m \in \left(0, \pi \right),
\]
since this is where the $\sin$ is positive.
Dividing both sides by $\pi$ and multiplying by $2 \tau + 1$, this happens when
\[
	\frac{1}{2} + 2 n + 2 m (2 \tau + 1) \in \left(0, 2 \tau + 1 \right).
\]
In other words, this will happen for $n \in \{0, 1, \ldots, \tau\}$.
However, half of these produce redundant values of $\alpha$, since
\begin{align*}
	\sin\left( \frac{\pi + 4 \pi (\tau - n)}{4 \tau + 2} \right)
	&=
	\sin\left( \frac{\pi (4 \tau + 2) - \pi - 4 \pi n}{4 \tau + 2} \right) \\
	&=
	\sin\left( \pi - \frac{\pi + 4 \pi n}{4 \tau + 2} \right) \\
	&=
	\sin\left( \frac{\pi + 4 \pi n}{4 \tau + 2} \right).
\end{align*}
So we can restrict our attention to $0 \le n \le \frac{\tau}{2}$.
If $\tau$ is odd, then each of these assignments of $n$ corresponds to two roots on the unit circle.
If $\tau$ is even, then each of these assignments corresponds to two roots, except for the assignment $n = \frac{\tau}{2}$, for which
\[
	\Arg(\omega) = \pm \frac{\pi + 2 \pi \tau}{4 \tau + 2} = \frac{\pi}{2}
\]
corresponds to only one root on the unit circle.
Thus there are only ever $\tau + 1$ assignments of $(\alpha, \omega)$ for which $\omega$ is a root on the unit circle of
\[
	0 = (\omega - 1) \cdot \omega^{\tau} + \alpha \lambda.
\]
Furthermore, none of those roots can be multiple roots, because if they were multiple roots they would need to be zeros of the polynomial $p'(\omega)$, and none of the roots of that polynomial lie on the unit disk.
As a result, every root crossing of the unit disk must involve only a single root.
Since there are $\tau + 1$ roots and $\tau + 1$ opportunities for a crossing, and all $\tau + 1$ roots \emph{must} cross at some point, each crossing of the unit circle must correspond to a root moving \emph{out} of the unit disk.
As a consequence, no root can ever move back in to the unit disk, since there is no room for it to do so.
Thus, after the first roots leave the unit disk at
\[
	\alpha = \frac{2}{\lambda} \cdot \sin\left( \frac{\pi}{4 \tau + 2} \right),
\]
there is never a time at which all the roots are inside the unit disk.

Finally, recall that $p$ can have a double root only where its first derivative $p'$ has a root.
This will occur only where
\[
	p'(w) = (\tau + 1) \omega^{\tau} - \tau \omega^{\tau-1} = 0,
\]
which happens at
\[
	\omega = \frac{\tau}{\tau + 1}.
\]
This corresponds to a value of $\alpha$ of
\begin{align*}
	\alpha 
	&=
	\frac{1}{\lambda} (1 - \omega) \omega^{\tau} \\
	&=
	\frac{1}{\lambda (\tau + 1)} \left(\frac{\tau}{\tau + 1} \right)^{\tau}.
\end{align*}

This proves the lemma.

\subsection{An extension to SGD with momentum.}
\label{app:ext_mom}
Deep neural networks are often trained with momentum~\cite{sutskever2013importance}.
A natural question is whether the $O(\tau^{-1})$ stability threshold also holds if momentum is used.
When we add momentum, our update step becomes
\begin{align*}
    v_{t+1} = \beta v_t - \alpha \nabla f_t(u_{\text{fwd},t}, u_{\text{bkwd},t}), \;
    w_{t+1} = w_t + v_{t+1}.
\end{align*}
We make the same simplifying assumptions as we made above for the non-momentum case, assuming a constant $\tau$ and quadratic loss.
This results in an update step that, just as above, can be expressed in terms of a companion matrix which will have characteristic polynomial
\begin{equation}
    p(\omega) = \omega^{\tau + 1} - (1 + \beta) \omega^{\tau} + \beta \omega^{\tau - 1} + \alpha \lambda.
    \label{eqnCharPolyMom}
\end{equation}
As in the non-momentum case, we can analyze this for stability by finding the parameters for which the roots of $p$ lie inside the unit disk.
\begin{lemma}
\label{lemmaRootsMom}
For any momentum parameter $0 < \beta \le 1$, there exists a step size $\alpha$ with
\[
    0 < \alpha \le \frac{4}{\lambda} \cdot \sin\left( \frac{\pi}{4\tau + 2} \right)
\]
such that at least one of the roots of the polynomial $p$ of (\ref{eqnCharPolyMom}) lies outside the interior of the unit disk.
\end{lemma}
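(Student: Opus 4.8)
The plan is to mirror the crossing analysis used for Lemma~\ref{lemmaRootsGD}, adapted to the momentum polynomial. The first observation is that it suffices to exhibit a single step size $\alpha$ with $0 < \alpha \le \frac{4}{\lambda}\sin(\frac{\pi}{4\tau+2})$ at which $p$ has a root \emph{exactly on} the unit circle: a root with $|\omega| = 1$ is by definition not in the interior $\{|\omega|<1\}$, so it already certifies the conclusion, and we never need to track where the roots go afterward. I would begin by factoring the characteristic polynomial (\ref{eqnCharPolyMom}) as $p(\omega) = \omega^{\tau-1}(\omega-1)(\omega-\beta) + \alpha\lambda$, which makes the $\alpha=0$ root structure transparent (roots at $0$ with multiplicity $\tau-1$, at $1$, and at $\beta$, all in the closed unit disk).

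Next I would hunt for the unit-circle crossings. Substituting $\omega = e^{i\theta}$ and using $e^{i\theta}-1 = 2i\sin(\theta/2)e^{i\theta/2}$, the requirement that $\omega^{\tau-1}(\omega-1)(\omega-\beta) = -\alpha\lambda$ be a negative real number splits into a magnitude equation
\[ \alpha\lambda = 2\sin(\theta/2)\sqrt{1+\beta^2 - 2\beta\cos\theta} \]
and an argument equation
\[ \left(\tau - \tfrac12\right)\theta + \Arg\!\left(e^{i\theta}-\beta\right) = \tfrac{\pi}{2} + 2\pi n, \qquad n \in \Z. \]
This is the momentum analog of (\ref{eqnPolyGD1}); the only new ingredients relative to the no-momentum case are the factor $\sqrt{1+\beta^2-2\beta\cos\theta} = |e^{i\theta}-\beta|$ in the magnitude and the term $\Arg(e^{i\theta}-\beta)$ in the argument, both coming from the extra root at $\beta$.

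I would then isolate the smallest positive crossing and bound it. Because $e^{i\theta}-\beta$ traces a circle of radius $1$ about $-\beta$ that encloses the origin (as $\beta<1$), $\Arg(e^{i\theta}-\beta)$ increases monotonically from $0$ to $\pi$ on $\theta \in (0,\pi)$, so the left-hand side of the argument equation is increasing and the smallest crossing is the $n=0$ solution $\theta^*$. The key inequality is $\Arg(e^{i\theta}-\beta) > \theta$ on $(0,\pi)$ (shifting a point of positive imaginary part to the left increases its argument); substituting into the $n=0$ equation gives $(\tau+\tfrac12)\theta^* < \tfrac{\pi}{2}$, hence $\theta^*/2 < \frac{\pi}{4\tau+2}$. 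Feeding this into the magnitude equation and bounding $\sqrt{1+\beta^2-2\beta\cos\theta^*} \le 1+\beta \le 2$ yields $\alpha^*\lambda \le 4\sin(\theta^*/2) \le 4\sin\frac{\pi}{4\tau+2}$; since $e^{i\theta^*}$ is a root of $p$ on the circle when $\alpha=\alpha^*$, this $\alpha^*$ is the required witness.

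The main obstacle is that the argument equation is transcendental: unlike the no-momentum case, the extra $\Arg(e^{i\theta}-\beta)$ term prevents solving for $\theta^*$ in closed form. The resolution is not to solve it, but to use only monotonicity together with the one-sided bound $\Arg(e^{i\theta}-\beta) > \theta$; the factor-of-two slack in the stated bound (which arises precisely from $|e^{i\theta}-\beta| \le 1+\beta \le 2$) is what lets this crude estimate suffice. The second delicate point is the boundary case $\beta = 1$, where the $\alpha=0$ polynomial has a \emph{double} root at $\omega=1$ and the argument analysis degenerates ($\Arg(e^{i\theta}-\beta)$ is discontinuous as $\theta\to 0^+$). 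I would dispatch this case directly by expanding $p$ near $\omega=1$, showing the double root splits into a conjugate pair $\omega \approx 1 \pm i\sqrt{\alpha\lambda}$ that leaves the unit disk for \emph{every} $\alpha>0$, so the conclusion holds a fortiori for some $\alpha$ at or below the bound.
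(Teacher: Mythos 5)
Your proposal is correct, and at its core it is the same unit-circle crossing analysis the paper uses: both proofs factor $p(\omega) = \omega^{\tau-1}(\omega-1)(\omega-\beta) + \alpha\lambda$, split the condition for a root on the unit circle into a magnitude equation and an argument equation, restrict attention to the principal ($n=0$) crossing, and extract the constant $4$ from the same bound $|\omega-\beta| \le 1+\beta \le 2$. The differences are in execution. The paper parametrizes the circle by $\omega = (1-iy)/(1+iy)$, introduces an auxiliary angle $\theta$, and establishes an explicit one-to-one correspondence between $\beta$ and $\theta$ through a product-of-tangents identity; you parametrize directly by $\omega = e^{i\theta}$, get existence of the crossing angle $\theta^*$ from monotonicity of $(\tau-\tfrac12)\theta + \Arg(e^{i\theta}-\beta)$ plus the intermediate value theorem, and then bound $\theta^* < \pi/(2\tau+1)$ with the inequality $\Arg(e^{i\theta}-\beta) > \theta$. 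Your route avoids solving for $\beta$ at all, which is slightly more economical. Notably, you are more careful than the paper on one point: the lemma permits $\beta = 1$, but the paper's correspondence degenerates there (it assigns $\theta = 0$, hence $\alpha = 0$, violating $\alpha > 0$), and its closing inequality explicitly invokes $\beta < 1$; your separate perturbation argument at $\beta = 1$ — the double root at $\omega = 1$ splitting into $1 \pm i\sqrt{\alpha\lambda} + O(\alpha\lambda)$, of modulus exceeding $1$ for small $\alpha$ — closes this edge case cleanly. One small caveat: your parenthetical claim that at $\beta=1$ the split roots lie outside the disk ``for every $\alpha > 0$'' is not established by the local expansion, which only controls sufficiently small $\alpha$; but since the lemma only requires some $\alpha$ at or below the threshold, that is all your argument needs.
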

This lemma shows that adding momentum does not let us escape from the $O(\tau^{-1})$ step size requirement observed for SGD.
It suggests that the $O(\tau^{-1})$ threshold is general and not just specific to plain SGD, and it motivates our use of Technique 1 with \emph{all learning algorithms}, not just SGD.

We make the same simplifying assumptions as we made above for the non-momentum case, assuming a constant $\tau$ and quadratic loss.
This results in an update step of
\[
	w_{t+1} - w_t = \beta \left( w_t - w_{t-1} \right) - \alpha \lambda w_{t-\tau} + \alpha \eta_t.
\]
Just as in the non-momentum case, we can write this in terms of a companion matrix, which will have characteristic polynomial
\begin{equation}
	p(\omega) = \omega^{\tau + 1} - (1 + \beta) \omega^{\tau} + \beta \omega^{\tau - 1} + \alpha \lambda.
	\label{eqnCharPolyMom}
\end{equation}
As in the non-momentum case, we will analyze this for stability by finding the parameters for which the roots of $p$ lie inside the unit disk.

To prove the lemma, we start with the expression for the polynomial
\begin{align*}
	p(\omega) 
	&=
	\omega^{\tau + 1} - (1 + \beta) \omega^{\tau} + \beta \omega^{\tau - 1} + \alpha \lambda \\
	&=
	(\omega - \beta) \cdot (\omega - 1) \cdot \omega^{\tau - 1} + \alpha \lambda.
\end{align*}
As for the non-momentum case, we consider the substitution
\[
	\omega = \frac{1 - iy}{1 + iy},
\]
which always lies on the unit circle for any $y \in \R$.
(Without loss of generality, we consider $y > 0$, which corresponds to roots in the lower half-plane. This is without loss of generality because, since $p$ is a real polynomial, its complex roots always appear in pairs.)
We want to find $\alpha$ and $\beta$ for which
\begin{align*}
	0
	&=
	p\left( \frac{1 - iy}{1 + iy} \right) \\
	&=
	\left( \left( \frac{1 - iy}{1 + iy} \right) - \beta \right) 
	\left( \left( \frac{1 - iy}{1 + iy} \right) - 1 \right)
	\left( \frac{1 - iy}{1 + iy} \right)^{\tau-1}
	\\&\hspace{2em}+
	\alpha \lambda.
\end{align*}
This can be simplified to
\begin{align*}
	0
	&=
	\left( 1 - \beta \cdot \frac{1 + iy}{1 - iy} \right) 
	\left( \frac{-2iy}{1 + iy} \right)
	\left( \frac{1 - iy}{1 + iy} \right)^{\tau}
	+
	\alpha \lambda,
\end{align*}
and so
\[
	\left( 1 - \beta \cdot \frac{1 + iy}{1 - iy} \right)
	\cdot
	2iy
	\cdot
	( 1 - iy )^{\tau}
	=
	\alpha \lambda (1 + i y)^{\tau+1}.
\]
Define $\theta$ as
\[
	\theta = \Arg\left( 1 - \beta \cdot \frac{1 + iy}{1 - iy} \right) + \frac{\pi}{2}.
\]
Notice that since the thing inside the $\Arg$ is $1$ minus something with magnitude less than $1$ times something that is on the unit circle in the upper half plane, it will necessarily end up in the fourth quadrant, and so
\[
	\theta - \frac{\pi}{2} \in \left(-\frac{\pi}{2}, 0 \right)
	\Rightarrow
	\theta \in \left(0, \frac{\pi}{2} \right).
\]
Now taking the argument of the whole expression gives us, for any $n \in \Z$,
\[
	\theta
	+
	2 \pi n
	+
	\tau \Arg( 1 - iy )
	=
	(\tau + 1) \Arg(1 + i y),
\]
which simplifies to
\[
	\Arg(1 + i y) = \frac{\theta + 2 \pi n}{2 \tau + 1}.
\]
In this case,
\[
	y = \tan\left( \frac{\theta + 2 \pi n}{2 \tau + 1} \right).
\]
Next, we derive an expression for $\beta$. Since
\begin{align*}
	\theta 
	&=
	\Arg\left( 1 - \beta \cdot \frac{1 + iy}{1 - iy} \right) + \frac{\pi}{2} \\
	&=
	\Arg\left( \frac{(1 - \beta) - iy (1 + \beta)}{1 - iy} \right) + \frac{\pi}{2} \\
	&=
	\Arg\left( (1 - \beta) - iy (1 + \beta) \right) + \Arg(1 + i y) + \frac{\pi}{2} \\
	&=
	\Arg\left( 1 + i \frac{1 - \beta}{y (1 + \beta)} \right) + \Arg(1 + i y),
\end{align*}
so
\[
	\theta - \frac{\theta + 2 \pi n}{2 \tau + 1}
	=
	\Arg\left( 1 + i \frac{1 - \beta}{y (1 + \beta)} \right),
\]
and
\begin{align*}
	\frac{1 - \beta}{1 + \beta} 
	&=
	\tan\left( \frac{\theta + 2 \pi n}{2 \tau + 1} \right)
	\tan\left( \theta - \frac{\theta + 2 \pi n}{2 \tau + 1} \right) \\
	&=
	\frac{
		\cos\left( \theta - \frac{2 \theta + 4 \pi n}{2 \tau + 1} \right)
		-
		\cos(\theta)
	}{
		\cos\left( \theta - \frac{2 \theta + 4 \pi n}{2 \tau + 1} \right)
		+
		\cos(\theta)
	}.
\end{align*}
Now taking the absolute value to find $\alpha$ gives us
\begin{align*}
	\alpha \lambda
	&=
	\left| 1 - \beta \cdot \frac{1 + iy}{1 - iy} \right|
	\cdot
	\frac{2y}{|1 + i y|} \\
	&=
	2 \cdot \left| 1 - \beta \cdot \frac{1 + iy}{1 - iy} \right|
	\cdot
	\sin\left( \frac{\theta + 2 \pi n}{2 \tau + 1} \right).
\end{align*}
Next, consider the case where $n = 0$.
In this case,
\begin{align*}
	\frac{1 - \beta}{1 + \beta} 
	&=
	\frac{
		\cos\left( \theta - \frac{2 \theta}{2 \tau + 1} \right)
		-
		\cos(\theta)
	}{
		\cos\left( \theta - \frac{2 \theta}{2 \tau + 1} \right)
		+
		\cos(\theta)
	}.
\end{align*}
It is clear that there is a one-to-one relationship between accessible $\theta$ and $\beta$ here, because we can represent $\beta = 0$ with $\theta = \pi/2$, and $\beta = 1$ with $\theta = 0$.
So, for every $\beta$ (and given a fixed $\tau$), we can find a $\theta$ that satisfies this equation.
Using that $\theta$, we can then assign
\[
	y = \tan\left( \frac{\theta}{2 \tau + 1} \right).
\]
Since $\theta$ is bounded, $y$ is guaranteed to be in range.
So, the equation
\[
	0 = p\left( \frac{1 - iy}{1 + iy} \right)
\]
will be guaranteed to hold for some $\alpha$.
This $\alpha$ will be given by
\begin{align*}
	\alpha \lambda
	&=
	2 \cdot \left| 1 - \beta \cdot \frac{1 + iy}{1 - iy} \right|
	\cdot
	\sin\left( \frac{\theta}{2 \tau + 1} \right).
\end{align*}
So, since $\beta < 1$, it follows that this $\alpha$ will satisfy
\[
	\alpha
	\le
	\frac{4}{\lambda} \cdot \sin\left( \frac{\theta}{2 \tau + 1} \right)
	\le
	\frac{4}{\lambda} \cdot \sin\left( \frac{\pi}{4 \tau + 2} \right),
\]
which is what we wanted to show.
This proves that for any $\beta$, there exists a $\alpha$ at least this large for which the algorithm is unstable.

\subsection{Proof of Lemma~\ref{lemmaRootsDD}}
We know, from our baseline analysis, that when
\[
    \alpha = \frac{2}{\lambda} \cdot \sin\left( \frac{\pi}{4\tau + 2} \right)
\]
and $\Delta = 0$, the polynomial $p$ has a root at
\[
    \omega = \exp\left( \frac{i \pi}{2 \taufwd + 1} \right).
\]
Consider values of $\alpha$ and $\Delta$ for which $p$ would have a root at
\[
    \omega = \exp( i \theta )
\]
for
\[
    \theta \in \left( 0, \frac{\pi}{2 \taufwd + 1} \right].
\]
In this case, we'd have
\begin{align*}
    0
    &=
    \exp(i \taufwd \theta) \cdot (\omega - 1)
    \\&\hspace{2em}-
    \alpha \cdot \Delta \cdot \exp(i (\taufwd - \taubkwd) \theta)
    \\&\hspace{2em}+
    \alpha \cdot (\lambda + \Delta),
\end{align*}
which is equivalent to
\begin{align*}
    0
    &=
    \exp\left(i \frac{\taufwd + \taubkwd}{2} \theta \right) \cdot (\omega - 1)
    \\&\hspace{2em}-
    \alpha \cdot \Delta \cdot \exp\left(i \frac{\taufwd - \taubkwd}{2} \theta \right)
    \\&\hspace{2em}+
    \alpha \cdot (\lambda + \Delta) \cdot \exp\left(-i \frac{\taufwd - \taubkwd}{2} \theta \right).
\end{align*}
If we take the real part of this, we get
\begin{align*}
    0
    &=
    \cos\left(\frac{\taufwd + \taubkwd + 2}{2} \cdot \theta \right)
    \\&\hspace{2em}-
    \cos\left(\frac{\taufwd + \taubkwd}{2} \cdot \theta \right)
    \\&\hspace{2em}+
    \alpha \lambda \cos\left(\frac{\taufwd - \taubkwd}{2} \theta \right) \\
    &=
    -2
    \sin\left(
        \frac{\tau_{\text{fwd}} + \tau_{\text{bkwd}} + 1}{2} \cdot \theta 
    \right)
    \cdot
    \sin\left(
        \frac{\theta}{2}
    \right)
    \\&\hspace{2em}+
    \alpha \lambda \cos\left(\frac{\taufwd - \taubkwd}{2} \theta \right),
\end{align*}
so solving for $\alpha$ gives us
\[
    \alpha
    =
    \frac{
        2
        \sin\left(
            \frac{\tau_{\text{fwd}} + \tau_{\text{bkwd}} + 1}{2} \cdot \theta 
        \right)
        \cdot
        \sin\left(
            \frac{\theta}{2}
        \right)
    }{
        \lambda \cos\left(\frac{\taufwd - \taubkwd}{2} \cdot \theta \right)
    }.
\]
On the other hand, if we take the imaginary part instead of the real part, we get
\begin{align*}
    0
    &=
    \sin\left(\frac{\taufwd + \taubkwd + 2}{2} \cdot \theta \right)
    \\&\hspace{2em}-
    \sin\left(\frac{\taufwd + \taubkwd}{2} \cdot \theta \right)
    \\&\hspace{2em}-
    \alpha (\lambda + 2 \Delta) \sin\left(\frac{\taufwd - \taubkwd}{2} \theta \right) \\
    &=
    2
    \cos\left(
        \frac{\tau_{\text{fwd}} + \tau_{\text{bkwd}} + 1}{2} \cdot \theta 
    \right)
    \cdot
    \sin\left(
        \frac{\theta}{2}
    \right)
    \\&\hspace{2em}-
    \alpha (\lambda + 2 \Delta) \sin\left(\frac{\taufwd - \taubkwd}{2} \theta \right)  \\
    &=
    2
    \cos\left(
        \frac{\tau_{\text{fwd}} + \tau_{\text{bkwd}} + 1}{2} \cdot \theta 
    \right)
    \cdot
    \sin\left(
        \frac{\theta}{2}
    \right)
    \\&\hspace{2em}-
    (\lambda + 2 \Delta) \sin\left(\frac{\taufwd - \taubkwd}{2} \theta \right)
    \\&\hspace{4em}\cdot
    \frac{
        2
        \sin\left(
            \frac{\tau_{\text{fwd}} + \tau_{\text{bkwd}} + 1}{2} \cdot \theta 
        \right)
        \cdot
        \sin\left(
            \frac{\theta}{2}
        \right)
    }{
        \lambda \cos\left(\frac{\taufwd - \taubkwd}{2} \theta \right)
    } \\
    &=
    1
    -
    \left(1 + \frac{2 \Delta}{\lambda}\right) \tan\left(\frac{\taufwd - \taubkwd}{2} \theta \right)
    \\&\hspace{2em}\cdot
    \tan\left(
        \frac{\tau_{\text{fwd}} + \tau_{\text{bkwd}} + 1}{2} \cdot \theta 
    \right).
\end{align*}
and so
\begin{align*}
    \frac{2 \Delta}{\lambda}
    &=
    \cot\left(\frac{\taufwd - \taubkwd}{2} \cdot \theta \right)
    \\&\hspace{2em}\cdot
    \cot\left(
        \frac{\tau_{\text{fwd}} + \tau_{\text{bkwd}} + 1}{2} \cdot \theta 
    \right)
    -
    1 \\
    &=
    \csc\left(\frac{\taufwd - \taubkwd}{2} \cdot \theta \right)
    \\&\hspace{2em}\cdot
    \csc\left(
        \frac{\tau_{\text{fwd}} + \tau_{\text{bkwd}} + 1}{2} \cdot \theta 
    \right)
    \\&\hspace{2em}\cdot
    \cos\left(
        \frac{2 \tau_{\text{fwd}} + 1}{2} \cdot \theta 
    \right).
\end{align*}
One thing we notice immediately from this expression is that it approaches infinity as $\theta \rightarrow 0^+$, goes to zero at
\[
    \theta = \frac{\pi}{2 \taufwd + 1},
\]
and is continuous and positive in between.
This means that all non-negative values of $\Delta$ are actually attained for some $\theta$, and there is a one-to-one mapping between $\Delta$ and $\theta$ in this interval.
Furthermore, since $\alpha$ approaches $0$ monotonically as $\theta$ approaches $0$ over this interval, this means that there is no absolute lower bound on how small $\alpha$ can get.
So all we need is a bound on $\alpha$ in terms of $\Delta$.

In the limit of small $\theta$,
\[
    \frac{2 \Delta}{\lambda}
    =
    \left(\frac{\taufwd - \taubkwd}{2} \cdot \theta \right)^{-1}
    \cdot
    \left(
        \frac{\tau_{\text{fwd}} + \tau_{\text{bkwd}} + 1}{2} \cdot \theta 
    \right)^{-1}
\]
and
\begin{align*}
    \alpha
    &=
    \frac{
        2
        \left(
            \frac{\tau_{\text{fwd}} + \tau_{\text{bkwd}} + 1}{2} \cdot \theta 
        \right)
        \cdot
        \left(
            \frac{\theta}{2}
        \right)
    }{
        \lambda
    } \\
    &=
    \frac{
        2
        \left(
            \frac{\tau_{\text{fwd}} + \tau_{\text{bkwd}} + 1}{2} \cdot \theta 
        \right)
        \cdot
        \left(
            \frac{\theta}{2}
        \right)
    }{
        \lambda
    }
    \cdot
    \frac{\lambda}{2 \Delta}
    \\&\hspace{2em}\cdot
    \left(\frac{\taufwd - \taubkwd}{2} \cdot \theta \right)^{-1}
    \cdot
    \left(
        \frac{\tau_{\text{fwd}} + \tau_{\text{bkwd}} + 1}{2} \cdot \theta 
    \right)^{-1} \\
    &=
    \frac{1}{\Delta \cdot \left(\taufwd - \taubkwd \right)}.
\end{align*}
Can we get a real bound that matches this?
\begin{align*}
    \frac{\lambda}{\Delta}
    &=
    2
    \sin\left(\frac{\taufwd - \taubkwd}{2} \cdot \theta \right)
    \\&\hspace{2em}\cdot
    \sin\left(
        \frac{\tau_{\text{fwd}} + \tau_{\text{bkwd}} + 1}{2} \cdot \theta 
    \right)
    \\&\hspace{2em}\cdot
    \sec\left(
        \frac{2 \tau_{\text{fwd}} + 1}{2} \cdot \theta 
    \right) \\
    &=
    \left(
        \cos\left(
            \frac{2 \tau_{\text{bkwd}} + 1}{2} \cdot \theta 
        \right)
        -
        \cos\left(
            \frac{2 \tau_{\text{fwd}} + 1}{2} \cdot \theta 
        \right)
    \right)
    \\&\hspace{2em}\cdot
    \sec\left(
        \frac{2 \tau_{\text{fwd}} + 1}{2} \cdot \theta 
    \right) \\
    &=
    \frac{
        \cos\left(
            \frac{2 \tau_{\text{bkwd}} + 1}{2} \cdot \theta 
        \right)
    }{
        \cos\left(
            \frac{2 \tau_{\text{fwd}} + 1}{2} \cdot \theta 
        \right)
    }
    -
    1,
\end{align*}
so
\[
    1 + \frac{\lambda}{\Delta}
    =
    \frac{
        \cos\left(
            \frac{2 \tau_{\text{bkwd}} + 1}{2} \cdot \theta 
        \right)
    }{
        \cos\left(
            \frac{2 \tau_{\text{fwd}} + 1}{2} \cdot \theta 
        \right)
    }.
\]
It can be shown that for any $y < x < \frac{\pi}{2}$,
\[
    \frac{\cos(y)}{\cos(x)} \ge 1 + \frac{x^2 - y^2}{2}.
\]
(To see why, observe that for any $a \in [0,1]$, the third derivative of $\cos(ax) \cdot \sec(x)$ is non-negative over $x \in [0, \pi/2]$.)
So,
\[
    \frac{\lambda}{\Delta}
    \ge
    \frac{1}{2}
    \cdot
    \left(
        \left(
            \frac{2 \tau_{\text{fwd}} + 1}{2}
        \right)^2
        -
        \left(
            \frac{2 \tau_{\text{bkwd}} + 1}{2}
        \right)^2
    \right)
    \cdot
    \theta^2.
\]
Similarly, we have
\begin{align*}
    \alpha
    &=
    \frac{2}{\lambda} \cdot \sin\left( \frac{\theta}{2} \right)
    \\&\hspace{2em} \cdot
    {\scriptstyle \frac{
        (\lambda + \Delta) \cdot \sin\left(
            \frac{2 \tau_{\text{fwd}} + 1}{2} \cdot \theta 
        \right)
        +
        \Delta
        \cdot
        \sin\left(
            \frac{2 \tau_{\text{bkwd}} + 1}{2} \cdot \theta 
        \right)
    }{
        \lambda + 2 \Delta
    }} \\
    &\le
    \frac{1}{\lambda}
    \cdot
    \frac{
        (\lambda + \Delta) \cdot \left(
            \frac{2 \tau_{\text{fwd}} + 1}{2}
        \right)
        +
        \Delta
        \cdot
        \left(
            \frac{2 \tau_{\text{bkwd}} + 1}{2}
        \right)
    }{
        \lambda + 2 \Delta
    }
    \cdot
    \theta^2 \\
    &\le
    \frac{2}{\Delta}
    \cdot
    \frac{
        (\lambda + \Delta) \cdot \left(
            \frac{2 \tau_{\text{fwd}} + 1}{2}
        \right)
        +
        \Delta
        \cdot
        \left(
            \frac{2 \tau_{\text{bkwd}} + 1}{2}
        \right)
    }{
        \lambda + 2 \Delta
    }
    \\&\hspace{2em}\cdot
    \left(
        \left(
            \frac{2 \tau_{\text{fwd}} + 1}{2}
        \right)^2
        -
        \left(
            \frac{2 \tau_{\text{bkwd}} + 1}{2}
        \right)^2
    \right)^{-1} \\
    &\le
    \frac{2}{\Delta}
    \cdot
    \left(
        \left(
            \frac{2 \tau_{\text{fwd}} + 1}{2}
        \right)
        +
        \left(
            \frac{2 \tau_{\text{bkwd}} + 1}{2}
        \right)
    \right)
    \\&\hspace{2em}\cdot
    \left(
        \left(
            \frac{2 \tau_{\text{fwd}} + 1}{2}
        \right)^2
        -
        \left(
            \frac{2 \tau_{\text{bkwd}} + 1}{2}
        \right)^2
    \right)^{-1} \\
    &\le
    \frac{2}{\Delta}
    \cdot
    \left(
        \left(
            \frac{2 \tau_{\text{fwd}} + 1}{2}
        \right)
        -
        \left(
            \frac{2 \tau_{\text{bkwd}} + 1}{2}
        \right)
    \right)^{-1} \\
    &\le
    \frac{2}{\Delta \cdot \left(\taufwd - \taubkwd \right)}.
\end{align*}
And this is an actual guarantee.
So, we've proven that for any $\Delta \ge 0$, there exists an $\alpha$ with
\[
    0
    <
    \alpha
    \le
    \frac{2}{\Delta \cdot \left(\taufwd - \taubkwd \right)}
\]
such that the polynomial $p$ has a root on the unit circle.
The other part of the $\min$ in the lemma statement follows directly from our original bound and the monotonicity of $\Delta$ and $\alpha$ in terms of $\theta$ over the interval we have been looking at.

\subsection{Justification for Claims in \Cref{sec:delay_discrepancy}}

In \Cref{sec:delay_discrepancy}, we motivated our choice of $\Delta$ by claiming that the second-order Taylor expansion of the characteristic polynomial of the companion matrix associated with momentum-corrected asynchronous pipeline-parallel SGD on the quadratic model around $\omega = 1$ is invariant to the delay-discrepancy-sensitivity parameter $\Delta$ if $\gamma$ is set appropriately.
Here, we justify that assertion, as well as the other assertions we made in that subsection.
First, we want to show formally that $\omega = 1$ is the ``interesting'' region.
We do this with the following lemma.

\begin{lemma}
For any polynomial functions $f$, $g$, and $h$, and any integer $\tau$, define the polynomial
\[
	p_{\tau}(\omega) = (\omega - 1) \cdot f(\omega) \cdot \omega^{\tau} - \alpha \cdot g(\omega) \cdot \omega^{\tau} - \alpha \cdot h(\omega), 
\]
and suppose that $f$ does not vanish anywhere on the unit circle.
For any $\tau$, let $\alpha_{\textnormal{thresh}}(\tau)$ be the smallest $\alpha > 0$ for which $p_{\tau}$ has a root on the unit circle, and let $\omega_{\textnormal{thresh}}(\tau)$ be one of the corresponding roots.
Then, if
\[
	\lim_{\tau \rightarrow \infty} \alpha_{\textnormal{thresh}}(\tau) = 0,
\]
then
\[
	\lim_{\tau \rightarrow \infty} \omega_{\textnormal{thresh}}(\tau) = 1.
\]
\label{lemmaOmega1}
\end{lemma}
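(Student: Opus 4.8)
The plan is to exploit the one piece of structure we are handed for free: the threshold root $\omega_{\textnormal{thresh}}(\tau)$ lies \emph{on} the unit circle, so $|\omega_{\textnormal{thresh}}(\tau)| = 1$ and in particular $|\omega_{\textnormal{thresh}}(\tau)^\tau| = 1$. This makes the potentially dangerous high-power factor $\omega^\tau$ harmless once we pass to magnitudes, reducing the whole question to a direct estimate of $|\omega - 1|$. So rather than chasing root locations through the complex plane, I would simply take absolute values in the defining equation and read off a bound.

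Concretely, first I would rewrite the defining equation $p_\tau(\omega_{\textnormal{thresh}}) = 0$ in the isolated form
\[
	(\omega - 1) \, f(\omega) \, \omega^\tau = \alpha \, g(\omega) \, \omega^\tau + \alpha \, h(\omega),
\]
where I abbreviate $\omega = \omega_{\textnormal{thresh}}(\tau)$ and $\alpha = \alpha_{\textnormal{thresh}}(\tau)$. Taking the modulus of both sides, using $|\omega^\tau| = 1$, and applying the triangle inequality on the right gives
\[
	|\omega - 1| \cdot |f(\omega)| \le \alpha \left( |g(\omega)| + |h(\omega)| \right).
\]

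Next I would control the polynomial factors uniformly over the unit circle. Since the unit circle is compact and $g,h$ are continuous, there is a constant $M$ with $|g(\omega)| + |h(\omega)| \le M$ for all $|\omega| = 1$. Crucially, the hypothesis that $f$ does not vanish anywhere on the unit circle, combined with compactness, yields a constant $c > 0$ with $|f(\omega)| \ge c$ there. Combining these with the displayed inequality produces
\[
	|\omega_{\textnormal{thresh}}(\tau) - 1| \le \frac{M}{c} \cdot \alpha_{\textnormal{thresh}}(\tau),
\]
and the hypothesis $\alpha_{\textnormal{thresh}}(\tau) \to 0$ then forces $\omega_{\textnormal{thresh}}(\tau) \to 1$, which is the claim.

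The step I expect to carry the real content is the lower bound $|f(\omega)| \ge c > 0$: this is exactly where the nonvanishing hypothesis on $f$ is used, and without it the argument collapses, since $|\omega - 1|$ could remain bounded away from zero while $|f(\omega)|$ compensates by shrinking. Everything else is a routine magnitude estimate; the one subtlety to guard against is that $M$ and $c$ must be chosen independently of $\tau$ (which they are, since they depend only on $f,g,h$ restricted to the fixed unit circle), so that the final inequality is uniform in $\tau$ and the limit passes through cleanly.
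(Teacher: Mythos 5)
Your proof is correct and is essentially the paper's own argument: the paper likewise takes magnitudes in $p_\tau(\omega)=0$ on the unit circle (solving for $\alpha$ rather than bounding $|\omega-1|$ directly), applies the triangle inequality to get $\alpha \ge |\omega-1|\, f_{\min}/(g_{\max}+h_{\max})$, and invokes compactness plus the nonvanishing of $f$ to make the constant uniform in $\tau$. The only cosmetic difference is that by never dividing by $g(\omega)\omega^\tau - h(\omega)$ you sidestep the (harmless) case where that quantity vanishes; otherwise the two arguments coincide step for step.
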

\begin{proof}
Suppose that $p_{\tau}(\omega) = 0$ for some $\omega$ on the unit circle.
Solving for $\alpha$ gives
\begin{align*}
	\alpha 
	&=
	(\omega - 1) \cdot \frac{ f(\omega) \cdot \omega^{\tau} }{ g(\omega) \cdot \omega^{\tau} - h(\omega) } \\
	&=
	\Abs{\omega - 1} \cdot \frac{ \Abs{ f(\omega) } }{ \Abs{ g(\omega) \cdot \omega^{\tau} - h(\omega) } } \\
	&\ge
	\Abs{\omega - 1} \cdot \frac{ \Abs{ f(\omega) } }{ \Abs{ g(\omega) } + \Abs{ h(\omega) } } \\
	&\ge
	\Abs{\omega - 1} \cdot \frac{ f_{\min} }{ g_{\max} + h_{\max} },
\end{align*}
where these $\min$ and $\max$ are taken over the unit circle.
So, for some constant $C > 0$ independent of $\tau$,
\[
	\Abs{\omega - 1} \le C \cdot \alpha
\]
(we know such a $C$ exists because $f$ does not vanish on the unit circle).
The lemma statement follows directly.
\end{proof}

This lemma shows in a very general sense that the points at which the roots of the characteristic polynomial first cross the unit circle as $\alpha$ increases from $0$ will approach $\omega = 1$ as $\tau$ approaches $\infty$.
Since we know from observation that for the systems we are studying, the smallest $\alpha$ at which the polynomial becomes unstable becomes smaller as $\tau$ approaches $\infty$, it follows that as $\tau \rightarrow \infty$, the points $\omega$ at which the system first becomes unstable must also approach $\omega = 1$.
This formally justifies our notion of the area where the ``action happens'' for large $\tau$.

Now, we will prove that the characteristic polynomial of the companion matrix associated with momentum-corrected asynchronous pipeline-parallel SGD on the quadratic model around $\omega = 1$ is invariant to the delay-discrepancy-sensitivity parameter $\Delta$ if $\gamma$ is set such that
\[
	\gamma = 1 - \frac{2}{\taufwd - \taubkwd + 1}.
\] 
Here, we justify that assertion.
First, observe that the characteristic polynomial of the companion matrix is
\begin{align*}
	p(\omega)
	&=
	(\omega - 1) (\omega - \gamma) \omega^{\taufwd}
	\\&\hspace{2em}+
	\alpha (\lambda + \Delta) (\omega - \gamma)
	\\&\hspace{2em}-
	\alpha \Delta \omega^{\taufwd - \taubkwd} (\omega - \gamma)
	\\&\hspace{2em}+
	\alpha \Delta \omega^{\taufwd - \taubkwd} (\taufwd - \taubkwd) (1 - \gamma) (\omega - 1).
\end{align*}
This can be seen by constructing the companion matrix from the update rule directly.

Notice that this polynomial satisfies all the conditions of the statement of Lemma~\ref{lemmaOmega1}, for appropriate values of $f$, $g$, and $h$, and letting $\tau = \taufwd - \taubkwd$.
At $\omega = 1$, we have
\[
	p(1) = \alpha \lambda (1 - \gamma)
\]
and
\[
	p'(1) = \alpha \lambda + 1 - \gamma,
\]
both of which are independent of the sensitivity parameter $\Delta$.
On the other hand, the second derivative is
\begin{align*}
	p''(1)
	&=
	2 \taufwd (1 - \gamma) + 2
	\\&\hspace{-2em}-
	\alpha \Delta (\taufwd - \taubkwd) (1 + \gamma - (1 - \gamma) (\taufwd - \taubkwd)).
\end{align*}
From here, notice that the $\Delta$ term drops out of this expression if we set $\gamma$ such that
\[
	0 = 1 + \gamma - (1 - \gamma) (\taufwd - \taubkwd);
\]
this occurs when
\begin{equation}
	\gamma = 1 - \frac{2}{\taufwd - \taubkwd + 1}.
	\label{eqnGammaValue}
\end{equation}
Also notice that in the limit of large $\tau$, we would have
\begin{align*}
	\operatorname{D}
	&=
	\gamma^{\taufwd - \taubkwd}
	\\ &=
	\left( 1 - \frac{2}{\taufwd - \taubkwd + 1} \right)^{\taufwd - \taubkwd}
	\\ &\approx
	\exp(-2).
\end{align*}
This motivates our use of $\operatorname{D}$ nearby $0.135$.

\begin{figure}[t]
\centering
\ifarxiv
\includegraphics[width=0.45\linewidth]{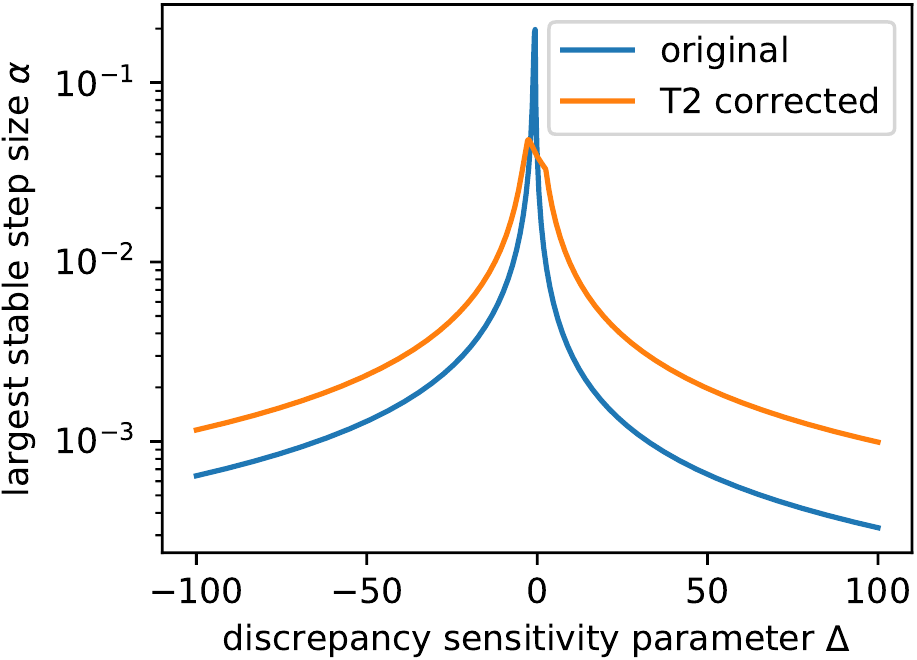}
\else
\includegraphics[width=0.66\linewidth]{figures/corr_comparison_t40_s30.pdf}
\fi
\caption{Plot of the largest step size $\alpha$ for which all the eigenvalues of the companion matrix lie within the unit disk for various values of the discrepancy sensitivity parameter $\Delta$, comparing the original quadratic model with the T2 discrepancy-corrected model. This figure was generated for $\taufwd = 40$ and $\taubkwd = 10$.} 
\label{fig:corrcomparison4030}
\end{figure}

In \Cref{sec:delay_discrepancy}, we also claimed that using T2 with the assignment in (\ref{eqnGammaValue}) seems to increase the allowable range over which the system is stable.
In experiments on the quadratic model, we observed that this happens consistently for all values of $\Delta > 0$ and for all $\taufwd$ and $\taubkwd$ we tried.
We tried all values of $\taufwd > \taubkwd$ where $\taufwd \le 40$ and values of $\Delta$ ranging from $-100$ to $100$; this range of $\tau$ covers the entire range of delays present in our DNN training experiments. 
While the improvement seems to happen always for $\Delta \ge 0$, if $\Delta < 0$ we have observed (again only in numerical experiments) that T2 does not necessarily improve the threshold of stability for all values of $\Delta$. 
This is illustrated in Figure~\ref{fig:corrcomparison4030}, which shows what happens for the particular case of $\taufwd = 40$ and $\taubkwd = 10$.
This figure is generally representative of what happens the cases we tried: the T2 correction makes the range of stable $\alpha$ consistently bigger when $\Delta \ge 0$, while occasionally having a negative effect when $\Delta \le 0$.
\section{Supplementary material for~\Cref{sec:experiments}}
\label{app:exp_app}
In this section, we discuss the setup details and additional experiment results. We first discuss the setup of each task we consider and the hyperparameter configuration of PipeMare in~\Cref{app:model_dataset}. We then present experiment results in addition to the performance and ablation study in \Cref{sec:experiments}.

\begin{table}
	\centering
	\small
	\begin{tabular}{c c c}
	\toprule
	Dataset & CIFAR10 & ImageNet \\
	\midrule
	Optimizer & \multicolumn{2}{c}{SGD with Momentum} \\
	Initial learning rate $\alpha$ &  $0.01$ & $0.1$\\
	Learning rate drop interval (epochs) & $80$ & $30$ \\
	Learning rate drop factor & $0.1$ & $0.1$ \\
	Momentum & 0.9 & 0.9 \\
	Training epochs & $200$ & $100$ \\
	$\l2$ regularization & $0.0005$ & $0.0001$ \\
	Minibatch size & $64$ & $256$ \\
	Microbatch size & $8$ & $16$ \\
	\bottomrule	
	\end{tabular}
	\caption{Training hyperparameters for ResNet 50 on CIFAR10 and ImageNet.}
	\label{tab:resnet}	
\end{table}

\subsection{Experiment setup}
\label{app:model_dataset}

We discuss the details in setup for each task we consider as well as in the hyperparameter configurations for PipeMare.

\paragraph{ResNet experiments.}
We use a publicly available implementation\footnote{https://github.com/kuangliu/pytorch-cifar} of ResNet for CIFAR10 which is reported to have good performance on CIFAR. We inherit the hyperparameters from the code repository except the initial learning rate. As the test accuracy associated with the provided learning rate does not reach $94.0$, we search it with grid $\{0.001, 0.01, 0.1\}$ to ensure the strong performance of synchronous baselines. We then uniformly apply the optimal value $0.01$ to all the synchronous and asynchronous pipeline-parallel training. For the ImageNet experiment, we fully inherit the model and training configurations from the official PyTorch implementation.\footnote{https://pytorch.org/} For both the CIFAR10 and ImageNet dataset, we use the standard train/validation/test dataset split in the Python Torchvision library. We present the detailed model hyperparameters and training configuration in Table~\ref{tab:resnet}.

\begin{table}
	\centering
	\small
	\begin{tabular}{c c c}
	\toprule
	Dataset & IWSLT & WMT \\
	\midrule
	Optimizer & \multicolumn{2}{c}{AdamW} \\
	Max learning rate & \num{5e-4} & \num{7e-4} \\
	Label smoothing & \multicolumn{2}{c}{$0.1$} \\
	Dropout & $0.3$ & $0.1$\\
	Weight decay & \num{1e-4} & $0$ \\
	LR linear warmup minibatches & \multicolumn{2}{c}{$8000$} \\
	Initial LR for linear warmup up & \multicolumn{2}{c}{\num{1e-7}} \\ 
	Adam $\beta$s & \multicolumn{2}{c}{$(0.9, 0.98)$} \\
	Training epochs & $60$ & $80$ \\
	Minibatch size (average \# of tokens) & $3600$ & $29000$ \\
	Microbatch size (max \# of tokens) & $245$ & $1792$ \\
	\# of microbatches & \multicolumn{2}{c}{$19$} \\
	Gradient norm clipping threshold & $25$ & NA \\
	\bottomrule	
	\end{tabular}
	\caption{Training hyperparameters for the Transformer on IWSLT and WMT. Here, ``LR'' stands for learning rate.}
	\label{tab:transformer}	
\end{table}

\begin{table*}
	\centering
	\small
	\begin{tabular}{c c c c}
	\toprule
	Dataset & Hyperparameters & Tuning grid & \makecell{Retuning grid for \\$\#$ of annealing epochs} \\
	\midrule
	\multirow{2}{*}{CIFAR10} & Number of annealing epochs (PipeMare T1) & $\{10, \mathbf{20}, 40, 80, 160\}$ & -- \\
							 & Discrepancy correction decay (PipeMare T1 + T2) & $\{0.1, \mathbf{0.5}, 0.9\}$ & $\{10, \mathbf{20}, 40\}$ \\
	\midrule
	\multirow{2}{*}{IWSLT14} & Number of annealing epochs (PipeMare T1) & $\{15, \mathbf{30}, 60\}$ & -- \\
							 & Discrepancy correction decay (PipeMare T1 + T2) & $\{0.01,\mathbf{0.1},0.2\}$ & $\{15, 20, \mathbf{30}\}$ \\
							 & Warmup epochs (PipeMare T1 + T2 + W) & $\{3, 5, \mathbf{10}\}$ & $\{1, \mathbf{10}, 20\}$ \\
	\bottomrule	
	\end{tabular}
	\caption{Hyperparameter sweep for PipeMare to demonstrate the best model accuracy attained by PipeMare. We sweep the number of annealing epochs, the discrepancy correction decay and the number of warmup epochs sequentially. For each hyperparameter, we first sweep it with optimal values for previously sweeped hyperparameters if there are any. After we tune the decay and number of warmup epochs, we also re-sweep the number of annealing epochs; we found this re-sweep can be important to model accuracy in cases such as PipeMare T1 + T2 + W for IWSLT. We use $0$ warmup epochs for CIFAR10 as we found warmup epochs does not improve the model accuracy. We bold the hyperparameter values attaining the best model accuracy in each grid.}
	\label{tab:pipemare_hyper}	
\end{table*}

\paragraph{Transformer experiments.}
We use the Fairseq implementation for 12-layer transformer models and inherit the key hyperparameters from the Fairseq repository.\footnote{Fairseq repo: https://github.com/pytorch/fairseq} We use $2\times$ longer learning rate linear warmup steps than in the original code repository across experiments because we observe $2\times$ linear warmup steps can produce higher BLEU scores for both the synchronous and asynchronous runs. For both the IWSLT14 and WMT17 German to English dataset, we use beam width $5$ to evaluate the BLEU score. We present the other hyperparameters in \Cref{tab:transformer} for reproducibility.

\paragraph{Hyperparameter of PipeMare.}
PipeMare has three key hyperparameters for the three techniques: the number of annealing epochs for learning rate rescheduling (T1); the decay $D$ for discrepancy correction (T2); the number of epochs (steps) for warmup epochs (W). 
To compare the best model accuracy attained by different training algorithms, we following the approach used by~\citet{wilson2017marginal}---we report the best test set model accuracy attained across the hyperparameter grid. 
For the CIFAR10 and IWSLT14 experiments, we sweep the annealing epochs, the decay and the number of epochs sequentially. When sweep each of these parameters, we first anchor on the optimal values of the already sweeped hyperparameters. We then re-sweep the number of annealing epochs after sweeping the grid for the decay and the number of warmup epochs; we observe this re-sweep on the number of annealing epochs can improve the model accuracy attained by PipeMare on IWSLT14. Note for each hyperparameter configuration, we report the model accuracy as the best performance across all training epochs.

In \Cref{tab:pipemare_hyper}, we present the hyperparameter grid we use as well as the optimal values (in bold) when sequentially sweeping the hyperparameters for CIFAR10 and IWSLT14 in \Cref{sec:experiments}. Note for CIFAR10, we found that warmup epochs do not further improve the statistical efficiency; we thus use 0 warmup epochs to attain the best performance on CIFAR10. To avoid the intensive computational overhead of tuning ImageNet and WMT, we transfer the three key hyperparameters of PipeMare from CIFAR10 and IWSLT with minimal search centered around them. Specifically, for ImageNet we use the same discrepancy correction as CIFAR10 and 10 epochs (one third of total epochs before base learning rate decayed by 10, note CIFAR10 uses 20 epochs, which is a quarter of the total epochs before learning rate decay) as annealing epochs. For WMT we used the same discrepancy correction as IWSLT and 4 epochs (16k minibatch steps, while IWSLT14 uses 12k minibatch steps) for synchronous warmup and another 4 epochs for annealing (IWSLT14 uses same epochs for synchronous warmup and annealing epochs as well). Following the optimal hyperparameter setting in IWSLT, we also use the same number of epochs for annealing epochs and warmup epochs for WMT; these PipeMare configurations for ImageNet and WMT are presented in \Cref{tab:pipemare_hyper_large}.

\subsection{Additional experiment results}
We present the additional experiment results in addition to the demonstration in \Cref{sec:experiments}. We discuss the results on ImageNet and WMT dataset in \Cref{app:mom_correction}. We then discuss supplementary results for PipeMare ablation study in \Cref{app:ablation}.

\subsubsection{ImageNet and WMT results}
\label{app:mom_correction}
In \Cref{sec:empirical_eval}, we discussed the end-to-end comparison on the ImageNet and WMT dataset. To better compare the statistical and hardware efficiency across pipeline training methods, we visualize the model accuracy as a function of number of epochs and of normalized time in \Cref{fig:imagenet}. For the ImageNet dataset, we can observe in \Cref{fig:imagenet} that PipeMare attains higher test accuracy than PipeDream. For the WMT dataset in \Cref{fig:imagenet}, PipeMare can attain competitive test BLEU score to GPipe synchronous results while PipeDream only demonstrate $0.0$ BLEU score.

\subsubsection{PipeMare ablation study}
\label{app:ablation}

\begin{figure*}[t]
\includegraphics[width=\linewidth]{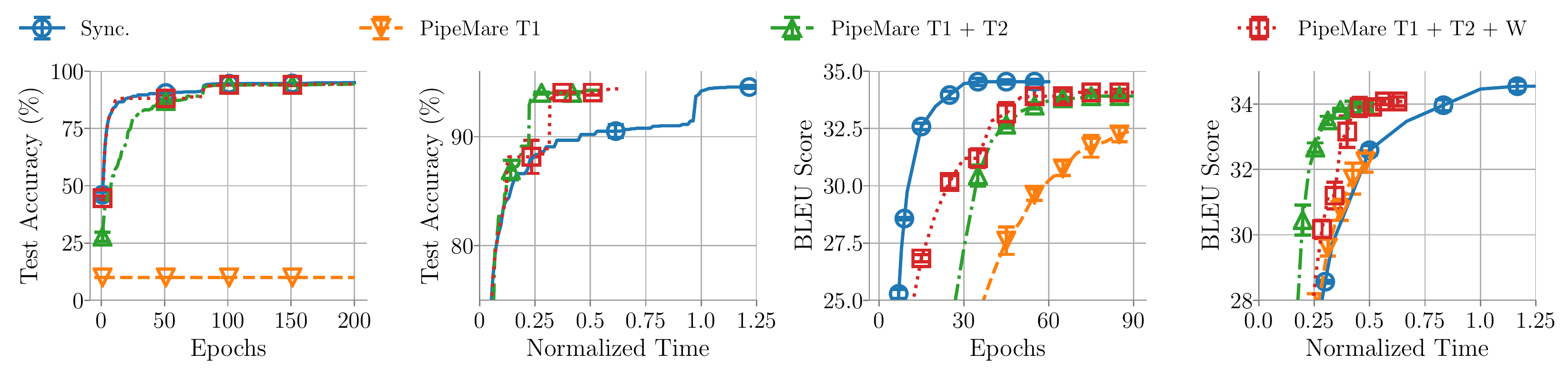}
\vspace{-8mm}

\caption{Impact of incrementally combining PipeMare techniques (T1, T2, and W) on a ResNet50 using Cifar10 (leftmost two figures) and 12-layer Transformer model (rightmost two figures) using IWSLT14 with $214$ and $186$ pipeline stages respectively.
This is 2x the number of pipeline stages when each model weight is treated at its own stage (as is done in \Cref{sec:experiments}). This tests the limits of our approach at an extreme (a fine-granularity of PP). Normalized time is computed using pipeline utilization and number of epochs, providing a proxy for idealized time on an accelerator.}
\vspace{-3mm}
\label{fig:trade-off-2x}
\end{figure*}

\paragraph{Ablation study: a different number of pipeline stages.} In \Cref{sec:ablation}, we perform ablation study with $107$ and $93$ pipeline stages respectively for CIFAR10 and IWSLT. In \Cref{fig:trade-off-2x} we demonstrate the ablation study with $214$ and $186$ stages for CIFAR10 and IWSLT. We observe that the learning rate rescheduling, discrepancy correction and warmup epochs can demonstrate even larger contributions to both the statistical and hardware efficiency as the number of pipeline stages is 2x larger here versus those shown in \Cref{sec:ablation}.

\begin{table}
	\centering
	\small
	\begin{tabular}{c c c}
	\toprule
	Dataset & ImageNet & WMT \\
	\midrule
	Sync warmup epochs & 30 & 4 \\
 	Discrepancy correction & $0.5$ & $0.1$ \\
	Annealing epochs & $10$ & $4$ \\
	\bottomrule	
	\end{tabular}
	\caption{PipeMare hyperparameters on the ImageNet and WMT dataset.}
	\label{tab:pipemare_hyper_large}	
\end{table}

\paragraph{Discrepancy correction for ResNet 152.}
In \Cref{sec:ablation}, we demonstrate discrepancy correction (T2) can improve the model accuracy on ResNet 50 and Transformer for CIFAR10 and IWSLT. In this section, we demonstrate that discrepancy correction can also contribute to preventing divergence for models with larger number of stages. More concretely, in \Cref{fig:resnet152}, we show that PipeMare T1 (only with learning rate rescheduling) diverge for ResNet 152 on CIFAR10 with $150$ pipeline stages. By additionally applying discrepancy correction, we observe that PipeMare converges and achieve matching test accuracy to GPipe training in a fixed number of epochs after the first learning rate drop after $80$ epochs.

\begin{figure*}
\includegraphics[width=\linewidth]{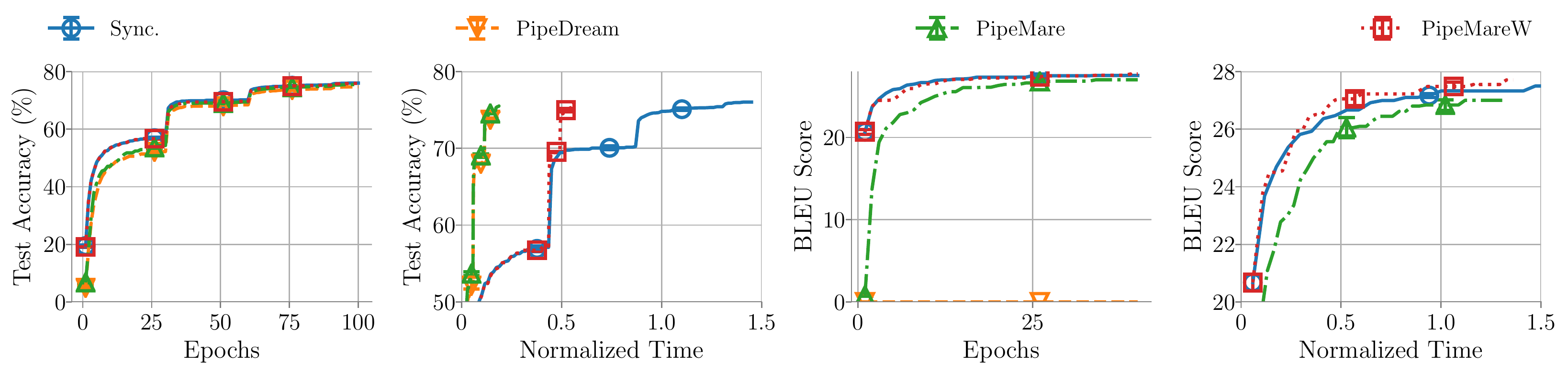}
\caption{The statistical performance and normalized time attained by different pipeline training methods on ImageNet (leftmost two figures) and WMT (rightmost two figures). We observe PipeMareW can attain higher model accuracy for both ImageNet and WMT, while being competitive to GPipe in the same number of epochs.  We also show that both PipeMare and PipeMareW achieve time-to-accuracy speedups (w.r.t. normalized time) over GPipe while PipeDream fails to converge and attains BLEU score close to 0 on Transformer model. On ImageNet PipeMare outperforms PipeDream in terms of time-to-accuracy while PipeMareW attains state-of-the-art accuracy on this task (which PipeDream and PipeMare do not).}	
\label{fig:imagenet}
\end{figure*}

\begin{figure*}
\centering
\ifarxiv
\includegraphics[width=0.4\linewidth]{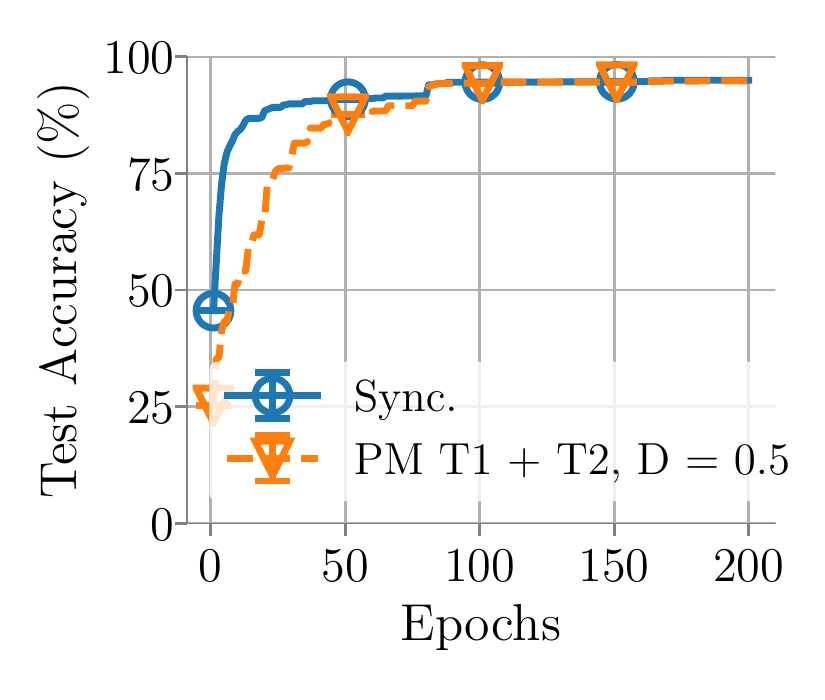}	
\else
\includegraphics[width=0.8\linewidth]{figures/resnet152.pdf}	
\fi
\caption{We observe ResNet 152 on CIFAR20 diverges when only using learning rate rescheduling (T1). Discrepancy correction is necessary to prevent divergence for ResNet 152 on CIFAR10; we observe PipeMare with discrepancy correction (T1 + T2) attains matching performance to GPipe synchronous training.}
\label{fig:resnet152}
\end{figure*}

\subsubsection{Hyperparameter sensitivity studies}
We empirically demonstrate the sensitivity of model accuracy to the three key hyperparameters in PipeMare. 

\paragraph{Sensitivity to annealing epochs.} One key hyperparameter for improving convergence using Heuristic 1 is the number of annealing epochs $K$. We further study here the sensitivity of model accuracy (loss) with respect to the number of annealing epochs in ResNet and Transformer model. As shown in Figure~\ref{fig:anneal_sens}, we observe that different model may require a different number of annealing epochs for optimal test performance. Specifically, we can see that the ResNet and Transformer model prefers small and large number of annealing epochs respectively.

\begin{figure*}
\centering
\begin{tabular}{c c}
	\includegraphics[width=0.49\linewidth]{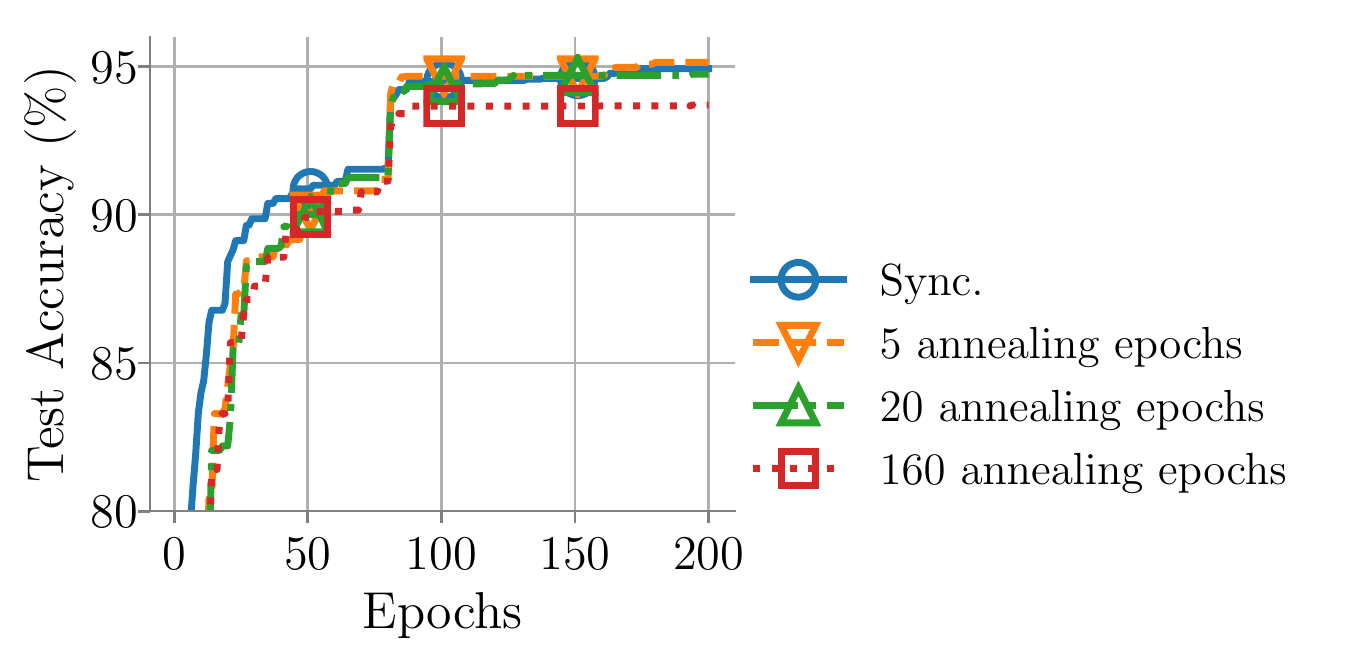} &
	\includegraphics[width=0.49\linewidth]{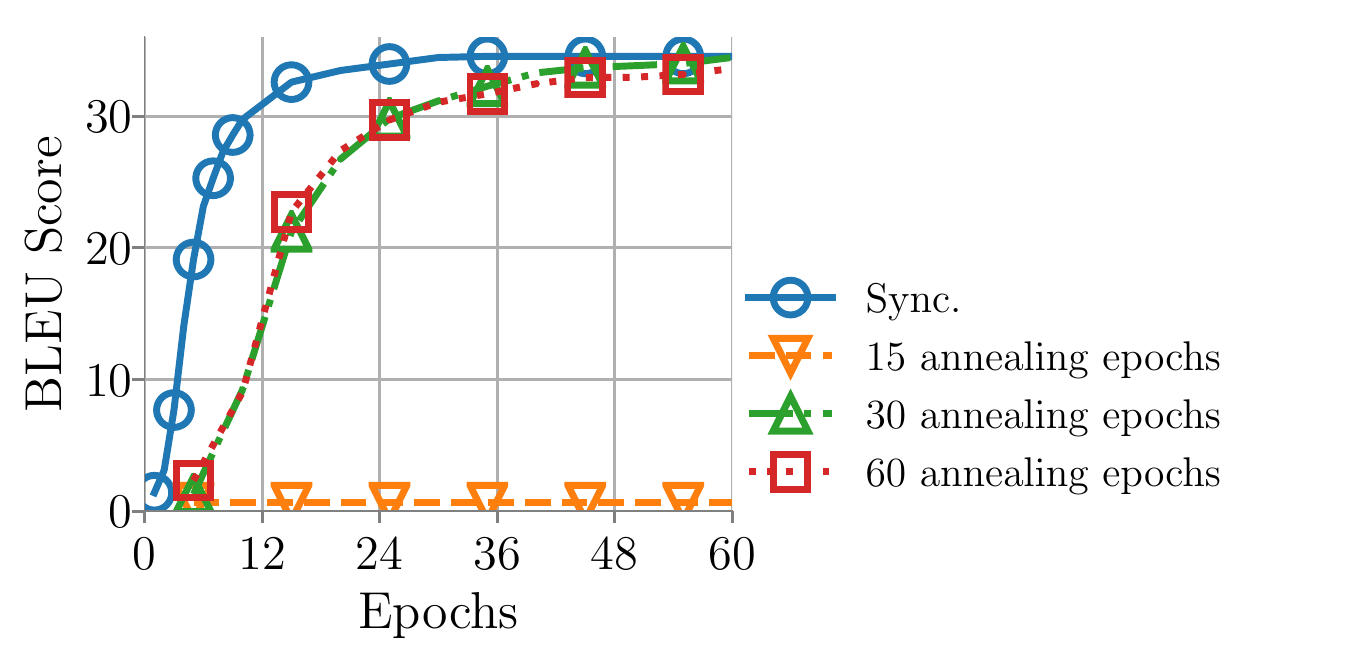}
\end{tabular}
\caption{Sensitivity of model accuracy to the number of annealing epochs. We observe that choosing the number of annealing epochs can be important to achieving model accuracy matching synchronous training.}
\label{fig:anneal_sens}	
\end{figure*}

\paragraph{Sensitivity to correction decay.} A right choice of correction decay is important to stabilizing the training and speed up the convergence. As shown in Figure~\ref{fig:decay_sens}, a proper correction decay $D$ ($\le 0.2$) can speed up the convergence of Transformer while an improper $D$ can result in even worse result than those without corrections. In other words, simply reusing the momentum buffer in SGD updates for correcting the parameters during backward could not fulfill the purpose of approximating the parameters used during forward. Therefore, an extra memory buffer and accumulation $\gamma$ is needed for each stage, which adds additional 25-33\% of memory to the total weight memory (e.g., in Adam, we have master weight, gradient, momentum, and norm, totally four copies of weight memory). 
\begin{figure*}
\centering
\begin{tabular}{c c}
	\includegraphics[width=0.49\linewidth]{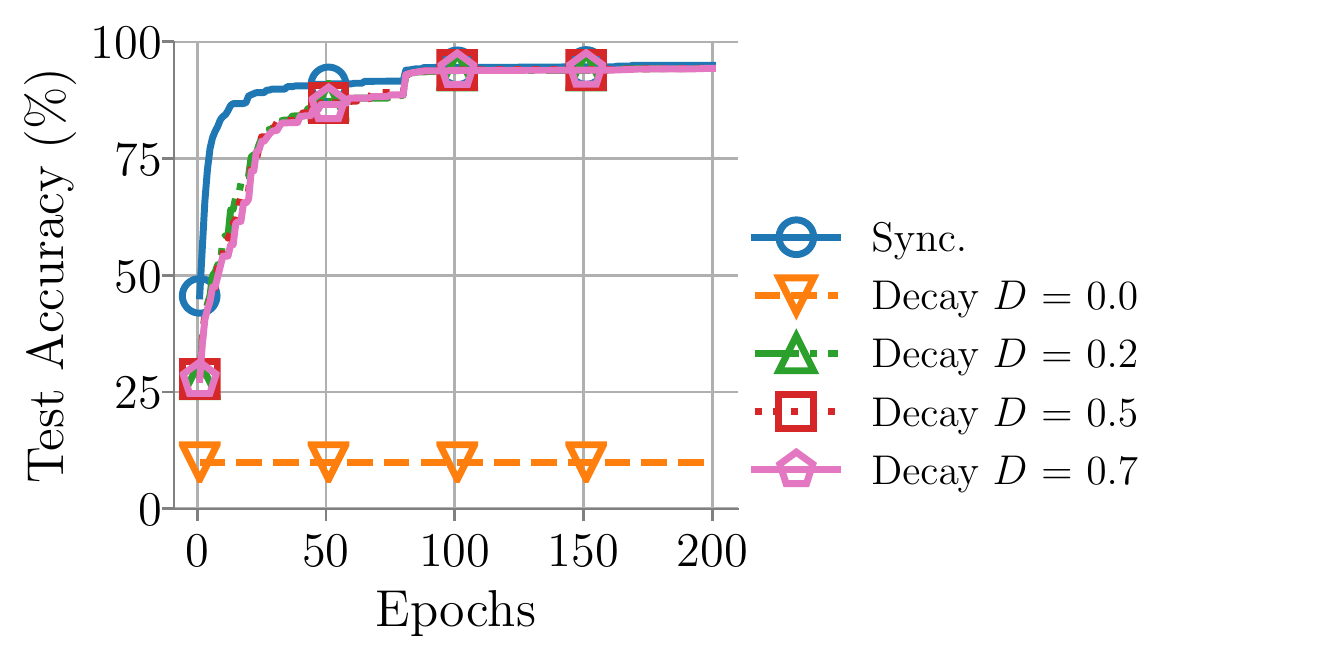} &
	\includegraphics[width=0.49\linewidth]{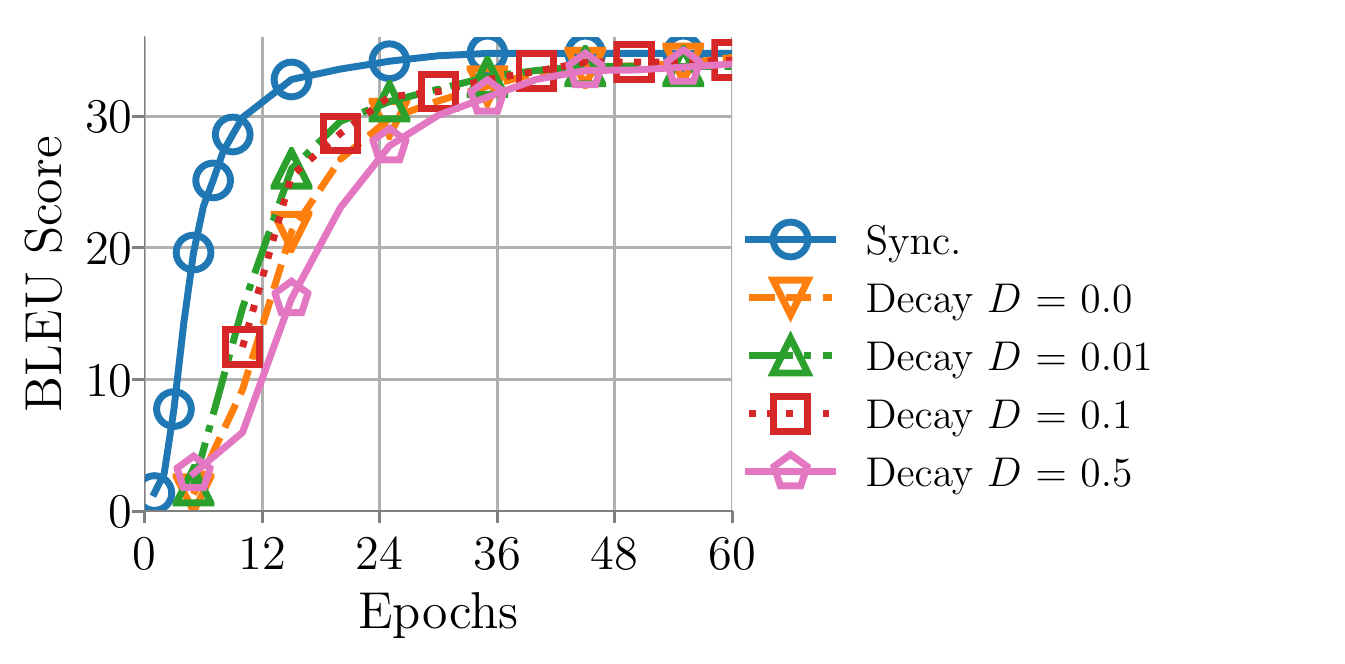}
\end{tabular}
\caption{Sensitivity of model accuracy to the decay $D$ for discrepancy correction. We notice that the decay value can have an impact on the convergence speed. For example, it requires a decay smaller than $0.5$ to converge faster than without discrepancy correction while $0.5$ can demonstrate test accuracy matching that attained by synchronous training.}
\label{fig:decay_sens}	
\end{figure*}

\begin{figure}
\ifarxiv
\centering
\includegraphics[width=0.65\linewidth]{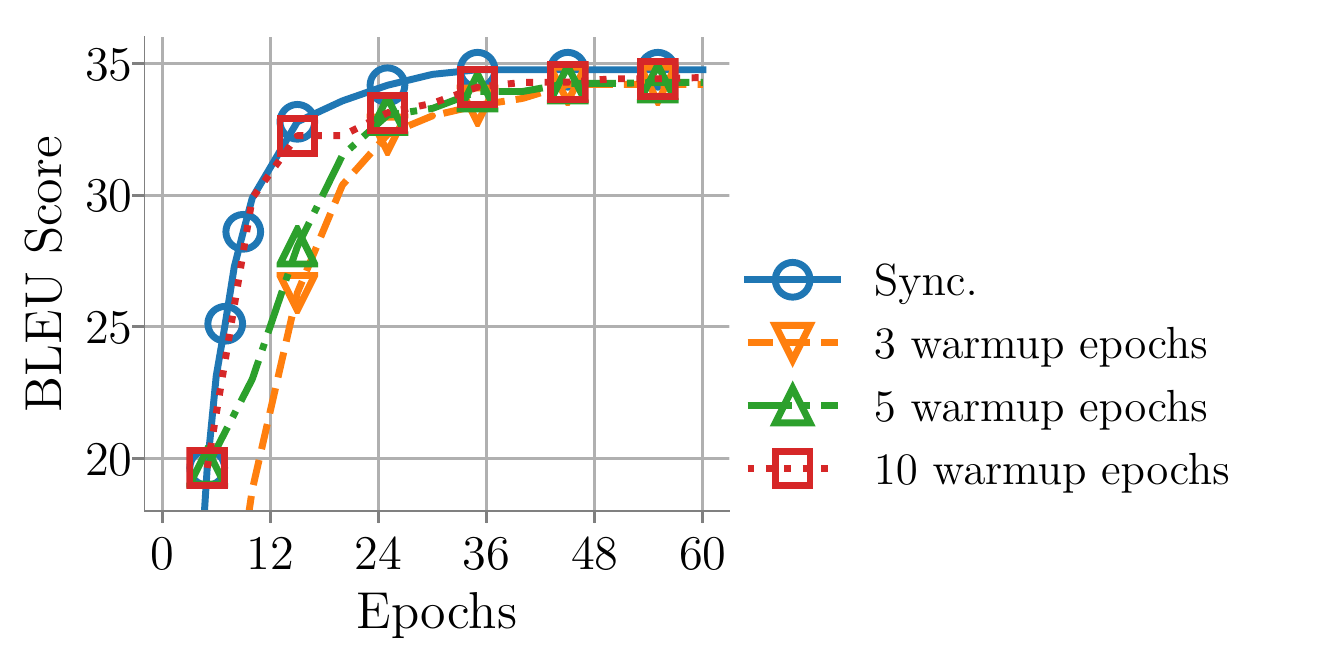}	
\else
\includegraphics[width=0.975\linewidth]{figures/exp3-fig1-iwslt-BLEU.pdf}	
\fi
\caption{Sensitivity of model accuracy to the number of synchronous warmup epochs on IWSLT. We observe a tradeoff in using warmup epochs: a large number of warmup epochs can harm the throughput but converges to a good model accuracy in fewer epochs.}
\label{fig:warmup_sensI}
\end{figure}

\paragraph{Sensitivity to warmup epochs.}  
In \Cref{fig:warmup_sensI}, we show the impact of different numbers of synchronous warmup training epochs on PipeMare's convergence. This exposes a tradeoff between statistical efficiency and hardware efficiency. More synchronous warmup epochs lower the overall pipeline utilization (and therefore throughput) but often help statistical convergence. 

\section{Statistical Efficiency and Recompute}
\label{app:recompute}
In pipeline-parallel training, to compute the gradient in a pipelined fashion, the activation memory needs to be stored for each batch of data at every pipeline stage. For fine-grained pipeline-parallel training, this can results in significantly increased memory footprint. To reduce the memory incurred by activations, the activation recomputation technique~\cite{chen2016training,huang2018gpipe} has been proposed for training deep neural networks. We first discuss the recomputation model in asynchronous pipeline-parallel training in~\cref{subsec:recompute_model}. We then demonstrate in~\cref{subsec:evaluation_res} that PipeMare with recomputation can attain matching / competitive model accuracy while using less memory footprint comparing to PipeMare without recomputation.

\subsection{Asynchronous pipeline-parallel recomputation}
\label{subsec:recompute_model}

When running with asynchronous pipeline parallelism, adding recompute adds additional delay paths to the computation, since now the backward pass depends not only on a single delayed weight value but also on delayed recomputed activations, each of which may have a different delay from the delay used for the backward-pass weights.
We can model this formally as
\[
	w_{t+1} = w_t - \alpha \nabla f_t(u_{\text{fwd},t}, u_{\text{bkwd},t}, u_{\text{recomp},t}),
\]
where now $u_{\text{recomp},t}$ denotes the delayed version of the weights used for recomputing activations in the backward pass for the $t$th gradient microbatch.
Just as for the other delayed weights, we define this in terms of a fixed delay as
\[
	\left( u_{\text{recomp},t} \right)_i = \left( w_{t - \tau_{\text{recomp},i}} \right)_i
\]
where now $\tau_{\text{recomp},i}$ is a fixed delay that affects weights used for recomputation in the $i$th layer.
Given this definition, there is a natural way we can extend the discrepancy correction of T2 to apply to these new recomputed activations.

\textbf{T2 for Recompute:}
\emph{Instead of the assignment of $u_{\text{recomp}}$ above, set
\[
	\left( u_{\text{recomp},t} \right)_i
	=
	\left( w_{t - \tau_{\text{recomp},i}} \right)_i
	-
	\left( \tau_{\text{fwd},i} - \tau_{\text{recomp},i} \right) \delta_{t, i},
\]
where $\delta_{t, i}$ is the same weight-trajectory accumulator used to correct $u_{\text{bkwd},t}$ in T2.}

\paragraph{The theory.}
To model delay discrepancy with recomputation in the quadratic model, we now assume gradient samples of the form
\begin{align*}
	\nabla f_t(u_{\text{fwd},t}, u_{\text{bkwd},t}) &= (\lambda + \Delta) \cdot w_{t-\tau_{\text{fwd}}} \\&\hspace{2em}- (\Delta - \Phi) \cdot w_{t-\tau_{\text{bkwd}}} \\&\hspace{2em}- \Phi w_{t-\tau_{\text{recomp}}} - \eta_t
\end{align*}
where $\tau_{\text{fwd}} > \tau_{\text{recomp}} > \tau_{\text{bkwd}}$ are now three different delays, and $\Phi$ is new a constant that measures the sensitivity of the gradients to discrepancy between the recomputed weights and the backward-pass weights.
As before, we can think of this as the natural first-order (linear) approximation of $\nabla f_t$; it can model any affine function of $u_{\text{fwd},t}$, $u_{\text{bkwd},t}$, and $u_{\text{recomp},t}$ that is consistent with the curvature $\lambda$ when $u_{\text{fwd},t} = u_{\text{bkwd},t}$.
If $\Phi = 0$, we recover our original no-recomputation setting, whereas for large-magnitude values of $\Phi$, even a small delay discrepancy in recomputation could cause a large effect on the gradient samples.

It is straightforward to see that the characteristic polynomial of the companion matrix here will be
\begin{align*}
	p(\omega)
	&=
	(\omega - 1) (\omega - \gamma) \omega^{\taufwd}
	\\&\hspace{-1em}+
	\alpha (\lambda + \Delta) (\omega - \gamma)
	\\&\hspace{-1em}-
	\alpha (\Delta - \Phi) \omega^{\taufwd - \taubkwd} (\omega - \gamma)
	\\&\hspace{-1em}+
	\alpha (\Delta - \Phi) \omega^{\taufwd - \taubkwd} (\taufwd - \taubkwd) (1 - \gamma) (\omega - 1)
	\\&\hspace{-1em}-
	\alpha \Phi \omega^{\taufwd - \taurecomp} (\omega - \gamma)
	\\&\hspace{-1em}+
	\alpha \Phi \omega^{\taufwd - \taurecomp} (\taufwd - \taurecomp) (1 - \gamma) (\omega - 1).
\end{align*}

\begin{figure}
	\centering
	\ifarxiv
	\includegraphics[width=0.7\linewidth]{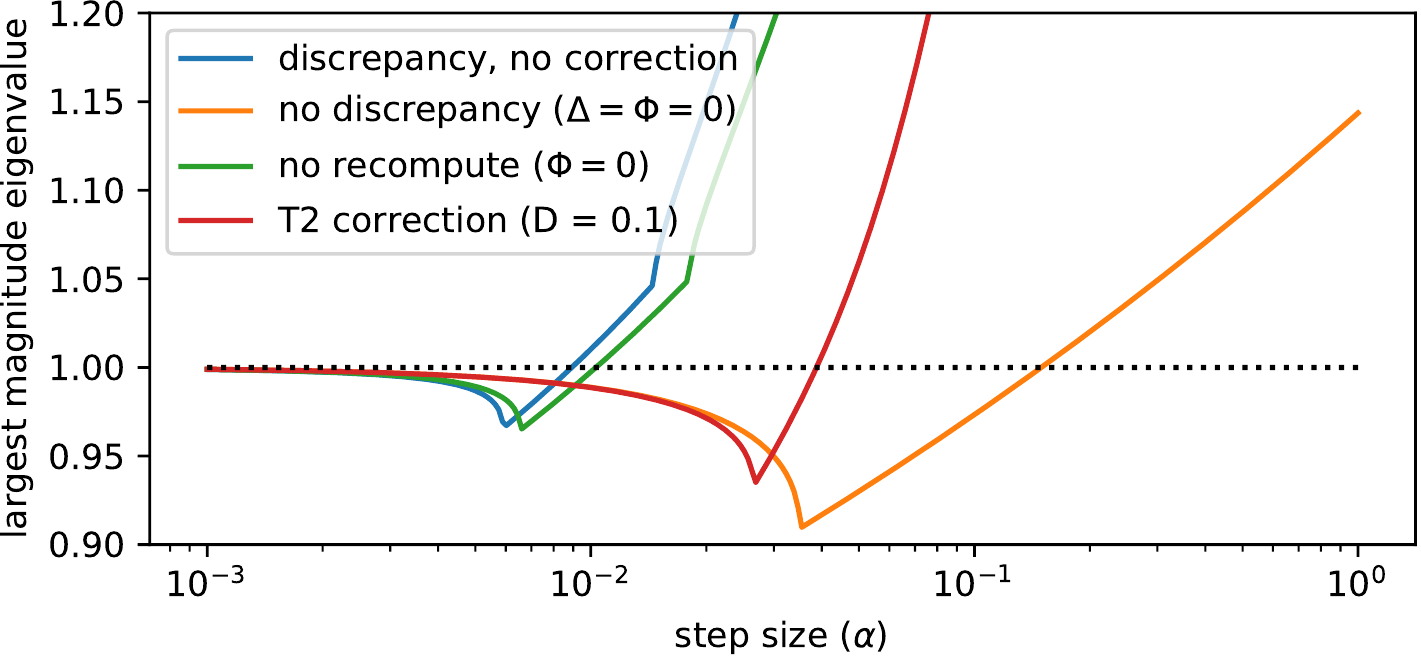}
	\else
	\includegraphics[width=0.9\linewidth]{figures/recomputecorr.pdf}
	\fi
	\caption{Effect of discrepancy correction on the quadratic model when recompute is used for a model with $\Delta = 10$, $\Phi = -5$, $\taufwd = 10$, $\taubkwd = 1$, $\taurecomp = 4$, and $\lambda = 1$. 
	Forward-backward delay discrepancy (blue) increases the largest magnitude eigenvalue of the companion matrix, just as in the no-recompute case (green).
	Discrepancy correction with $\operatorname{D} = 0.1$ (red) reduces the largest magnitude eigenvalue; this eigenvalue is closer to that attained without delay discrepancy (orange).}
	\label{fig:recomputecorrquadratic}
\end{figure}

While the complexity of this polynomial makes it difficult to prove a tight result like Lemma~\ref{lemmaRootsGD}, we can still analyze its spectral radius empirically, as we did for the non-recompute case in the main body of the paper.
Figure~\ref{fig:recomputecorrquadratic} shows this analysis.
Here we see that, just as in the case without recompute, delay discrepancy correction increases the range of step sizes over which the quadratic model is stable, and brings the behavior of the model closer to the no-delay-discrepancy case.

\begin{figure}
	\centering
	\ifarxiv
	\includegraphics[width=0.7\linewidth]{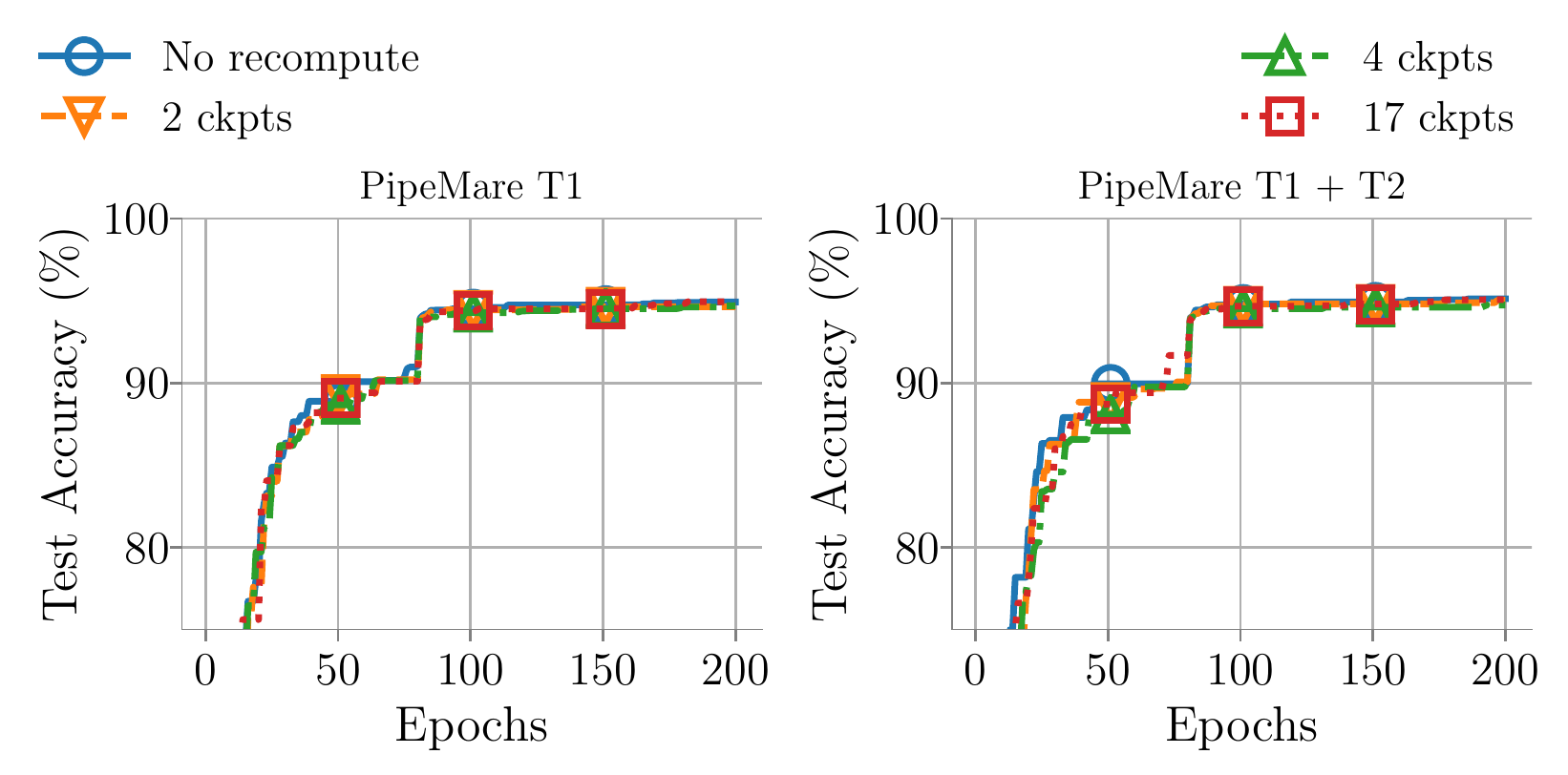}
	\else
	\includegraphics[width=0.975\linewidth]{figures/cifar_recompute_joint.pdf}
	\fi
	\caption{The statistical performance of recompute with different number gradient checkpoints on CIFAR10. We observe that with different number of gradient checkpoints, PipeMare with recompute can match the model accuracy attained by PipeMare without recompute. This indicates that recompute can significantly save the memory for storing activations with minimal influence on the attained model accuracy.}
	\label{fig:cifar_recompute}
\end{figure}

\begin{figure*}
	\centering
	\ifarxiv
	\includegraphics[width=0.9\linewidth]{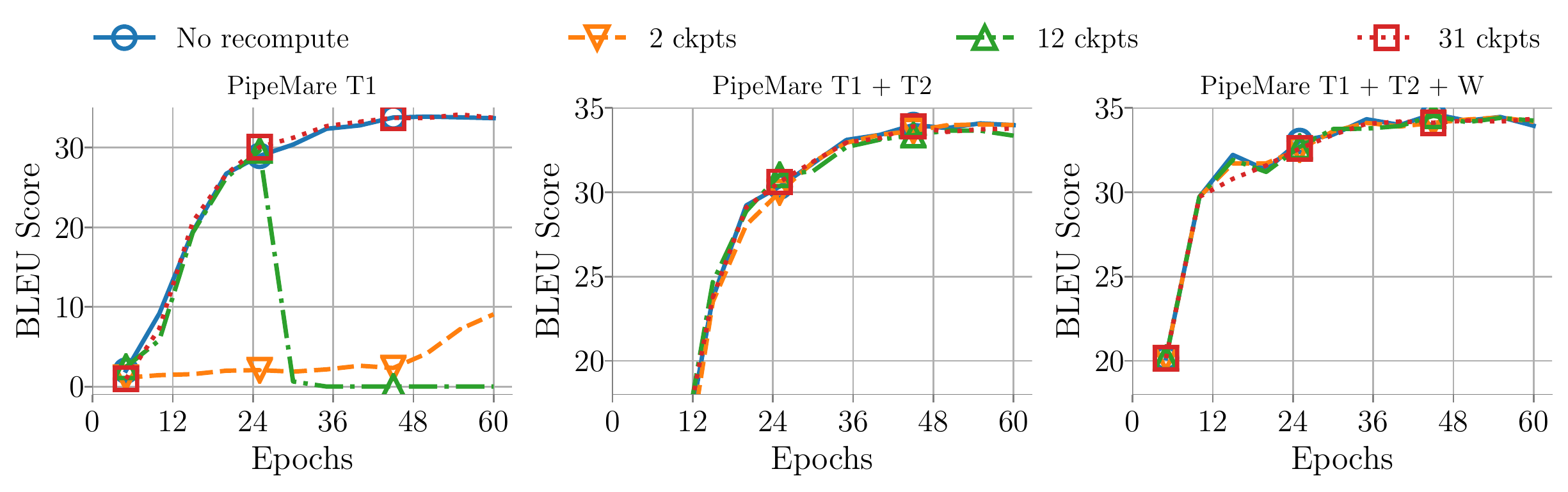}
	\else
	\includegraphics[width=0.75\linewidth]{figures/iwslt_recompute_joint.pdf}
	\fi
	\caption{The statistical performance of recompute with different number gradient checkpoints on IWSLT. We observed in the left plot that when only using learning rate rescheduling (T1) without discrepancy correction, recompute can unstable training with 2 and 12 gradient checkpoints. After applying discrepancy correction (T2) in the middle and right plots, we observe that with different number of gradient checkpoints, PipeMare with recompute can match the model accuracy attained by PipeMare without recompute. This indicates the importance of discrepancy correction to attaining stable recompute in PipeMare.}
	\label{fig:iwslt_recompute}
\end{figure*}

\subsection{Statistical efficiency and recompute}
\label{subsec:evaluation_res}
To study the impact of recompute over statistical efficiency, we study the model accuracy attained by PipeMare with recompute on CIAFR10 and IWSLT. We observe that 1) as discussed in \Cref{subsec:recompute_model}, discrepancy correction can be important to the stability of asynchronous training with recompute;  2) with different number of gradient checkpoints for recompute, PipeMare in general attains competitive or matching model accuracy to that attained by PipeMare without recompute.

\paragraph{Setup.}
In our experiment, we set gradient checkpoints at the natural module boundaries defined by skip connections. More concretely, ResNet uses residual connection between groups of convolutional layers while Transformer uses skip connections for both the multiple headed attention and feedforward modules. Following this principle, we use $\{2, 4, 17\}$ checkpoints and $\{2, 12, 31\}$ checkpoints to segment the models respectively for ResNet 50 and 12-layer Transformer model. To fully study the impact of recompute, we consider different combination of the key techniques in PipeMare. Specifically we consider T1, T1 + T2 and T1 + T2 + W for IWSLT; we consider only T1 and T1 + T2 for CIFAR 10 as warm up epochs (W) does not bring observable model accuracy improvement on CIFAR10.

\paragraph{Importance of discrepancy correction.}
In~\Cref{fig:cifar_recompute} and \Cref{fig:iwslt_recompute}, we plot the model accuracy attained by PipeMare with recompute using different number of gradient checkpoints on CIFAR10 and IWSLT. For the CIFAR10 case in \Cref{fig:cifar_recompute}, using recompute does not affect the model accuracy attained with discrepancy correction (PipeMare T1 + T2) and without discrepancy correction (PipeMare T1). However for the IWSLT case in \Cref{fig:iwslt_recompute}, without discrepancy correction (PipeMare T1), training with recompute in the asynchronous setting can be unstable. E.g. training with 2 gradient checkpoints fails to attain BLEU higher than $10.0$ while it diverge in the middle of training for 12 gradient checkpoints. When we apply the discrepancy correction in the middle and right plot of \Cref{fig:iwslt_recompute}, we can observe that PipeMare with different number of gradient checkpoints can achieve matching model accuracy to training without recompute. These observations indicate that discrepancy correction is important to the stability of training with recompute.

\paragraph{Statistical efficiency with recomputation.}
In~\Cref{fig:cifar_recompute} (right) and \Cref{fig:iwslt_recompute} (middle, right), we can see that with discrepancy correction, PipeMare asynchronous pipeline-parallel training can consistently attain strong model accuracy on both CIFAR10 and IWSLT. This further emphasizes that PipeMare can be orthogonally combined with recompute to attain strong model accuracy with significantly reduced activation memory footprint.

\section{Hogwild! asynchrony}
\label{app:hogwild_extension}

Asynchrony has been studied in various settings to accelerate the training of machine learning models~\cite{recht2011hogwild,kurth2017deep}. We ask the question of whether our proposed heuristic can go beyond the asynchronous pipeline setting with fixed gradient delay pattern, and accelerate training in classical asynchronous settings with stochastic gradient delay. 
In this section, we show that our learning rate rescheduling heuristic can also improve the model accuracy attained by training under the Hogwild!-style stochastic asynchrony~\cite{recht2011hogwild,de2015taming}. We first discuss the Hogwild!-style stochastic asynchrony model and then dive into the detailed experiment results.

\paragraph{Stochastic asynchrony model}
Hogwild!-style asynchrony considers a setting where the model is updated with a staled gradient. Specifically, the update of SGD algorithm over an objective function $f(w)$ can be written as 
\begin{equation}
	w_{t+1} = w_{t} - \alpha \nabla f_{t-\tau}(w_{t-\tau})
	\label{equ:orig_hogwild}
\end{equation}
where $w_t \in \mathbb{R}^d$ is the model iterate while $\nabla f_t(W_t)$ is the stochastic estimate of the gradient $\nabla f(w_t)$ at time step t. The $\tau_t$ here is a random variable describing the delay of the gradient; this random variable can model the delay of gradients due to the network transmission in distributed asynchronous training~\cite{kurth2017deep} or asynchronous model update in the shared memory settings~\cite{recht2011hogwild}.

We consider a variant of the original Hogwild!-style asynchrony model with different delays for different stages; this stage specific gradient delay setting is studied in our fixed delay asynchronous pipeline training in Section~\ref{sec:theory}. In particular, the model update for each stage can be characterized by 
\begin{equation}
	w_{i, t+1} = w_{i, t} - \alpha [\nabla f_{t-\tau_i}(w_{t-\tau_i})]_i
	\label{equ:variant_hogwild}	
\end{equation}
where $\tau_i$ is the stochastic gradient delay for the i-th stage and $[\nabla f_{t}(w_{t-\tau_i})]_i$ describes the gradient dimensions corresponding to the i-th stage.

In our variant of the Hogwild!-style gradient delay $\tau_i$, we sample from truncated exponential distributions following the existing study in asynchronous training~\cite{mitliagkas2016asynchrony}; this truncated exponential distribution is the maximum entropy distribution. We use the exponential distribution truncated at $\tau_{\max}$ uniformly for different stages to make sure we have bounded delay of the gradient. To model the different level of gradient delay for different stages, we use sampling distributions with different expectation values.

\begin{figure}
    \centering
    \ifarxiv
    \includegraphics[width=0.7\linewidth]{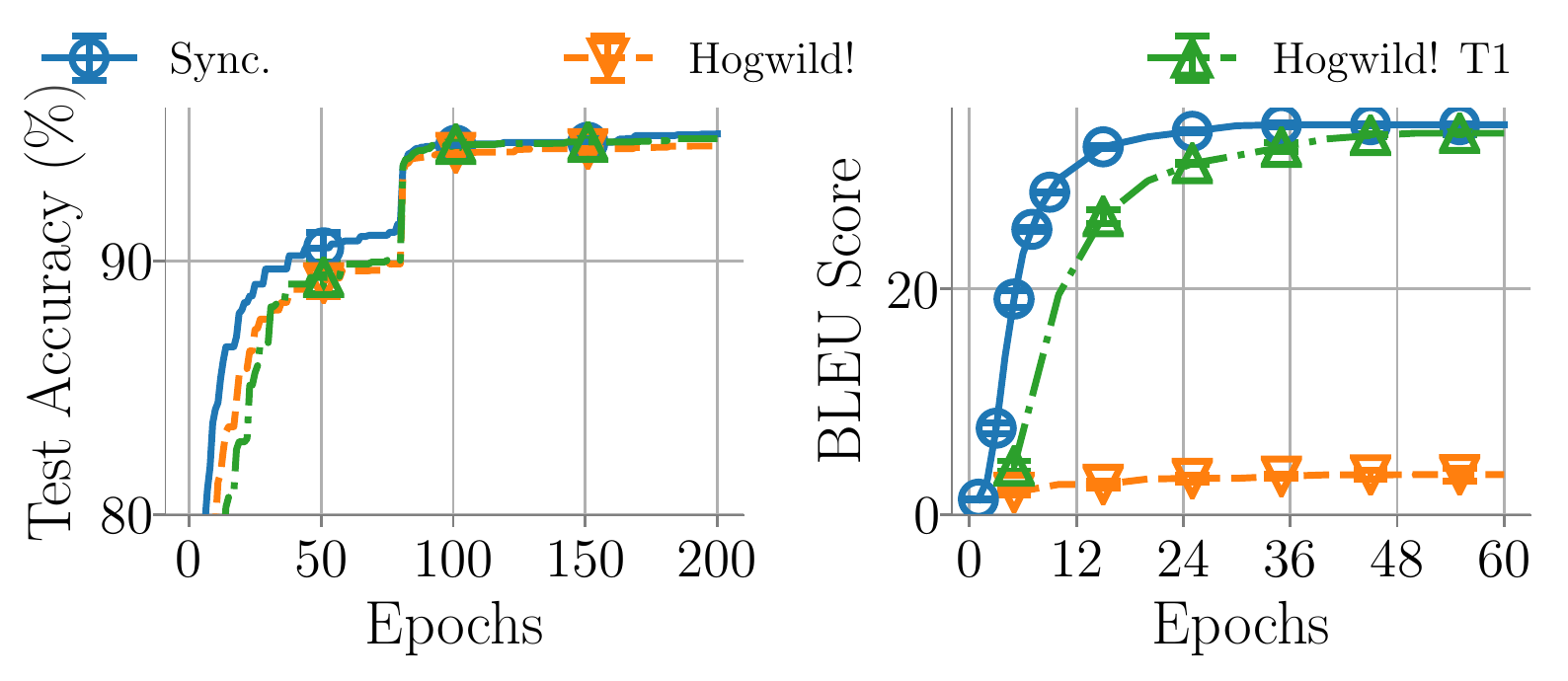}
    \else
    \includegraphics[width=\linewidth]{figures/hogwild-all-stat-efficiency.pdf}
    \fi
    \caption{Test performance of CIFAR10 ResNet (left) and IWSLT14 Transformer (right) under the Hogwild!-style asynchronous training. By using the learning rate rescheduling heuristic for asynchronous training, we can achieve test performance matching those attained by synchronous training. Comparing to asynchronous training without learning rate rescheduling, applying the rescheduling heuristic can attain better test performance after the same number of training epochs.}
    \label{fig:hogwild}
\end{figure}

\paragraph{Evaluation results}
To demonstrate that our learning rate rescheduling rule can also improve the model accuracy for training under Hogwild!-style asynchrony, we evaluate with the ResNet50 model on the CIFAR10 dataset and the Transformer model on the IWSLT14 German to English translation task. In our experiment, we use the maximal number of stages with at least one model weight in each group, which is also used in our pipeline training experiments in Section~\ref{sec:ablation}. Specifically, we use 107 and 93 stages for the ResNet and Transformer model respectively. We thus also inherit the optimal configuration for annealing epochs from the experiment on PipeMare only with learning rate rescheduling (PipeMare T1) in \Cref{sec:ablation}.
As shown in Figure~\ref{fig:hogwild}, we can observe that asynchronous training without learning rate rescheduling attains $94.51\%$ test accuracy and test BLEU score $3.6$ respectively for ResNet and Transformer. 
By applying learning rate rescheduling as described in Section~\ref{sec:lr_rescheduling}, we improve the test accuracy to $94.80\%$ and test BLEU score $33.8$ for asynchronous pipeline-parallel training for the ResNet and Transformer model. These observations indicates that our learning rate rescheduling heuristics can also improve the test performance of training under Hogwild!-style asynchrony.

\end{document}